%% file: paper_arxiv.tex
\documentclass[12pt]{article}
\usepackage[usenames]{color}
\usepackage{array}
\usepackage{graphicx}
\usepackage{amssymb}
\usepackage[ruled]{algorithm2e}
\usepackage{algpseudocode}
\usepackage{graphicx}
\usepackage{pdfpages}
\usepackage{soul}
\usepackage{amsmath}
\usepackage{amssymb}
\usepackage{amsthm}
\usepackage{mathrsfs}
\usepackage{cases}
\usepackage{setspace}
\usepackage{enumerate}
\usepackage{subfigure}
\usepackage{makecell}
\usepackage{float}
\usepackage{listings}
\usepackage{xcolor}
\usepackage{appendix}
\usepackage{pdfpages}
\usepackage{abstract}
\usepackage{longtable,booktabs}
\usepackage{rotating}
\usepackage{lscape}
\usepackage{multicol}
\usepackage{makecell}
\usepackage{natbib}
\usepackage{mathtools}
\usepackage{multirow}
\usepackage{threeparttable}
\usepackage{array}
\usepackage{fancyhdr}
\usepackage{pgfplots}
\usepackage[bookmarks=false,hypertexnames=false]{hyperref}\hypersetup{colorlinks=true, linkcolor=blue, filecolor=gray, urlcolor=blue, citecolor=blue}
\usepackage{caption}
\usepackage{authblk}

\newtheorem{Proposition}{Proposition}
\newtheorem{Assumption}{Assumption}
\newtheorem{Lemma}{Lemma}

\newtheorem{Theorem}{Theorem}

\theoremstyle{remark}
\newtheorem{Remark}{Remark}

\newcommand{\convd}{\stackrel{d}{\to}}
\newcommand{\convp}{\stackrel{p}{\to}}

\newcommand{\ic}{{i\cdot}}

\newcommand{\R}{\mathbb{R}}
\newcommand{\E}{{\mathbb{E}}}

\newcommand{\rank}{{\rm rank}} 
\newcommand{\RP}{{\rm P}}
\newcommand{\RM}{{\rm M}}
\newcommand{\RI}{{\rm I}}

\renewcommand{\hat}{\widehat}
\renewcommand{\tilde}{\widetilde}

\begin{document}

\title{Identification and Robust Inference for Multiple Treatments with Possibly Invalid Instruments\thanks{Ziwei Mei: \href{mailto:ziweimei@um.edu.mo}{ziweimei@um.edu.mo}, Faculty of Business Administration and Asia-Pacific Academy of Economics and Management, University of Macau; Qingliang Fan: \href{mailto:michaelqfan@cuhk.edu.hk}{michaelqfan@cuhk.edu.hk}, Department of Economics, The Chinese University of Hong Kong; Zijian Guo: \href{mailto:zijguo@zju.edu.cn}{zijguo@zju.edu.cn}, Center for Data Science, Zhejiang University. Mei acknowledges the partial financial support from
the Start-up Research Grant  (SRG, Project No.~SRG2025-00062-FBA) and the APAEM Seed Grant (Project No.~APAEM/SG/00004/2026)  from University of Macau.  }}

\author{Ziwei Mei$^a$, Qingliang Fan$^b$, and Zijian Guo$^c$ \\
$^a$Faculty of Business Administration, University of Macau\\ 
$^b$Department of Economics, The Chinese University of Hong Kong \\
$^c$Center for Data Science, Zhejiang University} 
\date{\today}
\maketitle
\vspace{-2em}
\begin{abstract}\footnotesize The instrumental variable (IV) method is widely used to infer causal effects in observational studies with unmeasured confounding, but invalid instruments can compromise both population identification and finite-sample inference. This paper studies linear IV models with multiple endogenous treatments and possibly invalid instruments. Identification of multiple effects is more delicate than in the single-treatment setting because a single instrument no longer identifies a single candidate effect; instead, each relevant instrument defines a hyperplane in the multidimensional effect space. For identification of multiple treatment effects, we introduce generalized plurality and majority rules which require a sufficiently large number of IVs to be valid. For inference, data-dependent instrument selection may fail to separate certain invalid IVs from valid ones, leading to undercoverage of confidence intervals when these invalid instruments are mistakenly selected as valid. We propose a sampling confidence interval for each treatment effect, which is robust to IV selection errors. We establish asymptotic coverage and parametric-rate length of our sampling confidence interval under regularity conditions and illustrate this method in Monte Carlo simulations and a Mendelian randomization application. 
\end{abstract}
\medskip
\textit{Key words}: Invalid instruments, Generalized plurality rule, Generalized majority rule, Robust inference, Mendelian Randomization
 \clearpage
 \onehalfspacing

\section{Introduction}\label{intro}
\par Unmeasured confounders are prevalent in the study of treatment effects using observational data. The instrumental variable (IV) method is one of the most popular tools to tackle the endogeneity issue in causal inference. Ideally, conditioning on the observed covariates, IVs are required to satisfy the following conditions:  
\begin{itemize}
    \item[(A1)] the IVs are relevant to the treatments; 
    \item[(A2)] the IVs do not directly affect the outcome;
    \item[(A3)] the IVs are uncorrelated with the unmeasured confounders. 
\end{itemize}
\par Condition (A1) is the IV relevance condition. Conditions (A2) and (A3), often imposed conditional on the observed covariates, ensure that treatment is the only pathway from the IVs to the outcome. IVs that satisfy both (A2) and (A3) are called ``valid IVs''. These conditions are essential for identifying causal effects. However, either condition can be violated in observational studies. When instruments violate (A2), they have direct effects on the outcome in addition to their effects through the treatments. When instruments violate (A3), they are correlated with unobserved confounders omitted from the outcome model. We define an IV as ``invalid'' if it violates either Condition (A2) or (A3). If an invalid IV is mistakenly treated as valid, causal effects are generally not correctly identified. 

Causal studies with multiple treatments are common. Examples include studies of multiple risk factors, including smoking, high cholesterol, diabetes, kidney disease, inactivity, and obesity, for cardiovascular disease \citep{burgess2015multi}, as well as education value-added models with multiple policy interventions \citep{goldsmith24contamination}. Another example is the multivariable Mendelian randomization, a widely used approach in epidemiology \citep{zhou2020mendelian} that involves multiple exposures and requires methods that can accommodate invalid genetic markers. Despite a growing literature on IV methods for a single treatment, identification and robust inference for multiple treatments with possibly invalid instruments remain less developed.  This paper addresses two gaps in this setting: identification conditions for multidimensional causal effects (Section \ref{sec: identification}) and an inference procedure that is robust to IV selection errors (Section \ref{sec: uniform inference}).   

\subsection{Causal Identification}\label{subsec: intro identification}

Causal identification is the first-order issue in IV analysis with potentially invalid instruments. A rapidly growing literature, primarily focused on the single-treatment model, has developed identification conditions for treatment effects in the presence of invalid instruments. Representative conditions include the \emph{majority rule} \citep{han2008,kang2016instrumental,windmeijer2018use} requiring that more than half of the IVs are valid, and the \emph{plurality rule} \citep{guo2018confidence,windmeijer2021confidence} requiring valid IVs to form a sufficiently large group. For settings with multiple treatments, however, these scalar rules do not directly identify the true multidimensional effect when invalid IVs are present.  

We now explain why the usual majority rule is not enough in the multiple-treatment regime. Consider a linear IV model with $p_z$ instruments and $p_d$ treatments, and assume all instruments are relevant. When $p_d=1$, each instrument identifies a scalar candidate effect: valid instruments yield the true effect, whereas invalid instruments yield biased values. If more than half of the instruments are valid, the true effect can be recovered as the majority of the instrument-specific candidate effects. When $p_d>1$, a single instrument cannot identify the full vector of treatment effects; instead, at least $p_d$ instruments are required. Geometrically, each instrument defines a $(p_d-1)$-dimensional hyperplane in the $p_d$-dimensional effect space. Thus, for $p_d>1$, an instrument-specific identified set is no longer a singleton but a continuum of candidate values. Consequently, even if a strict majority of instruments are valid, there is no immediate ``majority vote'' over scalar candidates, and the single-treatment majority or plurality logic does not deliver identification without further model restrictions.

This paper develops identification results tailored to multiple treatments. Our main population condition is the \textit{generalized plurality rule}. As illustrated above, each relevant IV defines a $(p_d-1)$-dimensional hyperplane in the $p_d$-dimensional effect space. The generalized plurality rule requires that, among all possible effect values, the true value lies on strictly more of these hyperplanes than any false value. This geometric formulation makes the identification problem transparent and complements earlier algebraic formulations of related conditions. We also establish a sufficient condition, the \textit{generalized majority rule}. This rule shows that identification is guaranteed when the number of valid IVs exceeds the lower bound in (\ref{eq: general major}). When $p_d = 1$, it reduces to the familiar majority rule requiring more than half of the relevant IVs to be valid. 

\subsection{Robust Inference} \label{intro: inference} 
In observational studies, valid inference for endogenous effects hinges on  correct selection of valid IVs. In the single-treatment regime, related literature \citep{guo2018confidence,kang2016instrumental,windmeijer2018use,windmeijer2021confidence} proposes selecting valid instruments through hard thresholding or penalized regressions. It is a natural idea to generalize these methods to accommodate multiple treatments. 

However, inference for causal effects following IV selection can be biased if invalid IVs are mistakenly classified as valid. Specifically, selection of valid IVs involves solving a system of linear equations and retaining solutions that satisfy the identification condition (e.g.~the generalized majority rule). In observational data, the coefficients in this system are estimated with finite-sample error. 
Consequently, if the violation of (A2) or (A3) by certain instruments is local to zero, standard selection methods can struggle to distinguish them from valid IVs. We refer to these invalid IVs as \textit{locally invalid IVs}, with a formal description in (\ref{eq: locally invalid}) of Section \ref{subsec: local invalid}. The inclusion of these locally invalid IVs in the selected set can lead to significant size distortion in subsequent causal inference.

To address the aforementioned size distortion, we propose a novel inference method that is robust to errors in IV selection. Our approach accounts for finite-sample estimation error by injecting randomness into solving the aforementioned linear equations and determining which IVs are valid. In particular, we perturb the solutions to the linear equations and select a valid-IV set for each perturbation. We show that, with sufficiently many perturbations, at least one selected set contains all truly valid IVs with high probability. We use each selected IV set to construct a confidence interval for the causal effect using the two-stage least squares (TSLS) estimator. We then aggregate these intervals and show that the resulting confidence set attains nominal coverage. We further prove that its length converges at the parametric rate $n^{-1/2}$. 
The resulting procedure provides inference for multiple treatment effects that is robust to IV selection error without requiring prior knowledge of which IVs are valid. We exhibit the usefulness of the method through Monte Carlo simulations (Section \ref{sec: simul}) and a Mendelian randomization study (Section \ref{sec: real data}).

\subsection{Additional Related Literature}
In the multiple-treatment setting, \citet{liang2022instrumental} proved that treatment effects are identified when more than $(p_z+p_d-1)/2$ instruments are valid, under an additional restriction on IV relevance. This result is a special case of our generalized majority rule in Section \ref{subsec: how many}. \citet{apfel2021agglomerative} also discussed identification under multiple treatments, but their identification condition requires the additional restriction that every subset of $p_d$ IVs identifies a unique candidate effect. In contrast, our generalized plurality rule does not impose this restriction. Under a single treatment, our generalized plurality rule reduces to the \textit{sparsest rule} developed in \citet{lin2024instrumental}. 

In terms of inference robust to selection error, a recent work by \citet{guo2023causal} proposed an inference procedure based on searching and sampling focusing on a single treatment. Specifically, \citet{guo2023causal} searches over candidate effects on the one-dimensional real line and collects values of $\beta$ that suggest a large proportion of valid IVs. A direct grid-search generalization to $p_d$ dimensions is computationally costly: for example, 500 grid points in each coordinate would require $500^{p_d}$ evaluations. Rather than searching over the entire effect space, our proposal searches over TSLS estimators generated by subsets of $p_d$ relevant IVs. When a valid just-identifying IV subset exists, at least one of these estimators is centered at the truth. Thus, unlike a direct extension of \citet{guo2023causal}, our method avoids a multidimensional grid search.

\par We further review other literature on invalid IVs. \citet{tchetgen2021genius} proposed the GENIUS method for robust inference in Mendelian randomization, whose identification draws from heteroskedasticity \citep{lewbel2012using}. \citet{hu2022mendelian} proposed MR-APSS to deal with invalid IVs at the summary level by leveraging genome-wide information. \citet{lin2024instrumental} and \citet{ye2024genius} examine robust inference for a single endogenous treatment with many weak and invalid instruments. Other studies on inferring a single treatment effect with possibly invalid IVs include  \citet{sun2022selective,sun2023semiparametric}, \citet{fan2024endogenous}, and \citet{dukes2025using}. Unlike these studies focusing on a single treatment, our paper considers unknown IV validity in a multiple-treatment framework. As for estimation with potentially invalid moment conditions in generalized method of moments (GMM),  \citet{andrews1999consistent}  developed consistent moment-selection procedures using information criteria. \citet{liao2013adaptive} and \citet{cheng2015select} used adaptive LASSO to select invalid moment conditions in GMM with a known set of valid moments, different from our regime that does not require the known subset of valid IVs. These studies on GMM neither derive an explicit lower bound on the number of valid IVs sufficient for identifying multiple treatment effects as provided by our generalized majority rule, nor consider robust inference under locally invalid IVs. Therefore, their primary focus differs from ours. See \citet{kang2024identification} for a more detailed review of related literature. 
\medskip 
\par \noindent \textbf{Structure}. The rest of the paper is organized as follows. Section \ref{sec: identification} studies causal identification under the regime of multiple treatments. Section \ref{sec: uniform inference} illustrates IV selection error with locally invalid IVs and proposes an inference procedure robust to such error. Section \ref{subsec: theory} establishes asymptotic theory to justify the validity of our inference procedure. Section \ref{sec: simul} displays numerical results and Section \ref{sec: real data} presents a real data application of Mendelian randomization. 
Section \ref{sec: conclusion} concludes the paper.  

\medskip
\par \noindent \textbf{Notations}. We use ``$\convp$" and ``$\convd$" to denote convergence in probability and distribution, respectively. We use $[n]$ for some natural number $n$ to denote the integer set $\{1,2,\cdots,n\}$, and use $n!$ to denote the factorial. We use $\RI_n$ to denote the $n$-dimensional identity matrix, and ${\rm \bf 1}\{\cdot\}$ to denote the indicator function.  For any  positive integers $m>n$, let $\left(\begin{array}{c}
     m  \\
     n 
\end{array}\right)$ be the binomial coefficient $\frac{m!}{(m-n)!n!}$. For a $p$-dimensional vector $x=(x_{1},x_{2},\cdots,x_{p})^{\top}$, the $L_{2}$ norm is $\left\Vert x\right\Vert _{2}=\sqrt{\sum_{j=1}^{p}x_{j}^{2}}$, the $L_{1}$ norm is $\left\Vert x\right\Vert_{1}=\sum_{j=1}^{p}\left|x_{j}\right|$,  and the maximum norm is $\|x\|_{\infty}=\max_{j\in[p]}|x_{j}|$. We use $\lambda_{\min}(A)$ and $\lambda_{\max}(A)$ to denote the minimum and maximum eigenvalues of any symmetric matrix $A$. For any  $m\times n$ matrix $A=(A_{jk})_{j\in[m],k\in[n]}$, the spectral norm is $\|A\|_2 = \sqrt{\lambda_{\max}(A^\top A)}$ and the maximum norm is $\|A\|_\infty = \max_{j\in[m],k\in[n]}|A_{jk}|$.  Matrix vectorization is denoted by ${\rm vec}(A)$. For any sets $\mathcal{A}$ and $\mathcal{B}$, $|\mathcal{A}|$ denotes the cardinality of $\mathcal{A}$, and $\mathcal{A}\backslash\mathcal{B} = \{x\in \mathcal{A}:x\notin\mathcal{B}\}$. 

\section{The Model and Effect Identification}\label{sec: identification}
For each individual $i\in \{1,2,\dots,n\}$, we observe an outcome $Y_i\in\R$, a multivariate treatment vector $D_\ic = (D_{i,1},D_{i,2},\dots,D_{i,p_d})^\top\in\R^{p_d}$, the instrumental variables $Z_\ic= (Z_{i,1},Z_{i,2},\dots,Z_{i,p_z})^\top\in\R^{p_z}$, and the covariates $X_\ic = (X_{i,1},X_{i,2},\dots,X_{i,p_x})^\top\in\R^{p_x}$. We assume that $\{Y_i,D_\ic^\top,Z_\ic^\top,X_\ic^\top\}_{i=1}^n$ are generated in an independently and identically distributed (i.i.d.) fashion. Let $Y_i^{(d,z,x)}$ be the potential outcome if the individual $i$ had treatment $d \in \R^{p_d}$, instruments $z\in \R^{p_z}$, and covariates $x\in\R^{p_x}$. In practice, we observe only one outcome, $Y_i=Y_i^{(D_\ic,Z_\ic,X_\ic)}$, for each observation. 
Throughout the paper, we assume that $p_z \geq p_d \geq 1$ and all variable dimensions $p_d$, $p_z$, and  $p_x$ are fixed. 
\par We consider the additive linear potential outcome model  \citep{small2007sensitivity,kang2016instrumental,guo2018confidence,windmeijer2018use}, and allow for multiple treatments and possibly invalid instruments. For two possible sets of values of the exposures $d$, $d^\prime$,  instruments $z$, $z^\prime$, and covariates $x$, $x^\prime$, we assume the potential outcome model
\begin{equation}\label{eq: model}
\begin{aligned}
    Y_i^{(d,z,x)} - Y_i^{(d^\prime,z^\prime,x^\prime)} &= (d-d^\prime)^\top\beta^* + (x-x^\prime)^\top \xi^*+(z-z^\prime)^\top \kappa^*,\\
    \E(Y_i^{(0,0,0)}|Z_\ic,X_\ic)&= Z_\ic^\top \eta^* + X_\ic^\top \zeta^*,
\end{aligned}
\end{equation}
where $\beta^*\in\R^{p_d}$ denotes the true causal effect, while  $\kappa^*,\eta^*\in\R^{p_z}$ and $\xi^*,\zeta^*\in\R^{p_x}$ are other unknown parameters. In particular, $\beta_j^*$, the $j$-th entry of the vector $\beta^*$, measures the population causal effect of the $j$-th treatment $D_{\cdot j}$ on the potential outcome.

The parameter $\kappa^*$ measures the direct effect of instrumental variables on the outcome, and $\kappa^*=0$ means the exclusion condition (A2) is satisfied. The parameter $\eta^*$ represents the presence of unmeasured confounding between the instruments and the outcome, and $\eta^*=0$ indicates that the condition of no unmeasured confounder (A3) is satisfied. Lastly, the parameters $\xi^*$ and $\zeta^*$ measure the effects of the covariates $X_\ic$ on the potential outcome. 
\par Define $u_i = Y_i^{(0,0,0)}-\E(Y_i^{(0,0,0)}|Z_\ic,X_\ic)$. Applying the model (\ref{eq: model}) together with the expression $Y_i^{(0,0,0)}=u_i +\E(Y_i^{(0,0,0)}|Z_\ic,X_\ic)$, we have the following main model of multiple treatments with observational data:
\begin{equation}\label{eq: DGP y}
\begin{aligned}
    Y_i = D_\ic^\top\beta^* + Z_\ic^\top\pi^* + X_\ic^\top \varphi^* + u_i, \quad \text{with}\quad 
    \E(X_\ic u_i) = 0,\quad\E(Z_\ic u_i) = 0,
\end{aligned}
\end{equation}
where $\pi^*=\kappa^* + \eta^*$, and $\varphi^*=\xi^* + \zeta^*$. We allow $D_\ic\in\mathbb{R}^{p_d}$ to be endogenous due to its correlation with the unmeasured error term $u_i\in\mathbb{R}$. The $k$-th instrumental variable $Z_{i,k}$ is valid if $\pi_k^*=0$, for $k\in[p_z]$. Since $\pi^*=\kappa^*+\eta^*$, the IV validity measured by $\pi^*$ consists of two violations:  the direct effect of IVs on the outcome measured by $\kappa^*$, and the confounding between IVs and the outcome represented by $\eta^*$.  
\par To measure the relevance between the IVs and the treatments stated by Condition (A1) in Section \ref{intro}, we further assume the following linear reduced-form model for the treatment
\begin{equation}\label{eq: DGP d}
\begin{aligned}
    D_\ic^\top =  Z_\ic^\top\Upsilon^* + X_\ic^\top \Psi^* + \varepsilon_\ic^\top\quad \text{with}\quad 
    \E(X_\ic \varepsilon_\ic^\top) = 0,\quad\E(Z_\ic \varepsilon_\ic^\top) = 0.
\end{aligned}
\end{equation} 
Model (\ref{eq: DGP d}) can be viewed as the best linear approximation of $D_\ic$ by the instruments $Z_\ic$ and covariates $X_\ic$. The \emph{relevance matrix} $\Upsilon^*=(\Upsilon^*_{k,j})_{k\in[p_z],j\in[p_d]}$ measures the relevance between the treatments and IVs. For any subset $\mathcal{A}\subset [p_z]$, we use $\Upsilon^*_{\mathcal{A}\cdot}$ to denote the submatrix of $\Upsilon^*$ consisting of the rows that correspond to the subset $\mathcal{A}$. Define the set of IVs relevant to at least one treatment  
as
\begin{equation}\label{eq: strong}
    \mathcal{S}^* = \{k\in[p_z]:\|\Upsilon^*_{ k\cdot}\|_2 > 0\},
\end{equation}
and refer to them as \textit{relevant IVs} for simplicity. Accordingly, $\mathcal{S}^{*c}$ contains the irrelevant IVs that have a true coefficient of zero in the reduced-form model (\ref{eq: DGP d}) for all the treatments. Furthermore, define the set of valid IVs as $$\mathcal{V}^* = \{k\in \mathcal{S}^*:\pi^*_k = 0\}.$$ For identification, we further impose the following \textit{full rank condition} on the valid IVs.
\begin{Assumption}\label{cond: full rank}Suppose that $\rank(\Upsilon^*_{\mathcal{V}^*\cdot}) = p_d$. 
\end{Assumption}
\par Assumption \ref{cond: full rank} imposes a full column rank on the submatrix of $\Upsilon^*$ composed of the rows indexed by the valid IV set $\mathcal{V}^*$, which means that using the true set of valid instruments enables the identification of the true effect $\beta^*$.  Without this condition, identification of the true effect may fail.  For example, identification fails when $p_d>1$ if all valid IVs are relevant to the first treatment but irrelevant to all other treatments. In this example, $\rank(\Upsilon^*_{\mathcal{V}^*\cdot}) = 1$ and Assumption \ref{cond: full rank} is violated. 
\par In the remaining parts of Section \ref{sec: identification}, we provide detailed interpretations and discussions of the generalized plurality rule for a general $p_d \geq 1$. Section \ref{subsec: geometric} formulates the generalized plurality rule for identification of multiple treatment effects with a geometric interpretation under the two-treatment case. Section \ref{subsec: how many} discusses a sufficient condition for identification called the generalized majority rule, stating that identification success is always guaranteed when the number of valid IVs exceeds a lower bound specified in (\ref{eq: general major}).

\subsection{Generalized Plurality Rule}\label{subsec: geometric}
\par We formulate the generalized plurality rule in the following and elaborate why it enables causal identification when there are possibly invalid IVs. Substituting model (\ref{eq: DGP d}) into (\ref{eq: DGP y}) yields the reduced-form model 
\begin{equation}
    \begin{aligned}\label{eq: reduced form}
    Y_i &= Z_\ic^\top\Gamma^* + X_\ic^\top\Phi^* + e_i^*,\ \ 
    D_\ic^\top = Z_\ic^\top\Upsilon^* + X_\ic^\top\Psi^* + \varepsilon_\ic^\top, 
\end{aligned}
\end{equation} 
where $e_i^* = u_i + \varepsilon_\ic^\top \beta^* \in \mathbb{R}$, $\Phi^* = \Psi^*\beta^* + \varphi^* \in \mathbb{R}^{p_x}$ and 
\begin{equation}\label{eq: Gamma}
    \Gamma^* = \Upsilon^*\beta^* + \pi^* \in \mathbb{R}^{p_z}.
\end{equation}
Note that the regressors on the right-hand side of both equations in (\ref{eq: reduced form}) only include the instruments and covariates, without the endogenous treatments. Consequently, we can consistently estimate the reduced form coefficients $\Gamma^*$ and $\Upsilon^*$ using ordinary least squares (OLS).

In the remainder of this section, we assume $\Gamma^*$ and $\Upsilon^*$ are known to simplify the discussion and emphasize the population identification conditions for $\beta^*$. We will provide a fully data-dependent procedure in Section \ref{sec: uniform inference} where we shall estimate all model parameters, namely $\Gamma^*$, $\Upsilon^*$, $\beta^*$ and $\pi^*$. 

We emphasize that the true $\beta^*$ is not identified by the system of equations in \eqref{eq: Gamma} alone, as it involves $p_d+p_z$ unknown parameters in $\beta^*$ and $\pi^*$, but provides only $p_z$ equations. The identification needs some extra conditions on the IV validity. In conventional IV model studies, all instruments are assumed to be valid with $\pi^* = 0$, and thus $\Gamma^* - \Upsilon^* \beta^* = 0 $ according to (\ref{eq: Gamma}). Identification of $\beta^*$ is therefore viewed as searching for a candidate $\beta$ satisfying
\begin{equation} \label{eq: pi beta eq system}
\pi(\beta) = 0 \quad \text{with}\quad \pi(\beta) = \Gamma^* - \Upsilon^*\beta.
\end{equation}
The $k$-th entry of $\pi(\beta)$ is given by 
\begin{equation}\label{eq: pi beta k}
    \pi_k(\beta) = \Gamma^*_k - {(\Upsilon^*_{k\cdot})^{\top}}\beta =  - \sum_{j\in[p_d]} \Upsilon^*_{k,j}(\beta_j-\beta^*_j) + \pi^*_k.
\end{equation}
Componentwise, the $k$-th IV yields the equation  $\pi_k(\beta)=0$ and identifies all solutions to this equation. For illustration, we say that the $k$-th IV \textit{votes for} the candidate $\beta$ if $\pi_k(\beta) = 0$. 

All valid IVs unanimously vote for the true $\beta^*$. By contrast, any invalid IV with $\pi^*_k\neq 0$ does not vote for the true $\beta^*$.
Practically, researchers tend to select candidate IVs using domain knowledge. We hence assume that valid IVs are sufficiently prevalent for the true $\beta^*$ to receive more votes than any other candidate value. We formulate this assumption as the following \textit{generalized plurality rule}.
\begin{Assumption}[Generalized Plurality Rule]\label{cond: plurality new}
\begin{equation}\label{eq: plurality}
    |\mathcal{V}^*| > \max_{\beta\neq\beta^*}|\mathcal{K}(\beta)|,\ \text{where } \mathcal{K}(\beta)= \left\{k\in\mathcal{S}^*:\pi_k(\beta) = 0\right\}.
\end{equation}
\end{Assumption}
By definition, $\mathcal{K}(\beta)$ collects the relevant IVs that vote for the candidate $\beta$, and thus $|\mathcal{K}(\beta)|$ is the number of relevant IVs voting for $\beta$. In \eqref{eq: plurality}, the left-hand side  of the inequality $|\mathcal{V}^*|$ is the number of truly valid IVs that vote for the true $\beta^*$. The right-hand side of the inequality in \eqref{eq: plurality} is the maximum number of relevant IVs voting for the same false candidate $\beta\neq\beta^*$. Therefore, (\ref{eq: plurality}) means that the true $\beta^*$ is the unique winner of the voting, and we can distinguish the true effect $\beta^*$ from all other $\beta\in\mathbb{R}^{p_d}$ by searching for the unique candidate that wins the largest number of votes.  

\begin{Remark} The voting does not involve irrelevant IVs since they make no difference to the voting result. Note that when an IV is irrelevant with $\Upsilon_{k\cdot}^*=0$, we have $\pi_k(\beta
) = \pi_k^*$ according to (\ref{eq: pi beta k}). Therefore, it either votes for all candidates when $\pi_k^*=0$, or votes for no candidate when   $\pi_k^*\neq 0$. For inference in practice, we will remove the irrelevant IVs in the first stage before selecting valid IVs. See Section \ref{subsec: first stage} for details.  
\end{Remark}

We now elaborate the difference between single-treatment and multiple-treatment settings. When $p_d = 1$, the relevance matrix $\Upsilon^*=(\Upsilon^*_1,\Upsilon^*_2,\cdots,\Upsilon^*_{p_z})^\top$ becomes a vector. The valid IVs with $\pi_k^*=0$ only vote for the true $\beta^*$, and the invalid IVs with the same value of $\pi_k^*/\Upsilon_k^*$ only vote for the same false candidate $\beta^*+\pi_k^*/\Upsilon_k^*$. Hence, we can easily classify the IVs into the same group if they vote for the same candidate $\beta$, and the generalized plurality rule (\ref{eq: plurality}) reduces to the plurality rule in \citet{guo2018confidence} stating that the valid IVs constitute the largest group. 

\par In sharp contrast to single-treatment models where each relevant IV votes for a singleton, under the multiple-treatment framework each relevant IV votes for all candidates on a $(p_d-1)$-dimensional hyperplane in the $p_d$-dimensional Euclidean space. According to (\ref{eq: pi beta k}), the $k$-th IV votes for all candidates in the following hyperplane in $\mathbb{R}^{p_d}$
\begin{equation}\label{eq: hyperplane}
    \mathcal{L}_k := \left\{\beta = (\beta_j)_{j\in[p_d]}:  \sum_{j\in[p_d]} \Upsilon^*_{k,j}(\beta_j -\beta^*_j )  =  \pi^*_k\right\}.
\end{equation}
Under a single treatment, $\mathcal{L}_k$ degenerates to a singleton on the real line. For bivariate treatments, $\mathcal{L}_k$ represents a straight line. For trivariate treatments, $\mathcal{L}_k$ becomes a two-dimensional plane, and so forth. 

For a clearer picture of Assumption \ref{cond: plurality new}, in Figure \ref{fig: thm} we provide two graphical examples with $p_d = 2$ and $p_z = 5$, where all IVs are relevant and thus the straight lines $\{\mathcal{L}_k\}_{k=1}^5$ specified in (\ref{eq: hyperplane}) are well defined. If  a candidate effect $\beta$ falls on a straight line, it means that the corresponding IV votes for $\beta$. The straight lines for valid instruments are sketched by blue lines, and those for invalid instruments are represented by red lines. In both panels of Figure \ref{fig: thm}, three of the five instruments are valid. 
All blue lines in Figure \ref{fig: thm} representing valid IVs go across the true effect $\beta^*$, meaning that all valid IVs vote for the true $\beta^*$. In contrast, none of the red lines representing invalid IVs goes across $\beta^*$, meaning that no invalid IVs vote for the true effect. 
\input{fig/fig_thm} 

We now discuss the validity of the generalized plurality rule in the two cases of Figure \ref{fig: thm}. Panel (a) in Figure \ref{fig: thm} gives an example where Assumption \ref{cond: plurality new} is satisfied. Among the intersection points of the straight lines, only the true effect $\beta^*$ receives three votes. All other points receive at most two votes. For example, the false candidates $\beta^{(1)}$ and $\beta^{(2)}$ receive two votes. Therefore, the true effect $\beta^*$ with three votes is the unique winner in the voting. Even without prior knowledge of instrument validity in practice (i.e., the colors of the straight lines are unknown), we can identify $\beta^*$ as the unique winner.
\par Panel (b) in Figure \ref{fig: thm} gives a counterexample where Assumption \ref{cond: plurality new} is violated. Again, we highlight three candidates here. The false candidate $\beta^{(1)}$ receives two votes, fewer than the three votes received by the true effect $\beta^*$. However, two red lines and one blue line intersect at another false candidate effect $\beta^{(2)}$, resulting in 3 votes for this candidate. The generalized plurality rule is violated so that we cannot determine whether $\beta^*$ or $\beta^{(2)}$ is the true effect, leading to identification failure.

Although Panel (b) illustrates a failure in effect identification, we emphasize that such identification failure is rare. If we fix the blue lines representing the hyperplanes $\mathcal{L}_k$ for $k \in \mathcal{V}^*$, identification failure occurs only if the intersection of two red lines (for example, the $\beta^{(2)}$ in Panel (b)) coincidentally falls on one of the blue lines. Section \ref{subsec: How often identification failure} formalizes this rarity under a random-coefficient model.

\subsubsection{How Often Does Identification Fail?}\label{subsec: How often identification failure}
\par To formulate the identification failure problem, in this Section \ref{subsec: How often identification failure} only, we assume that all nonzero coefficients in the relevance matrix $\Upsilon^*$ and the vector $\pi^*$ measuring IV validity are randomly generated from continuous distributions. We use $\mathcal{I}^* := \mathcal{S}^* \backslash \mathcal{V}^*$ to denote the index set of invalid instruments. 
\begin{Proposition}\label{prop: rare}Suppose that all entries in the vector of coefficients $({\rm vec}(\Upsilon^*_{\mathcal{S}^*\cdot})^\top,\pi_{\mathcal{I}^*}^{*\top})^\top$ are drawn from independent continuous distributions with supports on the whole $\mathbb{R}^{|\mathcal{S}^*|\times p_d+|\mathcal{I}^*|}$ space, while the subvector $\pi^*_{\mathcal{V}^*} = 0_{|\mathcal{V}^*|}$ and the true effect $\beta^*$ are deterministic. Then
\begin{equation}\label{eq: max vote pd}
    \Pr\left\{  \max_{\beta\neq\beta^*}\left|\left\{k\in\mathcal{S}^*:\pi_k(\beta) = 0\right\}\right| \leq p_d \right\} = 1. 
\end{equation}
Thus, if $|\mathcal{V}^*| > p_d$, we have 
$\Pr\left\{ \text{Assumption \ref{cond: plurality new} holds}\right\} = 1$. 
\end{Proposition}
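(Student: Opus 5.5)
The plan is to show that, almost surely, no candidate $\beta\neq\beta^*$ can lie on $p_d+1$ of the hyperplanes $\mathcal{L}_k$, $k\in\mathcal{S}^*$. Since $\mathcal{S}^*$ is finite, there are finitely many subsets $\mathcal{A}\subset\mathcal{S}^*$ with $|\mathcal{A}|=p_d+1$. It therefore suffices to fix one such $\mathcal{A}$ and show the following: with probability one there is no $\delta:=\beta-\beta^*\neq 0$ with
\[
\Upsilon^*_{\mathcal{A}\cdot}\delta=\pi^*_{\mathcal{A}},
\]
which is the system $\pi_k(\beta)=0$ for all $k\in\mathcal{A}$, by (\ref{eq: pi beta k}). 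A finite union of null events is null, so this gives (\ref{eq: max vote pd}). The second claim then follows immediately. If $|\mathcal{V}^*|\geq p_d+1$, then on this probability-one event $|\mathcal{V}^*|>p_d\geq\max_{\beta\neq\beta^*}|\mathcal{K}(\beta)|$, which is Assumption \ref{cond: plurality new}.

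The basic tool I will use is an anti-concentration lemma for determinants. Let $P$ be a polynomial in independent continuous (atomless) random variables, and suppose $P$ is affine in one variable $x$, say $P=a x+b$, where $a,b$ do not depend on $x$ and $a\neq0$ almost surely. Then $\Pr(P=0)=0$. To see this, condition on all other variables: on $\{a\neq 0\}$ the equation $P=0$ forces $x=-b/a$, a single point, which has conditional probability zero because $x$ is atomless and independent of $(a,b)$. Applying this with induction on matrix size and cofactor expansion along a row, I get that any square submatrix of $\Upsilon^*_{\mathcal{S}^*\cdot}$ is nonsingular almost surely. Indeed, its determinant is affine in the $(1,1)$ entry, with coefficient equal to a smaller minor, which is nonzero a.s. by the induction hypothesis. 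This route needs only atomless marginals, not joint absolute continuity.

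I then split into two cases according to whether $\mathcal{A}$ contains an invalid IV.

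\emph{Case 1: $\mathcal{A}\subset\mathcal{V}^*$.} Here $\pi^*_{\mathcal{A}}=0$, so a solution $\delta\neq0$ would force $\rank(\Upsilon^*_{\mathcal{A}\cdot})<p_d$. But any $p_d\times p_d$ submatrix of $\Upsilon^*_{\mathcal{A}\cdot}$ is nonsingular a.s., so this event is null.

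\emph{Case 2: $\mathcal{A}\cap\mathcal{I}^*\neq\emptyset$.} A solution $\delta$ gives $[\Upsilon^*_{\mathcal{A}\cdot},\ \pi^*_{\mathcal{A}}]\,(\delta^\top,-1)^\top=0$. Hence the $(p_d+1)\times(p_d+1)$ matrix $M_{\mathcal{A}}=[\Upsilon^*_{\mathcal{A}\cdot},\ \pi^*_{\mathcal{A}}]$ is singular. Pick $k_0\in\mathcal{A}\cap\mathcal{I}^*$ and expand $\det M_{\mathcal{A}}$ along the last column. The entries of $\pi^*_{\mathcal{A}}$ are zero on valid rows and random on invalid rows, so
\[
\det M_{\mathcal{A}}=\pm\,\pi^*_{k_0}\det\bigl(\Upsilon^*_{(\mathcal{A}\setminus\{k_0\})\cdot}\bigr)+b,
\]
where $b$ does not involve $\pi^*_{k_0}$. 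The coefficient is a $p_d\times p_d$ minor of $\Upsilon^*$, which is nonzero a.s., and $\pi^*_{k_0}$ is independent of everything else. The lemma then gives $\Pr(\det M_{\mathcal{A}}=0)=0$.

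The main obstacle is Case 2. There one must notice that the right-hand side $\pi^*_{\mathcal{A}}$ is partly deterministic, since it is zero on valid rows. The argument must therefore isolate a genuinely random invalid entry, and handle the resulting determinant through conditioning rather than a generic Lebesgue-measure statement about polynomial zero sets.
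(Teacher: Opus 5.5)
Your proof is correct. It reaches the result by a different technical route from the paper's.

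\textbf{The paper's route.} The paper separates a deterministic step from a probabilistic one. It first shows that the vote bound holds on the intersection of two kinds of events: $\mathscr{R}^*$, under which every set of at least $p_d$ relevant IVs gives a full-column-rank $\Upsilon^*_{\mathcal{H}\cdot}$; and the events $\rank(\Upsilon^*_{\mathcal{H}\cdot})<\rank(\Upsilon^*_{\mathcal{H}\cdot},\pi^*_{\mathcal{H}})$ for all $\mathcal{H}$ with $|\mathcal{H}|>p_d$ that contain an invalid IV. It then verifies these events almost surely by subspace-avoidance arguments. Columns are added one at a time, each avoiding the span of its predecessors. For subsets mixing valid and invalid IVs, it restricts to the null space $\mathcal{X}_0$ of the valid rows and uses rank--nullity to show that $\pi^*_{\mathcal{H}\cap\mathcal{V}^{*c}}$ has more coordinates than the dimension of its target space $\mathcal{W}_0$.

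\textbf{Your route.} You reduce, by monotonicity, to subsets of size exactly $p_d+1$. This makes $[\Upsilon^*_{\mathcal{A}\cdot},\pi^*_{\mathcal{A}}]$ square, so inconsistency becomes the nonvanishing of a single determinant. Expanding along the last column isolates one invalid entry $\pi^*_{k_0}$ whose coefficient is a $p_d\times p_d$ minor. A single affine-in-one-variable conditioning lemma then handles this step and also, via cofactor induction, the generic nonsingularity of minors.

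\textbf{What each buys.} Your approach gains economy: there is no case split on how $|\mathcal{H}\cap\mathcal{V}^*|$ compares with $p_d$, and no dimension bookkeeping for $\mathcal{X}_0$ and $\mathcal{W}_0$. It also states explicitly that only independence and atomless marginals are used. The paper's approach keeps the linear algebra and the probability cleanly apart, and phrases genericity geometrically in line with the hyperplane-voting picture. Its ``a continuous random vector misses a lower-dimensional subspace'' steps, when spelled out, rest on the same one-variable conditioning you use. So your remark that atomless marginals suffice applies equally to the paper's proof.
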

\par Equation (\ref{eq: max vote pd}) means that with probability one, all false candidate effects receive no more than $p_d$ votes. Therefore, when the true model is overidentified with $|\mathcal{V}^*| > p_d$, the true $\beta^*$ is the unique winner in the voting with probability one. Hence, Assumption \ref{cond: plurality new} holds with probability one and identification failure in population is a zero-probability event under the random coefficient framework. 

When coefficients are nonrandom, how many valid IVs are sufficient for identifying multiple treatment effects? The following Section \ref{subsec: how many} gives a confirmative answer to this question with a generalized majority rule.

\subsection{Generalized Majority Rule}\label{subsec: how many} 
\par Under multiple treatments, the \textit{majority rule} \citep{kang2016instrumental}
\begin{equation} \label{eq: literal major}
    |\mathcal{V}^*| > |\mathcal{S}^*| / 2
\end{equation} is not sufficient for identification. Figure \ref{fig: thm}(b) provides a counterexample: three out of five instruments are valid and thus the majority rule (\ref{eq: literal major}) holds, but the identification of $\beta^*$ fails. 
\par We now discuss the sufficient condition for effect identification under multiple treatments. A natural sufficient condition is that $|\mathcal{V}^*|$ is larger than the maximum number of votes a false candidate $\beta$ can receive. The complexity rises from the fact that both valid IVs and invalid ones can vote for false candidates. In the worst case, all $(|\mathcal{S}^*| - |\mathcal{V}^*|)$ relevant but invalid instruments vote for the same false candidate $\beta \neq \beta^*$. To quantify valid IVs that vote for the same false candidate, let $\mathscr{D} = \{\mathcal{H}\subset \mathcal{V}^*: {\rm rank}(\Upsilon^*_{\mathcal{H}\cdot}) < p_d\}$ collect the valid IV subsets with a rank-deficient IV relevance matrix. Furthermore, define the integer $h_0$ as
\begin{equation}\label{def: h}
    h_0 := 1 + \max_{ \mathcal{H} \in \mathscr{D} } |\mathcal{H}|.
\end{equation} 
By definition, $h_0-1$ is the maximum cardinality of a subset of valid IVs whose relevance submatrix is rank deficient. Thus, the relevance submatrix associated with any $h_0$ valid IVs has full column rank. Geometrically, the corresponding hyperplanes in \eqref{eq: hyperplane} have a unique common intersection. Equivalently, any $h_0$ valid IVs can only jointly vote for one candidate. This unique  candidate must be the true $\beta^*$ because all valid IVs vote for the true value. Consequently, no false candidate can receive votes from $h_0$ or more valid IVs; that is, it can receive at most $h_0-1$ votes from valid IVs.

Therefore, a sufficient condition for the true $\beta^*$ to receive strictly more votes than any false candidate is
\begin{equation}\label{eq: max i + max v}
    \underbrace{|\mathcal{V}^*|}_{\text{votes for the true $\beta^*$}}
    >
    \underbrace{|\mathcal{S}^*|-|\mathcal{V}^*|}_{
        \substack{\text{maximum possible votes for $\beta\neq\beta^*$}\\
        \text{from invalid IVs}}
    }
    +
    \underbrace{h_0-1}_{
        \substack{\text{maximum possible votes for $\beta\neq\beta^*$}\\
        \text{from valid IVs}}
    }.
\end{equation}
Rearranging \eqref{eq: max i + max v} yields the following generalized majority rule.

\begin{Theorem}[Generalized Majority Rule]\label{thm: plur}
Suppose that Assumption \ref{cond: full rank} holds. Then Assumption \ref{cond: plurality new} holds if
\begin{equation}\label{eq: general major}
    |\mathcal{V}^*|
    >
    \frac{|\mathcal{S}^*|+h_0-1}{2}.
\end{equation}
\end{Theorem}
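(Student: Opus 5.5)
The plan is to fix an arbitrary false candidate $\beta\neq\beta^*$ and bound $|\mathcal{K}(\beta)|$ by splitting the voters into valid and invalid IVs:
\[
\mathcal{K}(\beta)=\bigl(\mathcal{K}(\beta)\cap\mathcal{V}^*\bigr)\cup\bigl(\mathcal{K}(\beta)\cap\mathcal{I}^*\bigr),
\]
where $\mathcal{I}^*=\mathcal{S}^*\backslash\mathcal{V}^*$. The invalid part is trivially bounded by $|\mathcal{K}(\beta)\cap\mathcal{I}^*|\le|\mathcal{S}^*|-|\mathcal{V}^*|$. The substantive step is to show $|\mathcal{K}(\beta)\cap\mathcal{V}^*|\le h_0-1$.

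For the valid part, let $\mathcal{H}=\mathcal{K}(\beta)\cap\mathcal{V}^*$. For each $k\in\mathcal{H}$ we have $\pi_k^*=0$, so by (\ref{eq: pi beta k}) the condition $\pi_k(\beta)=0$ reads $(\Upsilon^*_{k\cdot})^\top(\beta-\beta^*)=0$. Stacking these equations gives $\Upsilon^*_{\mathcal{H}\cdot}(\beta-\beta^*)=0$ with $\beta-\beta^*\neq0$. Hence $\Upsilon^*_{\mathcal{H}\cdot}$ has a nontrivial null vector, so $\rank(\Upsilon^*_{\mathcal{H}\cdot})<p_d$. Therefore $\mathcal{H}\in\mathscr{D}$, and by definition (\ref{def: h}) we get $|\mathcal{H}|\le h_0-1$.

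Two bookkeeping points need care. First, $\mathscr{D}$ must be nonempty for $h_0$ to be well defined; the empty set works, with the convention that $\Upsilon^*_{\emptyset\cdot}$ has rank $0<p_d$, giving $h_0\ge1$. This convention also covers the case $\mathcal{H}=\emptyset$. Second, Assumption \ref{cond: full rank} guarantees $\mathcal{V}^*\notin\mathscr{D}$, so $h_0-1<|\mathcal{V}^*|$ and $h_0\le|\mathcal{V}^*|$. This inequality is not strictly needed for the bound itself, but it confirms the condition is coherent and that $\beta^*$ is the unique common point of the valid hyperplanes.

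Combining the two bounds gives, for every $\beta\neq\beta^*$,
\[
|\mathcal{K}(\beta)|\le |\mathcal{S}^*|-|\mathcal{V}^*|+h_0-1.
\]
Condition (\ref{eq: general major}) rearranges to $|\mathcal{V}^*|>|\mathcal{S}^*|-|\mathcal{V}^*|+h_0-1$, which is exactly (\ref{eq: max i + max v}). The bound is uniform over $\beta\neq\beta^*$, and the vote counts are integers bounded by $|\mathcal{S}^*|$, so the maximum in (\ref{eq: plurality}) is attained. Hence $|\mathcal{V}^*|>\max_{\beta\neq\beta^*}|\mathcal{K}(\beta)|$, which is Assumption \ref{cond: plurality new}.

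Finally, $\mathcal{K}(\beta^*)\supseteq\mathcal{V}^*$, so $\beta^*$ indeed attains $|\mathcal{V}^*|$ votes. The only mildly delicate step is the rank-deficiency argument for $\mathcal{H}$, together with the empty-set convention in $\mathscr{D}$; everything else is direct counting.
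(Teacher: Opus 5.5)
Your proposal is correct and follows the paper's argument: you bound the votes from invalid IVs by $|\mathcal{S}^*|-|\mathcal{V}^*|$, then show that the valid voters for a false candidate form a rank-deficient subset, so there are at most $h_0-1$ of them. The only difference is cosmetic. The paper argues by contradiction and passes to a subset of size exactly $h_0$, whereas you note directly that $\mathcal{K}(\beta)\cap\mathcal{V}^*\in\mathscr{D}$, which is slightly cleaner.
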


The integer $h_0$ captures the rank structure of the relevance matrix and plays a crucial role in the generalized majority rule. A larger $h_0$ reflects a weaker rank structure, but correspondingly requires more valid IVs in \eqref{eq: general major}. By construction, we have 
\[
    p_d\leq h_0\leq|\mathcal{V}^*|,
\]
because at least $p_d$ rows are required for a $p_d$-column matrix to be full-column-rank, whereas the relevance matrix associated with the full set of valid IVs has rank $p_d$ by Assumption \ref{cond: full rank}. The least restrictive case of \eqref{eq: general major} occurs when $h_0$ attains its smallest possible value  $p_d$. In this case, the generalized majority rule reduces to
\begin{equation}\label{eq: general major pd}
    |\mathcal{V}^*|
    >
    \frac{|\mathcal{S}^*|+p_d-1}{2}.
\end{equation}
When all IVs are relevant, so that $|\mathcal{S}^*|=p_z$, \eqref{eq: general major pd} becomes $
    |\mathcal{V}^*|
    >
    \frac{p_z+p_d-1}{2},$ 
which coincides with the condition developed by \citet{liang2022instrumental}. In particular, when $p_d=1$, every relevant IV individually yields a rank-one relevance matrix, so that $h_0=1$. The generalized majority rule then reduces to the conventional majority rule $
    |\mathcal{V}^*|> |\mathcal{S}^*|/2.$


\section{Robust Inference 
Accounting for IV Selection Error}\label{sec: uniform inference}

\par Section \ref{sec: identification} has established the identification conditions assuming the parameters in the reduced-form model (\ref{eq: reduced form}) are given. We now switch our focus to statistical  inference for treatment effects $\beta^*$ with observational data, where the unknown coefficients $\Gamma^*$, $\Upsilon^*$, $\Phi^*$, and $\Psi^*$ in (\ref{eq: reduced form}) are estimated.  In Section \ref{subsec: first stage}, we devise a data-dependent way to select the relevant IVs, and the subsets comprising $p_d$ relevant IVs that can identify a unique value of $\beta$. Following the tradition of IV models, we refer to these subsets as \textit{just-identifying IV subsets}. Section \ref{subsec: TSHT} states a valid IV selection procedure with hard thresholding. Section \ref{subsec: local invalid} illustrates that in practice, certain invalid IVs cannot be easily separated from valid ones since their violations of validity, measured by $\pi^*$, are not sufficiently different from zero in finite samples. We refer to these IVs as \textit{locally invalid IVs} with a formal discussion in Section \ref{subsec: local invalid}.  Therefore, IV selection methods may mistakenly include these invalid IVs as valid ones, causing a size distortion in the inference for the treatment effects using the selected IVs. This issue motivates us to provide a robust inference procedure of the multiple treatment effects allowing for locally invalid instruments, with the full details provided in Section \ref{subsec: uniform} that focuses on the use of generalized majority rule. Section \ref{subsec: not use plurality} compares the use of generalized plurality rule and generalized majority rule.

We fix some notations before diving into the procedures. We collect the i.i.d.\ observational data introduced at the beginning of Section \ref{sec: identification} into matrices by $Y=(Y_1,\cdots,Y_n)^\top \in \mathbb{R}^{n}$, $D = (D_{1\cdot},\cdots,D_{n\cdot})^\top \in \mathbb{R}^{n\times p_d}$, $X = (X_{1\cdot},\cdots,X_{n\cdot})^\top \in \mathbb{R}^{n\times p_x}$ and $Z = (Z_{1\cdot},\cdots,Z_{n\cdot})^\top \in \mathbb{R}^{n\times p_z}$. Define the collection of IVs and  covariates as $W_\ic=(
Z_\ic^\top,X_\ic^\top)^\top$ and $W = (W_{1\cdot},\cdots,W_{n\cdot})^\top \in \mathbb{R}^{n\times p}$, where $p = p_x + p_z$. Furthermore, we define the projection matrix for any full-column-rank  matrix $A$ with $n$ rows as $\RP_A = A(A^\top A)^{-1} A^{\top}$, and $\RM_A = \RI_n - \RP_A$.

\par 
To prepare for the inference of treatment effect, we define the TSLS estimator using an arbitrary subset of IVs indexed by $\mathcal{H}$ as
\begin{equation}\label{eq: def generic TSLS}
\hat\beta^{(\mathcal{H})} = \left[\hat\theta^{(\mathcal{H})}\right]_{1:p_d}\text{ where }\hat\theta^{(\mathcal{H})} = (D^{(\mathcal{H})\top} \RP_W D^{(\mathcal{H})})^{-1}D^{(\mathcal{H})\top} \RP_W Y. 
\end{equation}
In (\ref{eq: def generic TSLS}), $D^{(\mathcal{H})} = (D,Z_{\cdot [p_z]\backslash\mathcal{H}},X)$ collects the treatments, the IVs not in $\mathcal{H}$, and the covariates. Equivalently, $\hat\beta^{(\mathcal{H})}$ is the estimated coefficients of $D$ in the second-stage  regression of $Y$ on $(\hat D,Z_{\cdot[p_z]\backslash\mathcal{H}},X)$, where $\hat D = \RP_W D$ is the fitted value in the first-stage regression.  

For simplicity, we will focus on the inference for the first treatment effect $\beta_1^*$. The same procedure applies to the other coordinates. We relegate the analytical formula of standard error of $\hat\beta^{(\mathcal{H})}_1$, denoted as $\hat{\rm SE}(\hat\beta_1^{(\mathcal{H})})$, to (A.1) in the supplementary Section A. If the instruments in $\mathcal{H}$ are valid, the $100(1-\alpha)\%$ TSLS confidence interval for $\beta^*_1$ is   
\begin{equation}\label{eq: def generic CI}
    {\rm CI}(\mathcal{H},\alpha) = \left[\hat\beta_1^{(\mathcal{H})}-z_{1-\alpha/2}\cdot \hat{\rm SE}(\hat\beta_1^{(\mathcal{H})}), \hat\beta_1^{(\mathcal{H})} + z_{1-\alpha/2}\cdot \hat{\rm SE}(\hat\beta_1^{(\mathcal{H})})\right]
\end{equation}
where $z_{1-\alpha/2}$ is the $(1-\alpha/2)$-th quantile of the standard normal distribution. We then move forward to the procedure.
 
\subsection{First-Stage Hard Thresholding}\label{subsec: first stage}
\par We first estimate the IV relevance matrix and collect the IVs that are significantly relevant to at least one of the treatments. We then select the just-identifying IV subsets based on the selected relevant IVs.
\par To select relevant IVs, we estimate the reduced form coefficients $\Gamma^*$ and $\Upsilon^*$ in (\ref{eq: reduced form}) by ordinary least squares (OLS), given as 
\begin{equation}\label{eq: OLS reduce}
    \hat\Gamma = (Z^\top \RM_X  Z)^{-1}Z^\top \RM_X Y,\text{ and }\hat\Upsilon = (Z^\top \RM_X Z)^{-1}Z^\top\RM_X D. 
\end{equation}
We use $\hat\Upsilon$ to select relevant IVs, and  $\hat\Gamma$ to select valid IVs later on. We use $\hat{\rm SE}(\hat \Upsilon_{k,j}) $ to denote the standard error of $\hat \Upsilon_{k,j}$ and relegate its analytical formula to (A.2) in the supplementary Section A. With the estimator of relevance matrix $\hat\Upsilon$, we define the estimated relevant IV set as
\begin{equation}\label{eq: est Strong IV}
   \hat{\mathcal{S}} = \left\{k\in[p_z]: \max_{j\in[p_d]} \left|\hat \Upsilon_{k,j}\right|/ \hat{\rm SE}(\hat \Upsilon_{k,j}) \geq \sqrt{ \log n  } \right\}, 
\end{equation}
which collects the IVs relevant to at least one of the treatments. The diverging threshold $\sqrt{\log n}$ has two purposes. First, it adjusts for multiplicity of evaluating the relevance of multiple IVs to multiple treatments. Second, it excludes weakly relevant IVs. 

Here we discuss the impact of weak IVs. We focus on the setting when we have a sufficient number of relevant IVs that strongly identify $\beta^*$, differentiating from the classical weak IV setting where all IVs as a whole are weak \citep{staiger1997instrumental}. Under our setting, the existence of weak IVs is not an essential concern.  According to the large-sample theory in \citet{staiger1997instrumental}, TSLS is asymptotically biased if the used IVs have local-to-zero relevance to the treatments  with $\|\Upsilon^*_{k\cdot}\|_2 = O(1/\sqrt{n})$. Under regularity conditions, the OLS estimator $\hat\Upsilon$ is $\sqrt{n}$-consistent, and therefore (\ref{eq: est Strong IV}) excludes these weak IVs with $O(1/\sqrt{n})$ coefficients. Consequently, in the presence of sufficient relevant IVs that ensure strong identification, weak IVs are excluded by our first-stage selection and therefore do not affect the subsequent inference. Consequently, in our setting, the weak IV issue is substantially less concerning than the locally invalid IV issue discussed in Section \ref{subsec: local invalid}. We return to this point in Remark \ref{rem: weak IV}.

Using the selected relevant IVs, we construct  $\binom{|\hat{\mathcal{S}}|}{p_d}$ subsets comprising $p_d$ relevant IVs. Note that not every subset comprising $p_d$ relevant IVs identifies a unique treatment-effect vector.  
Let $\mathcal{H}$ denote a subset comprising $p_d$ relevant IVs. Even if all IVs in $\mathcal{H}$ are relevant to at least one treatment, this subset fails to identify a unique treatment-effect vector when $\Upsilon^*_{\mathcal{H}\cdot}$ is rank deficient, that is, when $\rank(\Upsilon^*_{\mathcal{H}\cdot}) < p_d$. One example is when  $p_d = 2$ and all IVs in $\mathcal{H}$ are relevant to the first treatment but irrelevant to the second, so the second column of $\Upsilon^*_{\mathcal{H}\cdot}$ is zero. Another example of rank deficiency is when two columns of $\Upsilon^*_{\mathcal{H}\cdot}$ are proportional. This is essentially different from the single-treatment case with $p_d = 1$, where a nonzero vector $\Upsilon^*_{\mathcal{H}}$ always has full column rank.

In the following, we rule out the subsets comprising $p_d$ relevant IVs suffering from rank deficiency. A classical test statistic for this rank deficiency in the multivariate linear instrumental variable model dates back to \citet{cragg1993testing}.
For any subset $\mathcal{H}$ comprising $p_d$ estimated relevant IVs in $\hat{\mathcal{S}}$, we use ${\rm CD}^{(\mathcal{H})}$ to denote the Cragg and Donald (CD) test statistic, given as 
\begin{equation}\label{eq: CD test}
\begin{aligned}
{\rm CD}^{(\mathcal{H})} =  \lambda_{\min}\left[\hat\Sigma_\varepsilon^{-1/2}\hat\Upsilon_{\mathcal{H}\cdot}^\top \left( Z_{\cdot \mathcal{H}}^\top \RM_{W^{(\mathcal{H})}} Z_{\cdot \mathcal{H}} \right)\hat\Upsilon_{\mathcal{H}\cdot}  \hat\Sigma_\varepsilon^{-1/2}\right],\text{ where }\hat\Sigma_\varepsilon = n^{-1}\hat\varepsilon^\top\hat\varepsilon, 
\end{aligned}
\end{equation} $\hat\varepsilon = \RM_W D$ denotes the OLS residual for (\ref{eq: DGP d}), and $W^{(\mathcal{H})}=(X,Z_{\cdot [p_z]\backslash\mathcal{H}})$ collects the covariates and the instruments not in $\mathcal{H}$. Intuitively, the CD statistic measures the minimum squared singular value of $\Upsilon_{\mathcal{H}\cdot}^*$, where the matrices  $\left( Z_{\cdot \mathcal{H}}^\top \RM_{W^{(\mathcal{H})}} Z_{\cdot \mathcal{H}} \right)$ and $\hat\Sigma_\varepsilon^{-1/2}$ rescale the statistic accounting for the variances of IVs and the error term. 
 
 The CD statistic has an $O_p(1)$ order when $\Upsilon^*_{\mathcal{H}\cdot}$ is rank-deficient; otherwise it diverges to infinity. A small CD statistic indicates the existence of zero singular values in $\Upsilon_{\mathcal{H}\cdot}^*$, thereby suggesting rank deficiency. The CD statistic is analogous to the concentration parameter under a single treatment that measures the IV strength. We store the subsets comprising $p_d$ relevant IVs resulting in a CD test statistic no smaller than the threshold $\log n$:   
\begin{equation}\label{eq: hat H set}
    \hat{\mathscr{H}} =  \left\{ \mathcal{H} \subset \hat{\mathcal{S}} : |\mathcal{H}| = p_d, {\rm CD}^{(\mathcal{H})} \geq \log n\right\},
\end{equation}  
where the slowly diverging threshold $\log n$ accounts for the multiplicity in IV selections. We can show that $\hat{\mathscr{H}}$ recovers the collection of just-identifying IV subsets with high probability, where the latter is defined as
\begin{equation}\label{eq: strong finite}
    \mathscr{H}^* = \left\{ \mathcal{H}\subset \mathcal{S}^*: |\mathcal{H}|=p_d,\  \rank(\Upsilon^*_{\mathcal{H}\cdot}) = p_d \right\}.
\end{equation}
The set $\mathscr{H}^*$ collects the just-identifying IV subsets comprising $p_d$ IVs that identify a unique value of $\beta$. We write the collection of just-identifying IV subsets as  $\mathscr{H}^* = \{\mathcal{H}_1,\cdots,\mathcal{H}_{|\mathscr{H}^*|}\}$. Correspondingly, its finite-sample analogue is $\hat{\mathscr{H}} = \{\hat{\mathcal{H}}_1,\cdots,\hat{\mathcal{H}}_{|\hat{\mathscr{H}}|}\}$.

Thus far, we have completed the first-stage hard thresholding to select the just-identifying IV subsets. Each of these just-identifying IV subsets corresponds to a TSLS estimator. We move forward to the selection of valid IVs using these TSLS estimators by hard thresholding in Section \ref{subsec: TSHT}, and explain the IV selection error and its impact on statistical inference on the treatment effects in Section \ref{subsec: local invalid}.

\subsection{Two-Stage Hard-Thresholding for Multiple Treatments} \label{subsec: TSHT}
 In this section, we illustrate the extension of the two-stage hard-thresholding (TSHT, \citet{guo2018confidence}) to multiple treatment effects, and discuss the variable selection error by hard thresholding. We start with the procedures of TSHT, and then provide an illustrative graph to clarify the procedure in Figure \ref{fig:TSHT} with a specific example.
 
 Recall that for each possible value of effects $\beta$, we can use the vector $\pi(\beta)$ in (\ref{eq: pi beta eq system}) to evaluate the IV validity. When $\beta$ equals the true effect $\beta^*$, the vector $\pi(\beta^*) = \pi^*$ discloses the truth of IV validity. With observational data, for each just-identifying IV subset in $\hat{\mathscr{H}} = \{\hat{\mathcal{H}}_\ell\}_{1\leq\ell\leq|\hat{\mathscr{H}}|}$ defined in (\ref{eq: hat H set}), we use the TSLS estimator $\hat\beta^{(\hat{\mathcal{H}}_\ell)}$, $\ell=1,2,\cdots,|\hat{\mathscr{H}}|$, to back out an estimate of IV validity through 
\begin{equation}\label{eq: hat pi H}
    \hat\pi^{(\ell)} = \hat\Gamma - \hat\Upsilon\hat\beta^{(\hat{\mathcal{H}}_\ell)},
\end{equation}
where $\hat\Gamma$ and $\hat\Upsilon$ are the reduced-form model estimates defined in (\ref{eq: OLS reduce}).

\par Selection of valid IVs is usually based on hard thresholding in the literature \citep{guo2018confidence,windmeijer2021confidence}. In our case of multiple treatments, one strategy is to select $Z_{i,k}$ as a valid IV if the absolute value of  $\hat\pi_k^{(\ell)}$ is smaller than some threshold. Let $\hat{\rm SE}(\hat\pi_k^{(\ell)})$ denote the standard error of $\hat\pi_k^{(\ell)}$ with the definition relegated to (A.3) in the supplementary Section A. According to the common practice for single-treatment models \citep{guo2018confidence,windmeijer2021confidence}, we select the following as the valid IVs: 
\begin{equation}\label{eq: V hat hard}
    \hat{\mathcal{V}}^{(\ell)} = \{k\in\hat{\mathcal{S}}: |\hat\pi_k^{(\ell)}|\leq C_*\cdot \hat{\rm SE}(\hat\pi_k^{(\ell)})  \sqrt{\log n}\}.
\end{equation}
The factor $\sqrt{\log n}$ adjusts for the multiplicity of evaluating the validity of multiple IVs, and $C_*>0$ is an absolute constant that we specify in the  illustrative simulation below.  The threshold in (\ref{eq: V hat hard}) is of order $\sqrt{\log n/n}$ if $\hat\pi_k^{(\ell)}$
is $\sqrt{n}$-consistent.

Under Assumption \ref{cond: full rank}, with high probability there exists at least one just-identifying IV subset $\hat{\mathcal{H}}_\ell$ where all IVs are valid. Under regularity conditions, we can show that $(\hat\pi_k^{(\ell)} - \pi_k^*)=O_p\left(\hat{\rm SE}(\hat\pi_k^{(\ell)})\right)$.  Therefore, when the $k$-th IV is valid such that $\pi_k^*=0$,  (\ref{eq: V hat hard}) selects the $k$-th IV with high probability. 

In practice, it is unknown which just-identifying IV subset $\hat{\mathcal{H}}_\ell$ includes valid IVs, and therefore also unknown which $\hat{\mathcal{V}}^{(\ell)}$ correctly estimates the valid IV set $\mathcal{V}^*$. We thus need to leverage the identification condition developed in Section \ref{sec: identification}. For example, we can use the generalized plurality rule (Assumption \ref{cond: plurality new}) to select valid IV sets $\hat{\mathcal{V}}^{(\ell)}$ with the largest number of IVs that are the most likely to correctly select all valid IVs. We therefore store $\hat{\mathcal{V}}^{(\ell)}$ as a candidate valid IV set if $\ell\in\hat{\mathcal{L}}$, where 
\begin{equation}\label{eq: L hat TSHT}
 \hat{\mathcal{L}}  = \arg\max_{\ell=1,2,...,|\hat{\mathscr{H}}|} |\hat{\mathcal{V}}^{(\ell)}|.
\end{equation}
There may be multiple selected valid IV sets $\hat{\mathcal{V}}^{(\ell)}$ attaining the maximum cardinality in practice, and we cannot identify which set correctly selects all valid IVs. For robust inference, we define 
the TSHT confidence interval for $\beta_1^*$ under the significance level $\alpha$ as the union of the TSLS confidence intervals (\ref{eq: def generic CI}) using all selections that are possibly correct, given as
\begin{equation}\label{eq: TSHT CI}
    {\rm CI}^{\rm TSHT}(\alpha) = \bigcup_{\ell\in\hat{\mathcal{L}} } {\rm CI}(\hat{\mathcal{V}}^{(\ell)},\alpha).
\end{equation} 
The point estimate of $\beta^*$ is given as $\hat\beta^{\rm TSHT} = \frac{1}{|\hat{\mathcal{L}|}}\sum_{\ell\in\hat{\mathcal{L}}}\hat\beta^{(\hat{\mathcal{V}}^{(\ell)})}$.

\begin{figure}[t]
\centering 
\includegraphics[scale = 0.88]{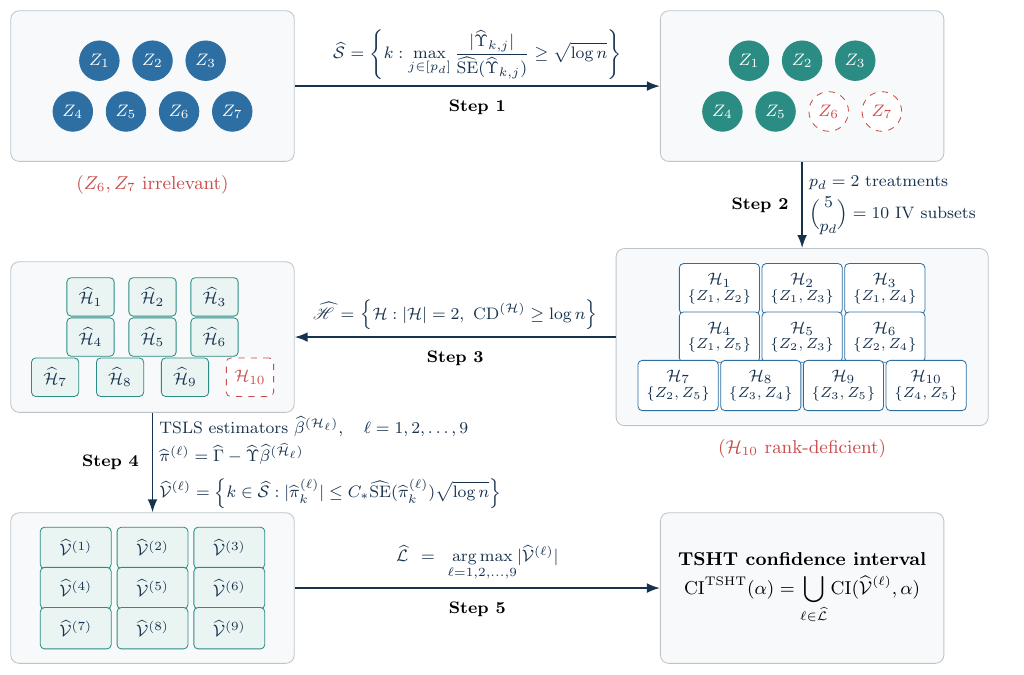}
\caption{\label{fig:TSHT} Illustration of TSHT for Multiple Treatments}
{\begin{flushleft}
    Note: This graph illustrates the procedures of multiple-treatment TSHT using an example with $p_d = 2$ treatments, $p_z = 7$ IVs, $|\mathcal{S}^*|=5$ relevant IVs, and $|\mathscr{H}^*| = 9$ just-identifying IV subsets. Specifically, $Z_6$ and $Z_7$ are irrelevant IVs, and the IV subset $\mathcal{H}_{10}$ suffers from rank deficiency. 
\end{flushleft}}
\end{figure}

Figure \ref{fig:TSHT} illustrates the TSHT procedure for  multiple treatments using an example with $p_d = 2$ treatments, $p_z = 7$ IVs, $|\mathcal{S}^*| = 5$ relevant IVs, and $|\mathscr{H}^*| = 9$ just-identifying subsets. Specifically, the IVs $Z_6$ and $Z_7$ are irrelevant and ruled out in Step 1 with $\hat{\mathcal{S}}$ in (\ref{eq: est Strong IV}). The 5 relevant IVs constitute $\left(\begin{array}{c}
             5 \\ 
             p_d   
          \end{array}\right) = 10$ 
IV subsets in Step 2, while the subset $\mathcal{H}_{10}$ suffers from rank-deficiency. Step 3 rules out $\mathcal{H}_{10}$  by (\ref{eq: hat H set}) using the CD statistic in (\ref{eq: CD test}). Step 4 adopts the retained 9 just-identifying IV subsets to generate 9 selected valid IV sets using the TSLS estimators, the IV validity estimators $\hat\pi^{(\ell)}$ in (\ref{eq: hat pi H}), and the hard thresholding (\ref{eq: V hat hard}). It is unknown which of the 9 selected valid IV subsets  correctly recovers all valid IVs. Therefore, Step 5 follows (\ref{eq: L hat TSHT}) to retain the selected valid IV sets leveraging the generalized plurality rule, and constructs  the TSHT confidence interval as the union of all TSLS confidence intervals using the retained valid IV sets following (\ref{eq: TSHT CI}). 

Like Step 4 in Figure \ref{fig:TSHT}, our procedure only goes through a finite number of TSLS estimators of $\beta^*$. This is the essential difference from the searching method in \citet{guo2023causal}. In multiple treatments, the searching method requires grid search over the ${p_d}$-dimensional space, which is computationally costly. 

We next discuss the IV selection error of TSHT.

\subsection{IV Selection Error of TSHT}\label{subsec: local invalid}
Valid IV selection  (\ref{eq: V hat hard}) is prone to errors if  certain invalid IVs are hard to distinguish from valid ones. We refer to these IVs as \textit{locally invalid IVs}, satisfying
\begin{equation}\label{eq: locally invalid}
    0 < |\pi_k^*| \leq C_*\cdot \max_{\ell\in\hat{\mathcal{L}}}\hat{\rm SE}(\hat\pi_k^{(\ell)}) \sqrt{\log n}.
\end{equation}
These locally invalid IVs can be selected as valid ones, thereby distorting the size of inference for $\beta_1^*$ based on the selected IVs in (\ref{eq: V hat hard}). These arguments echo the discussions of  \citet[Section 3]{guo2023causal} for single-treatment models about the IV selection error by hard thresholding, and the undercoverage of confidence intervals caused by erroneous IV selection.  

\begin{figure}[t]
\centering 
\includegraphics[scale = 0.8]{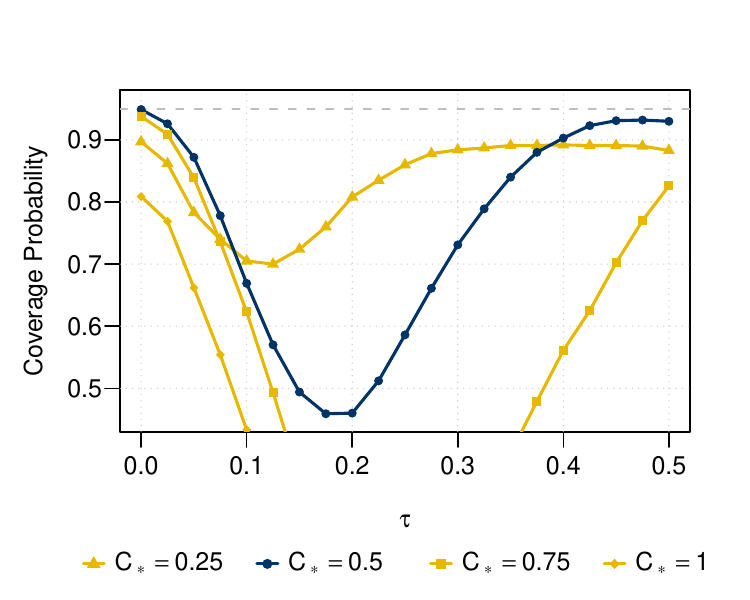}
\caption{\label{fig:illu} Coverage probability of the 95\% TSHT confidence interval with different violations of IV validity $\tau$}
\end{figure}
We provide an illustrative simulation to further clarify the impact of locally invalid IVs. We consider an example with $p_d=2$ treatments, $p_z=7$ IVs, and $p_x=5$ covariates. We set the sample size $n=1000$, and other settings follow the setting (S1) in Section \ref{sec: simul}. In particular, the measurement of IV validity $\pi^*=(0,0,0,0,0,\tau,0.5)^\top$. When $\tau \ne0$, the last two IVs are invalid. We vary $\tau$ from 0 to 0.5, where a relatively small positive $\tau$ characterizes a locally invalid IV. The 7th IV with $\pi_7^*=0.5$ is well separated from the valid IVs. 

Figure \ref{fig:illu} shows the coverage probabilities of the confidence interval for $\beta_1^*$ by TSHT under different $\tau$. The tuning parameter $C_*$ in (\ref{eq: V hat hard}) takes values in $\{0.25,0.5,0.75,1\}$, and we focus on $C_*=0.5$ under which the coverage probability is close to the nominal level when $\tau$ is around 0 or 0.5. When $\tau=0$, the 6th IV is valid and the coverage of the TSHT confidence interval is close to the nominal level. When $\tau$ ranges from 0.05 to 0.3, inference with TSHT shows severe size distortions, because IV selection based on hard thresholding cannot always distinguish a valid IV from a locally invalid one with a relatively small value of $\tau$. When $\tau$ exceeds 0.4, the coverage probability recovers to the nominal level, since the 6th IV becomes easily separated from the valid IVs. These results indicate that IV selection by hard thresholding suffers from errors in finite samples, causing size distortions in inference for $\beta_1^*$. 

Why does variable selection using hard thresholding cause bias in the second stage that selects valid IVs, but not in the first stage that selects relevant IVs? We answer this question in the following remark. 
\begin{Remark}[Locally Invalid IVs and Weak IVs]\label{rem: weak IV} Both the locally invalid IVs and the weak IVs are characterized by small coefficients. The key difference is that the validity screening selects small coefficients, whereas relevance screening selects large coefficients. When selecting valid IVs with a zero coefficient $\pi^*_k$, the hard thresholding (\ref{eq: V hat hard}) retains the IVs with small absolute estimated coefficients. This procedure will include the IVs with small but nonzero $\pi^*_k$ and thus cannot rule out locally invalid IVs. In sharp contrast, when selecting relevant IVs, the first-stage hard thresholding (\ref{eq: est Strong IV}) retains IVs with large absolute OLS coefficients, thereby excluding weak IVs. Consequently, our procedure is robust to the existence of weak IVs under the assumption that we have a sufficient number of valid and relevant IVs that strongly identify $\beta^*$.
\end{Remark}

To address the challenge in inference caused by locally invalid IVs, we shall devise a novel sampling procedure for uncertainty quantification in the following.

\subsection{Inference with Sampling Confidence Interval}\label{subsec: uniform}
We first provide intuition for how sampling can correct IV selection errors before presenting the formal details. Conditional on the observed data, for each just-identifying subset $\hat{\mathcal H}_\ell$ in (\ref{eq: hat H set}), we generate $M$ sampled IV validity measures $\{\tilde\pi^{(\ell,m)}\}_{m=1}^M$ from a normal distribution centered at $\hat\pi^{(\ell)}$ defined in \eqref{eq: hat pi H}. Because we have at least one just-identifying subset $\hat{\mathcal{H}}_{\ell^*}$ in which all IVs are valid, at least one estimator $\hat\pi^{(\ell^*)}$ is  distributed around the true IV validity $\pi^*$. Therefore, the sampling distribution of $\{\tilde\pi^{(\ell^*,m)}\}_{m=1}^M$ falls within a   neighborhood of $\pi^*$. As $M$ grows, we can show that at least one of the $M$ draws $\tilde\pi^{(\ell^*,m^*)}$ is almost the same as $\pi^*$, thereby recovering the correct set of valid IVs.

We now elaborate our inferential procedure for $\beta_1^*$, which  consists of two steps.  
\par {\bf Step 1: Sampling and Screening}. Recall that for each of the $|\hat{\mathscr{H}}|$ just-identifying IV subsets in (\ref{eq: hat H set}), we have an estimate $\hat\pi^{(\ell)}$ of IV validity in (\ref{eq: hat pi H}). To sample these estimates for relevant IVs selected by $\hat{\mathcal{S}}$, we draw $\xi^{[m]} \sim N(0,\RI_{|\hat{\mathcal{S}}|})$ for $m=1,2,\dots,M$, where $M$ is the number of resamples. We filter out the $\xi^{[m]}$'s excessively deviating from the center by  
\begin{equation}\label{eq: screen sample}
    \mathcal{M} = \left \{m\in[M]: \max_{k\in\hat{\mathcal{S}}}|\xi^{[m]}_k|\leq 1.1z_{1-\alpha_0/(2|\hat{\mathcal{S}}|)} \right \}
\end{equation} 
with $\alpha_0 \in (0,\alpha)$ being a small tuning parameter discussed in Remark \ref{rem: tuning}. The truncation by $z_{1-\alpha_0/(2|\hat{\mathcal{S}}|)}$ rules out extremely large sampled values. The truncation is necessary to avoid an overly conservative confidence interval and  guarantee the parametric rate of the interval length.

We perturb the estimated IV validity measures associated with each just-identifying IV subset in \eqref{eq: hat H set} using the sampled normal vectors $\xi^{[m]}\in\mathbb{R}^{|\hat{\mathcal S}|}$. For the $\ell$th just-identifying IV subset $\hat{\mathcal H}_\ell$ and the $m$th draw, define
\begin{equation}\label{eq: sampling pi}
    \tilde{\pi}_k^{(\ell,m)}
    =
    {\rm \bf 1}\{k\notin\hat{\mathcal H}_\ell\}
    \left\{
        \hat{\pi}_k^{(\ell)}
        +
        \widehat{\mathrm{SE}}\bigl(\hat{\pi}_k^{(\ell)}\bigr)
        \xi_k^{[m]}
    \right\},
    \qquad k\in\hat{\mathcal S}.
\end{equation}
Recall that $\hat{\pi}_k^{(\ell)}$, defined in \eqref{eq: hat pi H}, estimates the validity measure of IV $k$ using $\hat{\mathcal H}_\ell$, and its standard error $\widehat{\mathrm{SE}}\bigl(\hat{\pi}_k^{(\ell)}\bigr)$ is given in (A.3) of the supplementary Section A. We can derive that   $\hat{\pi}_k^{(\ell)}=0$ is always true for every $k\in\hat{\mathcal H}_\ell$. Intuitively, using the IVs in $\hat{\mathcal H}_\ell$ identifies a value of $\beta$ by solving $\pi_k(\beta)=0$ for any $k\in\hat{\mathcal H}_\ell$, and therefore these IVs always evaluate themselves as valid IVs. We therefore set $\tilde{\pi}_k^{(\ell,m)}=0$ for $k\in\hat{\mathcal H}_\ell$, as imposed by the indicator function in \eqref{eq: sampling pi}. We also define  $\widehat{\mathrm{SE}}\bigl(\hat{\pi}_k^{(\ell)}\bigr) = 0$ when $k\in\hat{\mathcal H}_\ell$.

Under the $m$-th sampling and the $\ell$-th just-identifying IV subset, we select valid IVs according to the sampled coefficients in (\ref{eq: sampling pi}), given as 
\begin{equation}\label{eq: sampling V hat}
     \tilde{\mathcal{V}}^{(\ell,m)} = \left\{k\in \hat{\mathcal{S}}: |\tilde\pi_k^{(\ell,m)}| \leq \hat{\rm SE}(\hat\pi_k^{(\ell)})\cdot \rho_n(M)  \right\},
\end{equation}  
where $\rho_n(M) = C_0(\log n/M)^{1/(2|\hat{\mathcal{S}}|)}$ with the tuning parameter $C_0$ discussed in Remark \ref{rem: tuning}.
\par We now demonstrate the effect of sampling. Recall that Assumption \ref{cond: full rank} guarantees the existence of  $p_d$ valid IVs with a full-rank relevance matrix. This implies that  with high probability the collection in \eqref{eq: hat H set} contains at least one just-identifying IV subset where all IVs are valid. Let $\ell^*$ denote the index of this all-valid just-identifying IV subset. Then $\hat\pi^{(\ell^*)}$ consistently estimates the true IV-validity vector $\pi^*$. The sampling in (\ref{eq: sampling pi}) generates a large number of estimators  $\{\tilde\pi^{(\ell^*,m)}\}_{m\in\mathcal{M}}$ in a neighborhood of the consistent estimate $\hat\pi^{(\ell^*)}$. We will show in Proposition \ref{prop: m star} that there exists an $m^*\in\mathcal{M}$ such that
\begin{equation}\label{eq: m star} 
     \left|  { \tilde\pi_k^{(\ell^*,m^*)} - \pi_k^* }\right| \leq {\hat{\rm SE}(\hat\pi_k^{(\ell^*)})}\rho_n(M),\,\ \ \forall k \in \hat{\mathcal{S}},
\end{equation}
where $\rho_n(M)$ is used in (\ref{eq: sampling V hat}). With $M\to\infty$, we have $\rho_n(M)\to0$ and thus $\tilde\pi^{(\ell^*,m^*)}$ is nearly the same as the true $\pi^*$. Therefore, the set $\tilde{\mathcal{V}}^{(\ell^*,m^*)}$ defined in (\ref{eq: sampling V hat}) correctly selects all valid IVs, even if the original estimate $\hat\pi^{(\ell^*)}$ causes IV selection error.  

Although there exists a $\tilde{\mathcal{V}}^{(\ell^*,m^*)}$ that correctly selects all valid IVs with high probability, in practice we need to determine which $\tilde{\mathcal{V}}^{(\ell,m)}$ to use. In the following discussions, we focus on leveraging (\ref{eq: general major pd}), the least restrictive scenario of the generalized majority rule. We will discuss the use of generalized plurality rule in Section \ref{subsec: not use plurality}. Under the $m$-th sampling, if 
$|\tilde{\mathcal{V}}^{(\ell,m)}|> (|\hat{\mathcal{S}}|+p_d - 1)/2$, the estimated valid IV set $\tilde{\mathcal{V}}^{(\ell,m)}$ is consistent with (\ref{eq: general major pd}), and we store it as a candidate valid IV set. We collect these candidate sets as 
\begin{equation}\label{eq: T t}
     \hat{\mathcal{T}}_m^{\rm GenMaj}  = \left\{ 1\leq\ell  \leq |\hat{\mathscr{H}}| : \left|\tilde{\mathcal{V}}^{(\ell,m)}\right| > \frac{|\hat{\mathcal{S}}|+p_d-1}{2}\right\}. 
\end{equation}
Under (\ref{eq: general major pd}), in the $m^*$-th sampling $\hat{\mathcal{T}}_{m^*}^{\rm GenMaj}$ includes $\ell^*$ that indexes the aforementioned valid IV set $\tilde{\mathcal{V}}^{(\ell^*,m^*)}$ with high probability.

{\bf Step 2: Aggregation.} In this step we use aggregation to utilize the reserved index set  $\hat{\mathcal{T}}^{\rm GenMaj}_m $ from Step 1. Recall that the $100(1-\alpha)\%$ TSLS confidence interval ${\rm CI}\left( \mathcal{H} ,\alpha \right)$ using the IVs in $\mathcal{H}$ is defined in (\ref{eq: def generic CI}). Since one of the selected valid IV sets $\tilde{\mathcal{V}}^{(\ell,m)}$ in (\ref{eq: sampling V hat}) correctly selects all valid IVs, at least one TSLS confidence interval ${\rm CI}\left( \tilde{\mathcal{V}}^{(\ell,m)} ,\alpha-\alpha_0\right)$ has sufficient coverage, where $\alpha_0$ compensates for the multiplicity caused by the screening in (\ref{eq: screen sample}). For robust inference, we aggregate these TSLS confidence intervals into our \emph{Sampling Confidence Interval} (SCI) for $\beta_1^*$, given as 
\begin{equation}\label{eq: def SCI CI}
    {\rm SCI}(\alpha) =
\bigcup_{m\in{\mathcal{M}}}\bigcup_{ \ell\in\hat{\mathcal{T}}_m^{\rm GenMaj}} {\rm CI}\left( \tilde{\mathcal{V}}^{(\ell,m)} ,\alpha-\alpha_0\right).
\end{equation}
 The set ${\rm SCI}(\alpha)$ is not necessarily a continuous interval, while we still call it a confidence interval for terminological consistency. 

Recall that there exist $\ell^*$ and $m^*$ such that (\ref{eq: m star}) holds, which yields correct IV selection with high probability. Under the leveraged identification condition, the $\ell^*$-th just-identifying IV subset is included in $\hat{\mathcal{T}}_{m^*}^{\rm GenMaj}$ with high probability, and thus at least one of the TSLS confidence intervals on the right-hand side of (\ref{eq: def SCI CI}) is valid. Therefore, the union confidence interval ${\rm SCI}(\alpha)$ covers the true value with a probability no less than the nominal coverage level $1-\alpha$. We summarize the procedure of SCI in Algorithm \ref{algo: SCI}.

\begin{algorithm}[t]
\footnotesize
\caption{\label{algo: SCI} Sampling Confidence Interval (SCI) for $\beta^*_1$}
\hspace*{0.01in}
\hspace*{\algorithmicindent} \textbf{Input:} Data $\{ Y_i,D_\ic,Z_\ic,X_\ic\}_{i=1}^n$; Sampling number $M$; Significance level $\alpha$; screening parameter $\alpha_0$; $\rho_n(M)=C_0(\log n/M)^{1/(2|\hat{\mathcal{S}}|)}$.\\
\hspace*{\algorithmicindent} \textbf{Output:} A $100(1-\alpha)\%$ confidence interval ${\rm SCI}(\alpha)$ for $\beta_1^*$.\\
\begin{algorithmic}[1]
\State Save $\hat{\mathcal{S}}$ by (\ref{eq: est Strong IV}) and $\hat{\mathscr{H}}$ by (\ref{eq: hat H set}). 
\State For each $m\in[M]$, sample $\xi^{[m]}\sim N(0,\RI_{|\hat{\mathcal{S}}|})$ and filter these random samples by $\mathcal{M}$ in (\ref{eq: screen sample}). \\
\For{$m\in\mathcal{M}$}{
\For{$\ell=1,2,\dots,|\hat{\mathscr{H}}|$}{
Calculate the sampled coefficient $\tilde\pi^{(\ell,m)}$ by (\ref{eq: sampling pi}). \newline
\hspace*{\algorithmicindent} \  Save estimated valid IV set $\tilde{\mathcal{V}}^{(\ell,m)}$ by (\ref{eq: sampling V hat}).
}

} 
\State Use $\tilde{\mathcal{V}}^{(\ell,m)}$ in (\ref{eq: sampling V hat}) and   $\hat{\mathcal{T}}^{\rm GenMaj}_m$ in (\ref{eq: T t}) to produce ${\rm SCI}(\alpha)$ in (\ref{eq: def SCI CI}).
\end{algorithmic}
\end{algorithm}

\subsection{Sampling Confidence Interval Using Generalized Plurality Rule}\label{subsec: not use plurality}
We now demonstrate the inference based on the generalized plurality rule. For the $m$-th resample, we retain a
candidate valid-IV set $\widetilde{\mathcal V}^{(\ell,m)}$ if its
cardinality is maximal among all candidate sets generated from that
resample. This criterion is the sample analogue of the generalized
plurality rule in Assumption \ref{cond: plurality new}. Specifically,
in place of \eqref{eq: T t}, define
\begin{equation}\label{eq: T t plur}
     \hat{\mathcal{T}}_m^{\rm GenPlur}  = \left\{1\leq \ell \leq |\hat{\mathscr{H}}| : |\tilde{\mathcal{V}}^{(\ell,m)}| = \max_{1\leq \ell^\prime\leq |\hat{\mathscr{H}}|}|\tilde{\mathcal{V}}^{(\ell^\prime,m)}| \right\}.
\end{equation}   
Other procedures still follow Algorithm \ref{algo: SCI}.

We use an illustrative simulation to compare the generalized plurality rule to the generalized majority rule in Section \ref{subsec: uniform}. Figure \ref{fig:illu2} shows the simulation results of the sampling method using $\hat{\mathcal{T}}^{\rm GenPlur}_m$ or $\hat{\mathcal{T}}^{\rm GenMaj}_m$ under the simulation settings in Section \ref{subsec: local invalid}, where both the generalized plurality rule and the generalized majority rule are satisfied. We observe that compared to $\hat{\mathcal{T}}^{\rm GenPlur}_m$, the use of $\hat{\mathcal{T}}^{\rm GenMaj}_m$ substantially shortens the confidence intervals without distorting the coverage probabilities. The generalized plurality rule is weaker than the generalized majority rule for causal effect identification in population, while in practice the filter $\hat{\mathcal{T}}_m^{\rm GenPlur}$ includes too many candidate IV sets so that the aggregate SCI is too wide.  
Therefore, we recommend using the generalized majority rule for practitioners, and in the following we focus on  the generalized majority rule for both theoretical justifications and numerical experiments.  
\begin{figure}[t]
\centering 
\includegraphics[scale = 0.72]{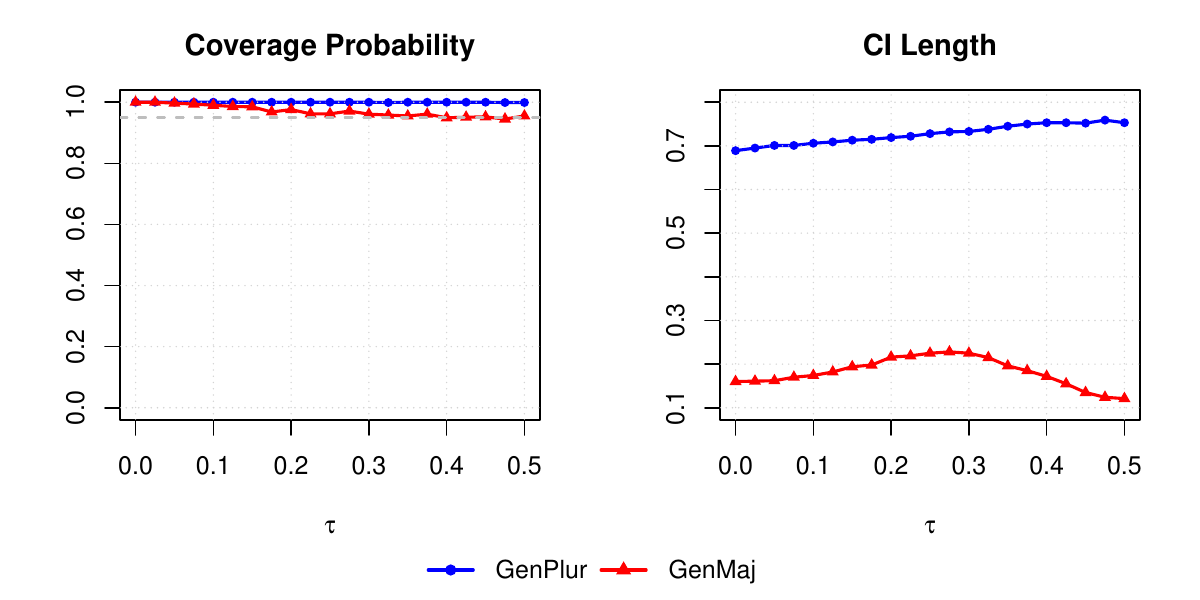}
\caption{\label{fig:illu2} Coverage probability and length of the 95\% sampling confidence interval  with $\hat{\mathcal{T}}^{\rm GenPlur}_m$ and $\hat{\mathcal{T}}^{\rm GenMaj}_m$}
{\begin{flushleft} \footnotesize{Note: The blue curves show the results with $\hat{\mathcal{T}}^{\rm GenPlur}_m$, while the red ones represent the outputs using $\hat{\mathcal{T}}^{\rm GenMaj}_m$}. The left panel shows the coverage probability of the 95\% intervals, with the gray dashed line highlighting the nominal level. The right panel exhibits the length of confidence intervals. 
\end{flushleft}}
\end{figure}

\begin{Remark}[Tuning Parameter Selection]\label{rem: tuning}
Tuning parameters for our robust inference include the number of resamples $M$, the constant $C_0$ in the sequence $\rho_n(M)$ of (\ref{eq: m star}), and the screening parameter $\alpha_0$. We suggest $M=1000$, under which the SCI performs well in all our simulation studies. For $C_0$ in  $\rho_n(M)$, we follow the idea of \citet{guo2023causal} and start with a small value $0.05$. We multiply $C_0$ by 1.25 until  $\hat{\mathcal{T}}^{\rm GenMaj}_m$ is nonempty for more than $5\%$ of the resamples $m$. The algorithm also stops when the number of iterations reaches the default maximum of 15. If the iteration number reaches the maximum, it indicates that the generalized majority rule is possibly violated, which can be used as a partial check of the generalized majority rule. Finally, we use $\alpha_0 = \alpha/20$. In the supplementary Section B, we provide additional simulations to exhibit the robust performance of SCI under various choices of tuning parameters.
\end{Remark}

\section{Theoretical Justifications for Sampling Confidence Interval}\label{subsec: theory}
We present the theoretical properties of the proposed  inference procedure based on the generalized majority rule. The following regularity conditions are used in the theoretical derivations. Throughout this theoretical section, an \textit{absolute constant} is positive and independent of the sample size $n$.
\begin{Assumption}\label{assu: distribution}Suppose the following conditions hold, for all $1\leq i\leq n$: 
\begin{enumerate}[(a)]
    \item The instruments and covariates $W_\ic = (Z_\ic^\top,X_\ic^\top)^\top$ are i.i.d.\  random vectors with bounded fourth moments. Suppose that $\Sigma = \E(W_\ic W_\ic^\top)$ satisfies $c_\Sigma \leq \lambda_{\min}(\Sigma) \leq \lambda_{\max}(\Sigma) \leq C_\Sigma$ for some absolute constants $C_\Sigma \geq c_\Sigma > 0$. 
    \item The error terms $v_\ic = (u_i,\varepsilon_i^\top)^\top$ are i.i.d. random vectors with uniformly bounded fourth moments. Assume that $\E(v_\ic | W_\ic) = 0$ and    $\Lambda = \E(v_\ic v_\ic^\top|W_\ic)$  satisfies  $c_\Lambda \leq \lambda_{\min}(\Lambda) \leq \lambda_{\max}(\Lambda) \leq C_\Lambda$ for some absolute constants $C_\Lambda \geq c_\Lambda > 0$.
\end{enumerate}
\end{Assumption}
Assumption \ref{assu: distribution} specifies the moments of the observed exogenous variables in $W_\ic$ and the unobserved error terms, ruling out perfect correlations through the bounded-eigenvalue condition.  We impose conditional homoskedasticity  via the deterministic conditional covariance matrix, following \citet{guo2018confidence}, \citet{windmeijer2021confidence}, and \citet{lin2024instrumental}. 
\par The next assumption is imposed on the IV relevance matrix $\Upsilon^*$. Recall that $\mathcal{S}^*$ collects the relevant IVs and $\mathscr{H}^*$ collects the just-identifying IV subsets.

\begin{Assumption}[IV Relevance]\label{assu: finite sample rank}
There exists some absolute constant $c_\Upsilon>0$, such that  (a) $\max_{j\in[p_d]}|\Upsilon^*_{k,j}| > c_\Upsilon$ for all $k\in\mathcal{S}^*$, and (b)  $\min_{\mathcal{H}\in\mathscr{H}^*} \lambda_{\min}(\Upsilon_{\mathcal{H} \cdot}^{*\top} \Upsilon_{\mathcal{H} \cdot}^*) > c_\Upsilon $. 
\end{Assumption}
Assumption \ref{assu: finite sample rank}(a) ensures that the relevant IVs are well separated from the irrelevant ones, therefore the screening by (\ref{eq: est Strong IV}) consistently selects the relevant IVs. Assumption \ref{assu: finite sample rank}(b) ensures the just-identifying IV subsets are well separated from the subsets not identifying a unique value of effect. These two conditions ensure that the IV subset selection by (\ref{eq: hat H set}) is consistent. For simplicity, we rule out weak IVs in the theoretical assumption to avoid distractions from the central issue of locally invalid IVs. As  explained in Remark \ref{rem: weak IV}, weak IVs are excluded  in the first-stage selection and thus not a concern for inference if there exist sufficiently many relevant IVs that ensure strong identification.  

\par We next derive the results for inference. The following proposition states the existence of $(\ell^*,m^*)$,  such that (\ref{eq: m star}) holds with high probability. That means there exists a sampled estimator $\tilde\pi^{(\ell^*,m^*)}$ sufficiently close to the true measurement of IV validity $\pi^*$. As explained around (\ref{eq: m star}), this is the essential result that corrects the IV selection errors under locally invalid IVs.  
\begin{Proposition}\label{prop: m star}Suppose that Assumptions \ref{cond: full rank}, \ref{assu: distribution} and \ref{assu: finite sample rank} hold. Then 
\begin{equation}\label{eq: m star exist} 
\mathop{\lim\inf}_{n\to\infty}\lim_{M\to\infty}\Pr\left\{\min_{\ell\in[|\hat{\mathscr{H}}|],m\in\mathcal{M}} \,\max_{k\in \hat{\mathcal{S}} } \left| \dfrac{ \tilde\pi_k^{(\ell,m)} - \pi_k^* }{\hat{\rm SE}(\hat\pi_k^{(\ell)})}\right| \leq \rho_n(M)\right\} \geq 1 - \alpha_0,
\end{equation}
where we define ratios of the form $0/0$ as zero to accommodate zero standard errors.
\end{Proposition}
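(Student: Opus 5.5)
We plan to argue along the chain ``high-probability event for the data, then a conditional probability over the draws'', following the strategy of \citet{guo2023causal} for a single treatment. The data-level event is
\[
\mathcal{E}_n=\{\hat{\mathcal{S}}=\mathcal{S}^*,\ \hat{\mathscr{H}}=\mathscr{H}^*\}.
\]
Under Assumptions \ref{assu: distribution} and \ref{assu: finite sample rank}, we would show $\Pr(\mathcal{E}_n)\to1$. The argument uses the $\sqrt n$-consistency of the OLS estimators in (\ref{eq: OLS reduce}) and the consistency of $\hat{\rm SE}(\hat\Upsilon_{k,j})$. For relevant IVs, the $t$-statistics in (\ref{eq: est Strong IV}) are of order $\sqrt n\gg\sqrt{\log n}$, while for irrelevant IVs they are $O_p(1)$. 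Similarly, ${\rm CD}^{(\mathcal{H})}$ is of order $n$ on $\mathscr{H}^*$, by Assumption \ref{assu: finite sample rank}(b), and is $O_p(1)$ off it; a union bound over the finitely many subsets then suffices. On $\mathcal{E}_n$, Assumption \ref{cond: full rank} gives an index $\ell^*$ with $\hat{\mathcal H}_{\ell^*}\subset\mathcal V^*$. We would fix this index and use only $\ell=\ell^*$ in the minimum.

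Next comes the limiting distribution of the standardized estimation error. For $\ell^*$, all IVs in $\hat{\mathcal H}_{\ell^*}$ are valid, so $\hat\beta^{(\hat{\mathcal H}_{\ell^*})}$ is a consistent, asymptotically normal just-identified TSLS estimator. Consequently $\hat\pi^{(\ell^*)}-\pi^*$ is asymptotically linear for $k\notin\hat{\mathcal H}_{\ell^*}$, and it is identically zero for $k\in\hat{\mathcal H}_{\ell^*}$ (both $\hat\pi_k$ and $\pi^*_k$ vanish there, and the $0/0$ convention applies). Define $T_k=(\hat\pi^{(\ell^*)}_k-\pi^*_k)/\hat{\rm SE}(\hat\pi^{(\ell^*)}_k)$ for $k\in\mathcal S^*\setminus\hat{\mathcal H}_{\ell^*}$. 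Using the CLT, Slutsky, and consistency of the standard errors in (A.3), the vector $T$ converges to $N(0,R)$ for a correlation matrix $R$ with unit diagonal. Therefore
\[
\Pr\Bigl(\max_k|T_k|\le z_{1-\alpha_0/(2|\mathcal S^*|)}\Bigr)\ge1-\alpha_0+o(1)
\]
by Bonferroni. On this event, $\max_k|T_k|$ sits strictly inside the screening radius $1.1z_{1-\alpha_0/(2|\hat{\mathcal S}|)}$ used in (\ref{eq: screen sample}), with margin $0.1z$.

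Then comes the sampling step, conditional on the data. The quantity to control is $\tilde\pi^{(\ell^*,m)}_k-\pi^*_k=\hat{\rm SE}\,(T_k+\xi^{[m]}_k)$, so we need some $m$ with $\|\xi^{[m]}+T\|_\infty\le\rho_n(M)$, where $\xi^{[m]}$ also satisfies the screening constraint. On the event above, the $L_\infty$-ball of radius $\rho_n(M)$ centered at $-T$ lies inside the screening box once $\rho_n(M)<0.1z$. Its standard Gaussian probability is at least
\[
p_M\ge\bigl(2\rho_n(M)\,\phi(1.1z_{1-\alpha_0/(2|\mathcal S^*|)})\bigr)^{|\mathcal S^*|}=c\,(\log n/M)^{1/2},
\]
with $c>0$ depending only on $C_0,\alpha_0,|\mathcal S^*|$. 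Indeed, the exponent $1/(2|\hat{\mathcal S}|)$ in $\rho_n(M)$ was chosen exactly so that this holds. By independence of the draws, the conditional probability that no draw hits the ball is $(1-p_M)^M\le\exp(-c\sqrt{M\log n})\to0$ as $M\to\infty$. Coordinates in $\hat{\mathcal H}_{\ell^*}$ contribute $0$ regardless of the draw. Combining these pieces, taking $\lim_M$ and then $\liminf_n$ yields the bound $1-\alpha_0$.

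The main obstacle is the joint asymptotic normality and variance consistency of $T$ for the selected subset $\hat{\mathcal H}_{\ell^*}$. Because this subset is data-dependent, we would handle it on $\mathcal E_n$ by taking a union over the finitely many deterministic $\mathcal H\in\mathscr H^*$ contained in $\mathcal V^*$. We would also need to verify that the standard errors in (A.3) are bounded away from zero for $k\notin\mathcal H$, using Assumption \ref{assu: distribution}(b), so that the ratios are well defined.
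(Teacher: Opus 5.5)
Your proposal is correct and follows essentially the same route as the paper: work on the event of correct first-stage selection intersected with the Bonferroni/CLT event $\max_k|T_k|\le z_{1-\alpha_0/(2|\mathcal S^*|)}$, lower-bound the conditional probability that one draw lands in the $\rho_n(M)$-box around $-T$, and conclude via $(1-p)^M\le e^{-Mp}$ with $[2\rho_n(M)]^{|\mathcal S^*|}\propto(\log n/M)^{1/2}$. The differences are cosmetic: you bound the Gaussian density below by $\phi(1.1z_{1-\alpha_0/(2|\mathcal S^*|)})^{|\mathcal S^*|}$ directly on the box and note that the box sits inside the screening region, whereas the paper uses a Taylor expansion of the density around the center and separately checks that draws outside $\mathcal M$ are farther than $\rho_n(M)$ from that center.
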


Given that the sampled estimator $\tilde\pi^{(\ell^*,m^*)}$ is sufficiently close to the true IV validity, the estimated valid IV set $\tilde{\mathcal{V}}^{(\ell^*,m^*)}$ defined in (\ref{eq: sampling V hat}) correctly selects all valid IVs with a high probability. Therefore, at least one of the TSLS confidence intervals on the right-hand side of (\ref{eq: def SCI CI}) is valid. The final confidence interval ${\rm SCI}(\alpha)$ thus has sufficient coverage with high probability. The following theorem formally establishes the coverage probability of our SCI under $|\mathcal{V}^*| > \frac{|\mathcal{S}^*|+p_d-1}{2}$. 
\begin{Theorem}\label{thm: cover}
    Suppose that $|\mathcal{V}^*| > (|\mathcal{S}^*|+p_d-1)/2$. Under the conditions in Proposition \ref{prop: m star}, we have
    \begin{equation}
\mathop{\lim\inf}_{n\to\infty}\lim_{M\to\infty} \Pr\left\{ \beta^*_1 \in {\rm SCI}(\alpha) \right\} \geq 1 - \alpha, 
    \end{equation}
    where ${\rm SCI}(\alpha)$ is defined in (\ref{eq: def SCI CI}). 
\end{Theorem}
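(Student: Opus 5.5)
We work on a high-probability event and show that on it at least one interval in the union defining ${\rm SCI}(\alpha)$ covers $\beta_1^*$ with probability at least $1-(\alpha-\alpha_0)$. The remaining $\alpha_0$ is charged to the sampling step of Proposition \ref{prop: m star}.

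\emph{First-stage events.} Let $\mathcal{E}_1=\{\hat{\mathcal S}=\mathcal S^*,\ \hat{\mathscr H}=\mathscr H^*\}$. Under Assumptions \ref{assu: distribution} and \ref{assu: finite sample rank}, $\Pr(\mathcal{E}_1)\to1$, because $\hat\Upsilon$ is $\sqrt n$-consistent and the standard errors are of order $n^{-1/2}$. Concretely:
\begin{itemize}
\item For relevant IVs, $|\hat\Upsilon_{k,j}|/\hat{\rm SE}(\hat\Upsilon_{k,j})\asymp\sqrt n\gg\sqrt{\log n}$.
\item For irrelevant IVs this ratio is $O_p(1)$, and finitely many such ratios are handled by a union bound.
\item The CD statistic is of order $n$ on $\mathscr H^*$ by Assumption \ref{assu: finite sample rank}(b), and $O_p(1)$ for rank-deficient subsets.
\end{itemize}
Assumption \ref{cond: full rank} yields a subset $\mathcal H_{\ell^*}\subset\mathcal V^*$ with $|\mathcal H_{\ell^*}|=p_d$ and full rank, so $\ell^*$ exists on $\mathcal{E}_1$.

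\emph{Correct selection at $(\ell^*,m^*)$.} By Proposition \ref{prop: m star} there is an event $\mathcal E_2$ with $\liminf_n\lim_M\Pr(\mathcal E_2)\ge1-\alpha_0$ on which some $(\ell,m)$ satisfies $|\tilde\pi_k^{(\ell,m)}-\pi_k^*|\le \hat{\rm SE}(\hat\pi_k^{(\ell)})\rho_n(M)$ for all $k$. I would show this pair can be taken to have all of the following properties:
\begin{itemize}
\item $\tilde{\mathcal V}^{(\ell,m)}\supseteq\mathcal V^*$. This holds because $\pi_k^*=0$ for $k\in\mathcal V^*$, which gives $|\tilde\pi_k|\le \hat{\rm SE}\,\rho_n$.
\item $\tilde{\mathcal V}^{(\ell,m)}$ excludes every invalid IV whose $|\pi_k^*|$ exceeds $2\hat{\rm SE}\rho_n(M)$.
\item $\ell\in\hat{\mathcal T}_m^{\rm GenMaj}$. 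This follows from $|\tilde{\mathcal V}^{(\ell,m)}|\ge|\mathcal V^*|>(|\mathcal S^*|+p_d-1)/2$.
\end{itemize}
Since $\rho_n(M)\to0$ as $M\to\infty$ with $n$ fixed, only invalid IVs with $0<|\pi_k^*|\lesssim \hat{\rm SE}\rho_n(M)$ can be misselected. Taking $M\to\infty$ first, this means only IVs with $|\pi_k^*|$ arbitrarily small relative to $n^{-1/2}$.

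\emph{Coverage of the selected interval.} On $\mathcal E_1\cap\mathcal E_2$, $\beta_1^*\in{\rm SCI}(\alpha)$ whenever $\beta_1^*\in{\rm CI}(\tilde{\mathcal V}^{(\ell,m)},\alpha-\alpha_0)$. The main obstacle is that $\tilde{\mathcal V}^{(\ell,m)}$ is data-dependent and correlated with the TSLS statistic.

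I would avoid the conditioning issue by taking a union over the finitely many sets $\mathcal V$ with $\mathcal V^*\subseteq\mathcal V\subseteq\mathcal S^*$, but without paying a Bonferroni price. In the limit $M\to\infty$, the set $\tilde{\mathcal V}^{(\ell,m)}$ equals $\mathcal V^*$ plus only invalid IVs with $|\pi^*_k|=o(\hat{\rm SE})$. For such sets the TSLS bias from including $k$ is $O(|\pi_k^*|)=o(n^{-1/2})$. Hence every such ${\rm CI}(\mathcal V,\alpha-\alpha_0)$ has asymptotic coverage $1-\alpha+\alpha_0$ by the standard TSLS CLT, using Assumption \ref{assu: distribution}, homoskedasticity, and consistency of $\hat{\rm SE}$.

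The cleanest way to make this rigorous is to define $\mathcal V_n(\delta)=\mathcal V^*\cup\{k:|\pi_k^*|\le\delta\,\hat{\rm SE}\}$. Choose $M$ large enough that $2\rho_n(M)\le\delta$, and also small enough that the selected set is sandwiched between $\mathcal V^*$ and $\mathcal V_n(\delta)$. If the selected set is not unique in that range, I would bound the coverage event by the intersection over this small deterministic family. All members of the family share the same limiting normal pivot up to $o_p(1)$ perturbations, so the intersection event has the same limit probability as a single member.

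Finally, combine the pieces: $\Pr(\beta_1^*\in{\rm SCI})\ge 1-\alpha+\alpha_0-\alpha_0-o(1)$. Take $\lim_M$ and then $\liminf_n$, and finish by letting $\delta\downarrow0$.
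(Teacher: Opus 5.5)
Your first-stage event and the three properties of the selected pair match what the paper uses: the selected set contains $\mathcal V^*$, it excludes invalid IVs with $|\pi_k^*|>2\hat{\rm SE}\,\rho_n(M)$, and hence $\ell\in\hat{\mathcal T}^{\rm GenMaj}_m$. The gap is in the coverage step.

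\textbf{The failing step.} You claim that all sets $\mathcal V$ with $\mathcal V^*\subseteq\mathcal V\subseteq\mathcal V_n(\delta)$ ``share the same limiting normal pivot up to $o_p(1)$''. This is false. Take $k$ relevant and (nearly) valid. TSLS with $\mathcal V^*$ keeps $Z_{\cdot k}$ as a regressor, while TSLS with $\mathcal V^*\cup\{k\}$ uses it as an excluded instrument. The two estimators generically differ by a nondegenerate $O_p(n^{-1/2})$ term; this is the usual Hausman-type contrast between a less and a more efficient estimator.

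So the two pivots are jointly normal with correlation below one. The probability that both intervals cover $\beta_1^*$ therefore has limit strictly below $1-\alpha+\alpha_0$. Letting $\delta\downarrow0$ after $\liminf_n$ does not repair this when $\pi_k^*$ varies with $n$ and $|\pi_k^*|=o(n^{-1/2})$, e.g.\ $\pi_k^*=n^{-1}$. In that case $k\in\mathcal V_n(\delta)$ for every fixed $\delta$ and all large $n$. Separately, your $\mathcal V_n(\delta)$ is defined through $\hat{\rm SE}$, so it is not a deterministic family.

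\textbf{The missing idea.} It is simpler, and you almost state it: in the theorem the limit in $M$ is taken first, at fixed $n$. At any fixed $n$ every invalid IV has $\pi_k^*\neq0$. So $t_{\min}=\min_{k\in\hat{\mathcal S}\setminus\mathcal V^*}|\pi_k^*|/\hat{\rm SE}(\hat\pi_k^{(\ell^*)})$ is strictly positive, while $\rho_n(M)\to0$. Hence $\Pr\{t_{\min}>2\rho_n(M)\}\to1$ as $M\to\infty$. Your second property then forces $\tilde{\mathcal V}^{(\ell^*,m^*)}=\mathcal V^*$ \emph{exactly} on the good event; no invalid IV survives, however local.

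The paper then bounds $\Pr\{\beta_1^*\in{\rm SCI}(\alpha)\}$ below by $\Pr\{\beta_1^*\in{\rm CI}(\mathcal V^*,\alpha-\alpha_0)\}-(1-\Pr(\mathcal E_3))+o(1)$. Here the interval uses the fixed oracle set, so the standard TSLS CLT gives $1-\alpha+\alpha_0$ for the first term, and Proposition~\ref{prop: m star} costs $\alpha_0$.

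Your intersection device works precisely when the family is the singleton $\{\mathcal V^*\}$, which is what this observation delivers. No bias analysis for locally invalid IVs is needed for coverage. That analysis enters only in the length result, Theorem~\ref{thm: length}.
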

\par Theorem \ref{thm: cover} establishes that the SCI attains at least the nominal coverage probability asymptotically. We next study its length. A sufficient condition for the SCI to have the parametric length $O(n^{-1/2})$ is that every TSLS confidence interval contributing to the SCI has length $O(n^{-1/2})$ and is located within an $O(n^{-1/2})$ neighborhood of the true effect $\beta_1^*$. This condition requires $\beta^*$ to be uniquely identified. Otherwise, the SCI may cover not only the true $\beta_1^*$ but also other candidate values, in which case its length need not shrink at the parametric rate.

Section \ref{sec: identification} provides conditions for unique identification in the population.  To account for estimation errors in observational studies, we introduce a rate-dependent analogue of the population identification condition that accounts for the finite-sample estimation error. Define the collection of strongly invalid just-identifying subsets as
\begin{equation}\label{eq: Strongly invalid}
    \mathcal{SI}
    =
    \left\{
    1\leq \ell\leq |\mathscr{H}^*|:
    \|\pi^*_{\mathcal{H}_\ell}\|_\infty
    >
    \frac{C_1\log(C_2/\alpha_0)}{\sqrt{n}}
    \right\},
\end{equation}
where $C_1$ and $C_2$ are absolute constants. Thus, a subset belongs to $\mathcal{SI}$ if it contains at least one IV whose violation is larger than the local $n^{-1/2}$ scale.
\begin{Assumption}  \label{assume: finite multiple plurality} For each $1 \leq \ell \leq |{\mathscr{H}}^*|$, the initial estimator (\ref{eq: hat pi H}) is centered at some nonrandom vector $\pi^{(\ell)} = (\pi^{(\ell)}_1,\pi^{(\ell)}_2,\cdots,\pi^{(\ell)}_{p_z})^\top$ that may depend on $n$, such that 
\begin{equation}\label{eq: CLT pi ell}
    \sqrt{n}(\hat\pi^{(\ell)} - \pi^{(\ell)}) \convd \mathcal{N}(0,\Omega^{(\ell)}), 
\end{equation}
where $\Omega^{(\ell)}$ is a $p_z \times p_z $ positive semidefinite matrix. Assume that 
\begin{equation}\label{eq: finite plurality}
       \max_{\ell \in \mathcal{SI}}   \sum_{k\in\mathcal{S}^*} {\rm \bf 1}\left\{|\pi_k^{(\ell)}| \leq \dfrac{C_1\log(C_2/\alpha_0)}{\sqrt{n}} \right\}  \leq  \frac{|\mathcal{S}^*|+p_d-1}{2},
\end{equation}
where  $\alpha_0$ is used in (\ref{eq: screen sample}). 
\end{Assumption}

Assumption \ref{assume: finite multiple plurality} is imposed to control the CI length and not required for the coverage result in Theorem \ref{thm: cover}. For simplicity, (\ref{eq: CLT pi ell}) imposes the high-level asymptotic normality that can be verified using standard central limit theorems. It characterizes the error between the IV validity estimator $\hat\pi^{(\ell)}$ and the pseudo-true value $\pi^{(\ell)}$, where the latter equals the true IV validity $\pi^*$ only if all IVs in the just-identifying subset $\mathcal{H}_\ell$ are valid.   

We now explain the intuition behind \eqref{eq: finite plurality}. Its   purpose is to prevent a strongly invalid IV subset from generating a false candidate effect that is distant from the true $\beta^*$ but passes the majority screening in (\ref{eq: T t}). Specifically, a  strongly invalid just-identifying subset may cause some IVs with \(\pi_k^{(\ell)}\) close to zero to be mistakenly classified as valid. Condition \eqref{eq: finite plurality} ensures that there are no more than $(|\mathcal{S}^*| + p_d - 1)/2$ such IVs, and therefore these  IVs cannot pass the screening based on the generalized majority rule in (\ref{eq: T t}). Thus, candidate valid IV sets generated by strongly invalid subsets are excluded from SCI with high probability. The subsets retained in the SCI therefore contain only valid or locally invalid IVs, and thus their corresponding TSLS confidence intervals in (\ref{eq: def SCI CI}) remain close to the true value \(\beta_1^*\). This ensures that the SCI has length of order \(n^{-1/2}\).

Combining  $|\mathcal{V}^*|>\frac{|\mathcal{S}^*|+p_d-1}{2}$ with (\ref{eq: finite plurality}) in Assumption \ref{assume: finite multiple plurality} gives 
\begin{equation}\label{eq: finite plur analog}
|\mathcal{V}^*| > \max_{\ell\in\mathcal{SI}} \left|\left\{k\in \mathcal{S}^*:|\pi_k^{(\ell)}| \leq \frac{C_1\log(C_2/\alpha_0)}{\sqrt{n}}\right\}\right|.
\end{equation}
Thus, the number of valid instruments exceeds the largest number of locally invalid instruments under the pseudo-true value \(\pi^{(\ell)}\) associated with any strongly invalid just-identifying subset. This is the rate-dependent version of the population plurality rule, with exact equality to zero replaced by an \(O(n^{-1/2})\) neighborhood of zero to accommodate finite-sample estimation errors.

\par Define the length of the SCI as 
\[{\rm Len}(\alpha) = \max({\rm SCI}(\alpha)) -  \min({\rm SCI}(\alpha)),\]
where $\max(\mathcal{A})$ and $\min(\mathcal{A})$ denote the maximum and minimum elements in $\mathcal{A}\subset\mathbb{R}$. We conclude the theoretical section with Theorem \ref{thm: length}, showing that the length of the SCI is controlled by the parametric rate $1/\sqrt{n}$ with high probability. 
\begin{Theorem}\label{thm: length}
    Under Assumption \ref{assume: finite multiple plurality} and the conditions in Theorem \ref{thm: cover}, 
  there exist absolute constants $C_3>0$ and $C_4>0$ such that 
    \begin{equation}\label{eq: len}
\mathop{\lim\inf}_{n\to\infty}\lim_{M\to\infty}\Pr\left\{ {\rm Len}(\alpha)
 \leq \dfrac{C_3\log(C_4/\alpha_0)}{\sqrt{n}} \right\} \geq 1 - \alpha_0.
    \end{equation} 
\end{Theorem}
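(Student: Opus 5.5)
Every interval entering ${\rm SCI}(\alpha)$ has the form ${\rm CI}(\tilde{\mathcal{V}}^{(\ell,m)},\alpha-\alpha_0)$ with $\ell\in\hat{\mathcal{T}}^{\rm GenMaj}_m$ and $m\in\mathcal{M}$. Since $\alpha-\alpha_0$ and the dimensions are fixed, the plan is to show that, on an event of probability at least $1-\alpha_0-o(1)$, each such interval lies inside $[\beta_1^*-C n^{-1/2}\log(C'/\alpha_0),\,\beta_1^*+Cn^{-1/2}\log(C'/\alpha_0)]$. Then ${\rm Len}(\alpha)$ is at most twice this radius. Fix a good event $\mathcal{E}_n$ on which the following hold:
\begin{itemize}
\item[(i)] $\hat{\mathcal{S}}=\mathcal{S}^*$ and $\hat{\mathscr{H}}=\mathscr{H}^*$, using Assumption \ref{assu: finite sample rank} and the $\sqrt{\log n}$ and $\log n$ thresholds, so that each $\hat{\mathcal{H}}_\ell=\mathcal{H}_\ell$.
\item[(ii)] $\max_\ell\|\sqrt{n}(\hat\pi^{(\ell)}-\pi^{(\ell)})\|_\infty\le c\log(C_2/\alpha_0)$, which uses (\ref{eq: CLT pi ell}) and a Gaussian tail bound over finitely many $\ell,k$ (the $\log(C_2/\alpha_0)$ factor dominates $\sqrt{\log(1/\alpha_0)}$).
\item[(iii)] The standard errors satisfy $\hat{\rm SE}(\hat\pi_k^{(\ell)})\asymp n^{-1/2}$ for $k\notin\mathcal{H}_\ell$, and $\hat{\rm SE}(\hat\beta_1^{(\mathcal{H})})\asymp n^{-1/2}$ for every $\mathcal{H}\subset\mathcal{S}^*$ containing a just-identifying subset.
\end{itemize}
For $m\in\mathcal{M}$, the screening (\ref{eq: screen sample}) gives $|\xi^{[m]}_k|\le 1.1 z_{1-\alpha_0/(2|\mathcal{S}^*|)}\lesssim\sqrt{\log(1/\alpha_0)}$ deterministically. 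Hence $|\tilde\pi_k^{(\ell,m)}-\pi_k^{(\ell)}|\le C_1\log(C_2/\alpha_0)/(2\sqrt{n})$ uniformly on $\mathcal{E}_n$, for $C_1,C_2$ chosen large enough.

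\textbf{Step 1: strongly invalid subsets are screened out.} Take $\ell\in\mathcal{SI}$. If $k\in\tilde{\mathcal{V}}^{(\ell,m)}$, then $|\tilde\pi_k^{(\ell,m)}|\le \hat{\rm SE}\,\rho_n(M)$, and this bound tends to zero relative to $n^{-1/2}$ as $M\to\infty$. Combined with the uniform bound above, this gives $|\pi_k^{(\ell)}|\le C_1\log(C_2/\alpha_0)/\sqrt{n}$. The case $k\in\mathcal{H}_\ell$ needs separate treatment, because there $\tilde\pi_k=0$. I would check that (\ref{eq: finite plurality}) counts these indices too, since $\pi^{(\ell)}_k=0$ for $k\in\mathcal{H}_\ell$ (the pseudo-true TSLS solution makes $\Gamma^*_{\mathcal{H}_\ell}-\Upsilon^*_{\mathcal{H}_\ell\cdot}\beta^{(\ell)}=0$). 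It follows that $|\tilde{\mathcal{V}}^{(\ell,m)}|\le (|\mathcal{S}^*|+p_d-1)/2$, so $\ell\notin\hat{\mathcal{T}}_m^{\rm GenMaj}$. Therefore only $\ell\notin\mathcal{SI}$ survive, i.e. subsets with $\|\pi^*_{\mathcal{H}_\ell}\|_\infty\le C_1\log(C_2/\alpha_0)/\sqrt{n}$.

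\textbf{Step 2: surviving selected sets are nearly valid.} For $\ell\notin\mathcal{SI}$, the pseudo-true effect is $\beta^{(\ell)}=\beta^*+(\Upsilon^*_{\mathcal{H}_\ell\cdot})^{-1}\pi^*_{\mathcal{H}_\ell}$. Assumption \ref{assu: finite sample rank}(b) bounds the inverse, so $\|\beta^{(\ell)}-\beta^*\|\lesssim\log(C_2/\alpha_0)/\sqrt{n}$. Then $\pi^{(\ell)}=\pi^*-\Upsilon^*(\beta^{(\ell)}-\beta^*)$ differs from $\pi^*$ by $O(\log(C_2/\alpha_0)/\sqrt{n})$. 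By the argument of Step 1, every $k\in\tilde{\mathcal{V}}^{(\ell,m)}$ therefore has $|\pi_k^*|\lesssim\log(C_2/\alpha_0)/\sqrt{n}$. The majority size forces $\tilde{\mathcal{V}}^{(\ell,m)}$ to contain at least $p_d$ valid IVs. Since it also contains $\mathcal{H}_\ell\in\mathscr{H}^*$, the TSLS with it is strongly identified.

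\textbf{Step 3: the TSLS intervals are localized.} Write $\hat\beta^{(\mathcal{V})}-\beta^*$ for the TSLS with $\mathcal{V}=\tilde{\mathcal{V}}^{(\ell,m)}$ as a bias term plus noise. The bias is a bounded linear map of $\pi^*_{\mathcal{V}}$, since the design is well conditioned by Assumption \ref{assu: distribution} together with the rank bound, so it is $O(\log(C_2/\alpha_0)/\sqrt{n})$. The noise is $O_p(n^{-1/2})$ uniformly over the finitely many subsets $\mathcal{V}\subset\mathcal{S}^*$, and it is absorbed into $\mathcal{E}_n$ with a $\sqrt{\log(1/\alpha_0)}$ quantile. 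Adding $z_{1-(\alpha-\alpha_0)/2}\,\hat{\rm SE}=O(n^{-1/2})$ gives the radius. Taking the union over $m$ costs nothing, because every bound is uniform in $m\in\mathcal{M}$ thanks to the deterministic truncation; this is exactly why the truncation in (\ref{eq: screen sample}) is needed.

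\textbf{Main obstacle.} The delicate step is making Step 1 rigorous at the $n^{-1/2}$ scale. The difficulty is the interplay of the $M\to\infty$ limit (needed for $\rho_n(M)\to0$) with the random standard errors and the boundary case in (\ref{eq: finite plurality}). I would handle it by taking $\lim_{M}$ first for fixed $n$, noting that $\rho_n(M)\to0$ deterministically, and by placing all stochastic bounds in $\mathcal{E}_n$, whose probability is independent of $M$.
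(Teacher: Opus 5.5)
Your proposal is correct and follows essentially the same route as the paper. The shared steps are a high-probability event combining first-stage consistency, the CLT for $\hat\pi^{(\ell)}$ around $\pi^{(\ell)}$, $n^{-1/2}$ standard errors and the truncation bound on $\xi^{[m]}$; exclusion of $\mathcal{SI}$ from $\hat{\mathcal T}^{\rm GenMaj}_m$ via (\ref{eq: finite plurality}); and a bias-plus-noise bound on the TSLS intervals that is uniform over the finitely many possible sets $\tilde{\mathcal V}^{(\ell,m)}$. The one minor difference is in Step 2: you pass from small $\pi^{(\ell)}_k$ to small $\pi^*_k$ through the deterministic pseudo-true identity $\beta^{(\ell)}=\beta^*+(\Upsilon^*_{\mathcal H_\ell\cdot})^{-1}\pi^*_{\mathcal H_\ell}$ (legitimate because $\hat\beta^{(\mathcal H_\ell)}=\hat\Upsilon_{\mathcal H_\ell\cdot}^{-1}\hat\Gamma_{\mathcal H_\ell}$ for a just-identifying subset), whereas Lemma \ref{lem: local to true} does this stochastically through $\hat\beta^{(\hat{\mathcal H}_\ell)}$ and the OLS errors; you also make explicit the $k\in\mathcal H_\ell$ case ($\pi^{(\ell)}_k=0$) that the paper uses only implicitly.
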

\medskip

\section{Monte Carlo Simulations}\label{sec: simul}
 \par Consider the linear instrumental variable regression model  (\ref{eq: DGP y}) under a setting with $p_d = 2$ treatments, and  $p_x = 5$ covariates, and $p_z = 7$ instrumental variables.  The i.i.d.\ data $\{Y_i,D_\ic,Z_\ic,X_\ic\}_{1\leq i\leq n}$ are generated via models (\ref{eq: DGP y}) and (\ref{eq: DGP d}), where $W_\ic = (Z_\ic^\top,X_\ic^\top)^{\top}$ follows i.i.d.\ joint normal distributions with mean zero and covariance matrix $\Sigma = (0.5^{|j-k|})_{j,k\in[p]}$. The error terms $v_i = (u_i,\varepsilon_\ic^\top)^{\top}$ follow i.i.d.\ joint normal distributions with mean zero, individual variance 1, and pairwise covariance 0.5. We vary the sample size $n$ across $\{500,1000,2000,5000\}$. 
 
 \par We consider $\beta^*=(\beta_1^*,\beta_2^*)^\top=(1,-1)^\top$ and set the  coefficients of covariates as $\varphi^*=0.5(0.3,0.6,0.9,1.2,1.5)^\top$, 
$\Psi^* = 0.5(\psi_1^*,\psi_2^*)$ where $\psi_1^* = (1,0.8,0.6,0.4,0.2)^\top$ and $\psi_2^* = (0.2,0.4,0.6,0.8,1)^\top$. We set the IV relevance matrix as $\Upsilon^* = 0.5(\gamma_1^*,\gamma_2^*)$ where $\gamma_1^* = (1,1,1,1,1,1,1)^\top$ and $\gamma_2^* = (-1.5,-1,-0.5,0,1.5,1,0.5)^\top$. We consider two settings of IV validity:
\begin{enumerate}[(S1)]
    \item  $\pi^*=(\textbf{0}_{5}^\top,\tau,0.5)^\top$,
  \item $\pi^*=(\textbf{0}_{5}^\top,\tau,\tau)^\top$.  
\end{enumerate}
Both settings include five valid instruments, thereby satisfying the condition (\ref{eq: general major pd}) and guaranteeing the validity of our SCI procedure. When $\tau$ is close to zero, (S1) contains a locally invalid IV and a strongly invalid IV, while both invalid IVs in (S2) are locally invalid. We vary $\tau$ from 0.05 to 0.5. A relatively small $\tau$ characterizes the locally invalid IVs discussed in Section \ref{subsec: local invalid}, under which the CI based on IV selection using hard thresholds, like TSHT, will suffer from size distortions. Under these settings, we compare the following methodologies to construct confidence intervals for $\beta_1^*$:
\begin{enumerate}
    \item Our proposed SCI method introduced in Section \ref{subsec: uniform};
    \item The TSHT confidence interval demonstrated in Section \ref{subsec: TSHT}, with $C_*=0.5$ and the selection criterion (\ref{eq: L hat TSHT}) replaced by the analogue for generalized majority rule
    \[\hat{\mathcal{L}} = \left\{1\leq \ell \leq |\hat{\mathscr{H}}|: |\hat{\mathcal{V}}^{(\ell)}| > \frac{|\hat{\mathcal{S}}|+p_d - 1}{2}\right\};\]
    \item The oracle bias-aware (OracleBA) confidence interval. For the TSHT estimator $\hat\beta_1^{\rm TSHT}$ in Section \ref{subsec: TSHT} and  its standard error ${\rm SE}(\hat\beta_1^{\rm TSHT})$, we assume that $(\hat\beta_1^{\rm TSHT} - \beta_1^*)/{\rm SE}(\hat\beta_1^{\rm TSHT}) \convd N(b_1,1)$ with $b_1$ denoting the asymptotic bias due to the IV selection error. Following Eq.~(7) in \citet{armstrong2020bias}, we leverage the oracle
      knowledge of $\E(\hat\beta_1^{\rm TSHT} - \beta_1^*)$ and construct the OracleBA confidence interval  
      \[(\hat\beta_1^{\rm TSHT}-\Delta_1,\hat\beta_1^{\rm TSHT}+\Delta_1),\ \Delta_1={\rm SE}(\hat\beta_1^{\rm TSHT})\sqrt{{\rm cv}_\alpha(|\E(\hat\beta_1^{\rm TSHT} - \beta_1^*)|^2/[{\rm SE}(\hat\beta_1^{\rm TSHT})]^2)},\]
where ${\rm cv}_\alpha(B^2)$ is the ($1-\alpha$) quantile of the $\chi^2$ distribution with 1 degree of freedom and non-centrality parameter $B^2$. The OracleBA CI serves as a fairer benchmark than the oracle TSLS confidence interval assuming prior knowledge of $\mathcal{V}^*$, since OracleBA accounts for the IV selection error.  
\end{enumerate}
When $\tau$ is relatively small, the sixth IV in (S1), as well as the sixth and seventh IVs in (S2), is locally invalid, under which the TSHT confidence interval is expected to suffer from undercoverage, as illustrated in Section \ref{subsec: local invalid}.

\begin{figure}[t]
\centering 
\includegraphics[scale = 0.8]{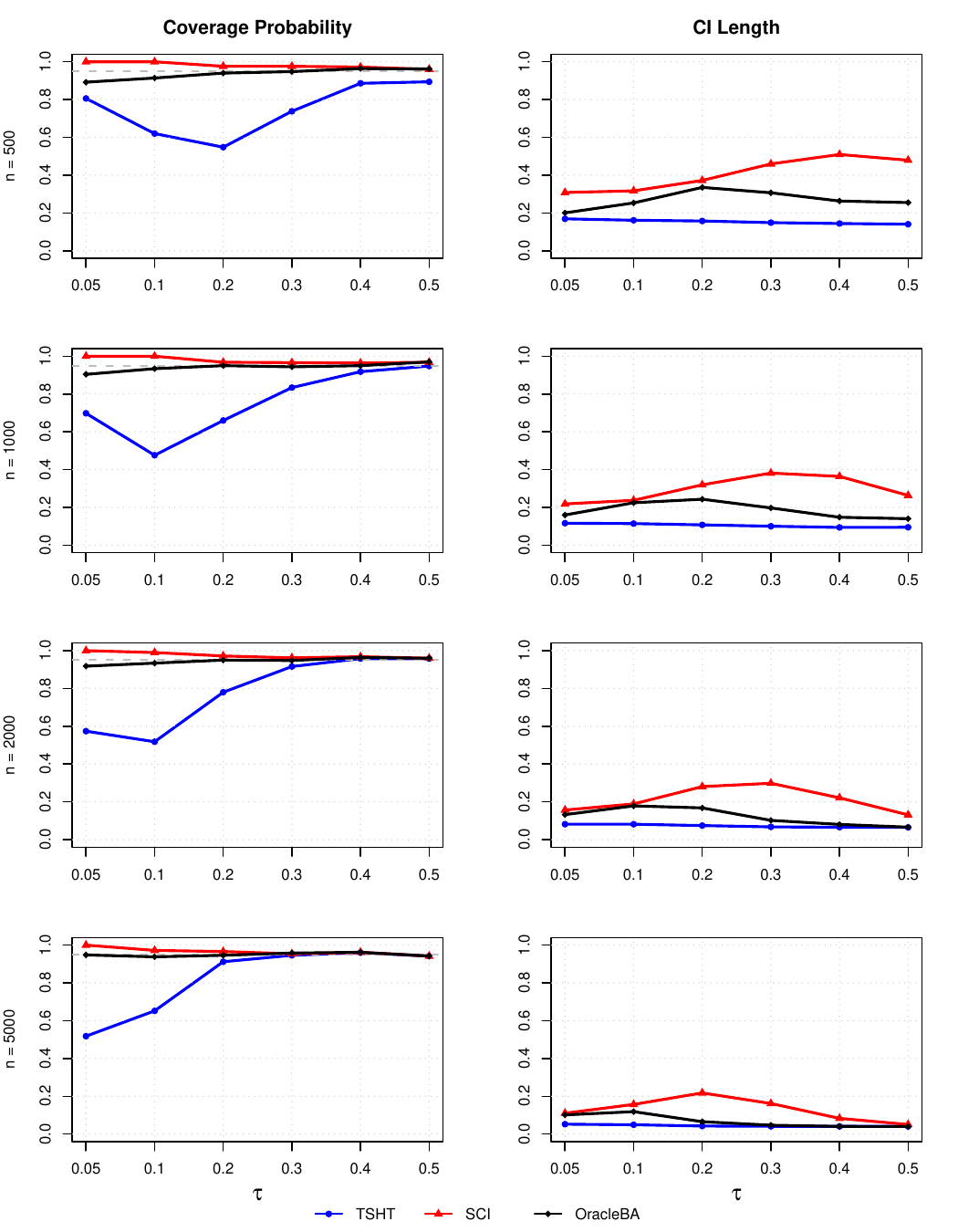}
\caption{\label{fig:simul by tau} Coverage Probabilities and CI Lengths for $\beta_1^*$ Varying with $\tau$ for (S1)}
\end{figure}

\begin{figure}[t]
\centering 
\includegraphics[scale = 0.8]{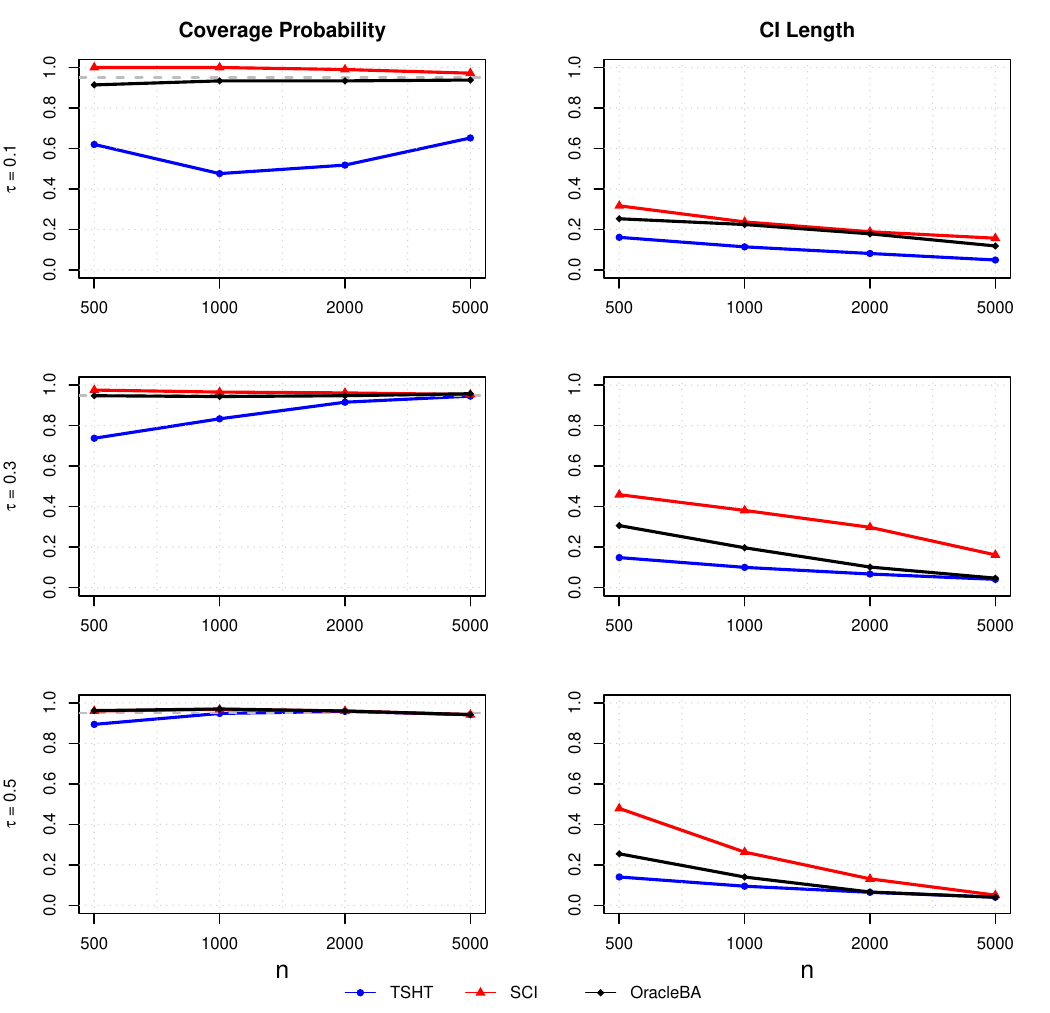}
\caption{\label{fig:simul by n} Coverage Probabilities and CI Lengths for $\beta_1^*$ Varying with $n$ for (S1)}
\end{figure}

To save space, we only show the results under the setup (S1). The results for (S2) are similar and relegated to Section B of the supplement. Figure \ref{fig:simul by tau} shows that the TSHT confidence interval suffers from severe undercoverage when the IV invalidity $\tau$ is close to zero. Even if the sample size reaches 5000, the coverage probability of the TSHT confidence interval is still below 60\% when $\tau = 0.05$. This result indicates the limitation of hard thresholding in the presence of locally invalid IVs, as explained in Section \ref{subsec: local invalid}. Increasing $\tau$ alleviates the undercoverage of the TSHT confidence interval, as a large $\tau$ well separates the invalid IVs from the valid ones, thus improving the accuracy of valid IV selection. 

In contrast, our SCI method remains robust under all IV validity levels $\tau$ and sample sizes $n$ considered in our simulation. The coverage probability of SCI is always close to the nominal level 95\%, which is similar to the OracleBA confidence interval. Regarding CI lengths, our SCI is only slightly wider than the confidence interval by OracleBA, suggesting that our inferential method achieves robustness without substantial power loss. Figure \ref{fig:simul by n} clearly shows that the length of the SCI converges to zero as the sample size $n$ increases, while the coverage probability remains robust. These results echo our theoretical justifications of the asymptotic coverage probability and length of our proposed SCI. 

\textbf{Tuning parameter selection.} As explained in Remark \ref{rem: tuning}, we set the resample size $M=1000$. For the constant $C_0$ in the sequence $\rho_n(M)$ of  (\ref{eq: m star}), we choose the smallest $C_0$ such that more than a pre-specified proportion (denoted as \texttt{prop}) of the $M$ intervals are non-empty, where \texttt{prop}$=5\%$. In Section B of the supplement, we demonstrate that our SCI is robust to different choices of $M$ and \texttt{prop}. Specifically, we execute additional simulations with $M\in\{500,1000,2000\}$ and \texttt{prop}$\in\{5\%,10\%,15\%\}$. The results show that our SCI has nearly
the same empirical coverage probability and length for different choices of $M$ and \texttt{prop}.
  
\section{Real Data Application}\label{sec: real data}
\par Mendelian randomization (MR) uses genetic variants as instrumental variables to estimate the effect of a modifiable exposure on an outcome of interest in the presence of unmeasured confounding. Recent MR studies often use genome-wide association study (GWAS) summary data, with single nucleotide polymorphisms (SNPs) serving as candidate instruments. We consider coronary artery disease (CAD) as the outcome, and low-density lipoprotein (LDL) and high-density lipoprotein (HDL) as exposures. This application implements our SCI for the causal effects of LDL and HDL on heart-disease outcomes while allowing some SNP instruments to be invalid. The IVs can be invalid due to the presence of population
stratification or horizontal pleiotropy, where the latter is the biological phenomenon in which a genetic variant affects the outcome through a pathway other than the exposure of interest. Locally invalid instruments are possible when population
stratification or horizontal pleiotropy are weak. Therefore, the SCI procedure is well suited for inference that is robust to IV selection error.  
\par We examine the causal relationships between lipoprotein traits and heart diseases using GWAS summary statistics. We consider CAD as the outcome $Y_i$ of interest, and a bivariate-treatment model where $D_\ic$ includes the levels of LDL and HDL. For candidate IVs, we use the 9 SNPs that are associated with both LDL and HDL, with detailed information in Section D.1 of the supplement. Specifically, the following summary statistics are available from the database:
\begin{enumerate}
    \item Estimates of the associations between the outcome and IVs analogous to the $\hat\Gamma$ in (\ref{eq: OLS reduce}), together with their standard errors and the $p$-values from the $t$-statistics.
    \item Estimates of the associations between the treatments and IVs analogous to the $\hat\Upsilon$ in (\ref{eq: OLS reduce}), together with their standard errors  and the $p$-values from the $t$-statistics. 
    \item The sample sizes used for estimation. 
\end{enumerate}
Details about the implementation of SCI with summary data are relegated to the supplementary Section D.2. 

\input{tab/real_data}

\par Table \ref{tab: real_data} exhibits the 95\% confidence intervals of the treatment effects of LDL and HDL on CAD using different methods. Overall, the results suggest that a higher level of LDL significantly increases the risk of CAD, while HDL does not exhibit a significant effect. This result echoes the findings of \citet*{global2013discovery} that LDL cholesterol is significantly  positively associated with CAD risk, whereas the inverse association between HDL and CAD in observational studies is insignificant in genetic analyses.

When we compare different methods, the confidence intervals from TSLS using all IVs and TSHT are relatively short. However, TSLS may suffer from invalid IVs, and TSHT is prone to undercoverage caused by IV selection errors when some IVs are locally invalid. In contrast, our SCI is longer but robust to locally invalid IVs.

\section{Conclusion}\label{sec: conclusion}
\par This paper studies identification conditions and robust inference for linear IV models with multiple endogenous treatments and possibly invalid instruments. We establish a condition called the generalized plurality rule under which the vector of multiple treatment effects is uniquely identified. We develop a sufficient condition called the generalized majority rule for identification. For studies with observational data, we propose an inference procedure that is robust to locally invalid instruments and IV selection error. These results shed light on how invalid instruments affect identification in multidimensional treatment settings and provide a practical route to robust inference when IV validity must be learned from the data.  

\bibliography{InvalidIV}
\bibliographystyle{apalike}

\setcounter{section}{0}
\renewcommand\thesection{\Alph{section}}
\clearpage
\setcounter{page}{1}
\setcounter{footnote}{0}
{\Large  
\begin{center}
Supplementary Materials for  ``Identification and Robust Inference for Multiple Treatments with Possibly Invalid Instruments''
\end{center}
}
{
\begin{center}
{Ziwei Mei}, {Qingliang Fan}, {Zijian Guo}\\
\end{center}
}

\vspace{-1em}
\newcounter{counter}[section]
\setcounter{table}{0}
\renewcommand{\thetable}{\thesection\arabic{table}}
\setcounter{equation}{0}
\renewcommand\theequation{\thesection\arabic{equation}}
\setcounter{figure}{0}
\renewcommand\thefigure{\thesection\arabic{figure}}
\setcounter{Theorem}{0}
\setcounter{Assumption}{0}
\setcounter{Lemma}{0}
\setcounter{Remark}{0}
\setcounter{Corollary}{0}
\setcounter{Proposition}{0}
\setcounter{Definition}{0}
\setcounter{Condition}{0}
\renewcommand\theTheorem{\thesection\arabic{Theorem}}
\renewcommand\theLemma{\thesection\arabic{Lemma}}
\renewcommand\theRemark{\thesection\arabic{Remark}}
\renewcommand\theCorollary{\thesection\arabic{Corollary}}
\renewcommand\theProposition{\thesection\arabic{Proposition}}
\renewcommand\theAssumption{\thesection\arabic{Assumption}}
\renewcommand\theDefinition{\thesection\arabic{Definition}}
\newcommand\thealgorithm{\thesection\arabic{algorithm}}
\renewcommand\theCondition{\thesection\arabic{Condition}} 

\setcounter{subsection}{0}
\setcounter{table}{0}
\renewcommand{\thetable}{\thesection\arabic{table}}
\setcounter{equation}{0}
\renewcommand\theequation{\thesection\arabic{equation}}
\setcounter{figure}{0}
\renewcommand\thefigure{\thesection\arabic{figure}}
\setcounter{Theorem}{0}
\setcounter{Assumption}{0}
\setcounter{Lemma}{0}
\setcounter{Remark}{0}
\setcounter{Corollary}{0}
\setcounter{Proposition}{0}

\renewcommand{\thealgocf}{\thesection\arabic{algocf}}
 \setcounter{algocf}{0}

\section{Omitted Formulas}\label{subsec: formula}

\par The standard error of $\hat\beta_1^{(\mathcal{H})}$ is
\begin{equation}\label{eq: asymp var beta}
\hat{\rm SE}(\hat\beta^{(\mathcal{H})}_1) = \sqrt{ \dfrac{(\hat\Omega_{\beta}^{(\mathcal{H})})_{1,1}}{n} },\text{ where } \hat\Omega_\beta^{(\mathcal{H})} = (\hat\sigma_u^{(\mathcal{H})})^2 \cdot \left[(n^{-1}D^{(\mathcal{H})\top} \RP_W D^{(\mathcal{H})})^{-1}\right]_{1:p_d,1:p_d},
\end{equation} 
and $(\hat\sigma_u^{(\mathcal{H})})^2  = n^{-1}\|Y-D^{(\mathcal{H})}\hat\theta^{(\mathcal{H})}\|_2^2$ is the mean squared TSLS residuals. 

The standard error of $\hat\Upsilon_{k,j}$ is
\begin{equation}\label{eq: asymp var Upsilon}
\hat{\rm SE}(\hat\Upsilon_{k,j})=\sqrt{\dfrac{\hat\Omega^{(j)}_{k,k}}{n}},\text{ where }\hat\Omega^{(j)} = (\hat\Omega^{(j)}_{k,t})_{k,t\in[p_z]} = 
(\hat\sigma^\varepsilon_j)^2 \cdot \left(n^{-1}Z^\top \RM_X Z\right)^{-1}, 
\end{equation}
and $(\hat\sigma^\varepsilon_j)^2 = n^{-1}\|\RM_W D_{\cdot j}\|_2^2$ is the mean squared OLS residuals by the regression of $D_{\cdot j}$ on $W = (Z,X)$. 

\par For any $k\in [p_z]\backslash\hat{\mathcal{H}}_\ell$, let $s(k)$ denote the order of $k$ in the sequence that organizes $[p_z]\backslash\hat{\mathcal{H}}_\ell$ in an ascending order. Then algebraically $\hat\pi_k^{(\ell)}$ equals the $(p_d + s(k))$-th entry in $ \hat\theta^{(\hat{\mathcal{H}}_\ell)}$ defined in (\ref{eq: def generic TSLS}) for any $k\in [p_z]\backslash\hat{\mathcal{H}}_\ell$.  Consider an example with $p_z=5$ and $p_d=2$.  $\hat{\mathcal{H}}_\ell=\{1,2\}$, we have $s(3)=1$, $s(4)=2$, and $s(5)=3$. If  $\hat{\mathcal{H}}_\ell=\{2,5\}$, we have $s(1)=1$, $s(3)=2$, and  $s(4)=3$.  The standard error of $\hat\pi^{(\ell)}_k$ is then  estimated by 
\begin{equation}\label{eq: asymp var pi}
   [\hat{\rm SE}(\hat\pi_{k}^{(\ell)})]^2 =  \begin{cases}
    0,\text{ if $k\in\hat{\mathcal{H}}_\ell$,}\\ \dfrac{(\hat\sigma_u^{(\ell)})^2}{n}\left[(n^{-1}D^{(\hat{\mathcal{H}}_\ell)\top}
     \RP_W D^{(\hat{\mathcal{H}}_\ell)})^{-1}\right]_{p_d+s(k),p_d + s(k)},\text{ if $k \in [p_z]\backslash\hat{\mathcal{H}}_\ell$},\\
    \end{cases}
\end{equation}
where $(\hat\sigma_u^{(\ell)})^2 = (\hat\sigma_u^{(\hat{\mathcal{H}}_\ell)})^2$ follows the definition of $(\hat\sigma_u^{(\mathcal{H})})^2$ below (\ref{eq: asymp var beta}). 

\section{Additional Simulation Results}\label{sec: additional simul}
This section includes the simulation results omitted from Section \ref{sec: simul}. Figures \ref{fig:simul by tau2 app} and \ref{fig:simul by n2 app} exhibit the simulation results for the setting (S2) where both invalid IVs are locally invalid. The results are similar to those in Figures \ref{fig:simul by tau} and \ref{fig:simul by n} for (S1) of the main text. We conduct additional simulations to investigate the robustness of SCI under various choices of tuning parameters. Under (S1) and $n=1000$, Figure \ref{fig:simul by M} exhibits the results with $M\in\{500,1000,2000\}$ when \texttt{prop} = 5\%, and Figure \ref{fig:simul by prop} shows the results with \texttt{prop}$\in\{5\%,10\%,15\%\}$. The coverage probabilities and lengths of SCI are very similar under various choices of tuning parameters, showing the robust performance of our SCI. 
\begin{figure}[h]
\centering 
\includegraphics[scale = 0.8]{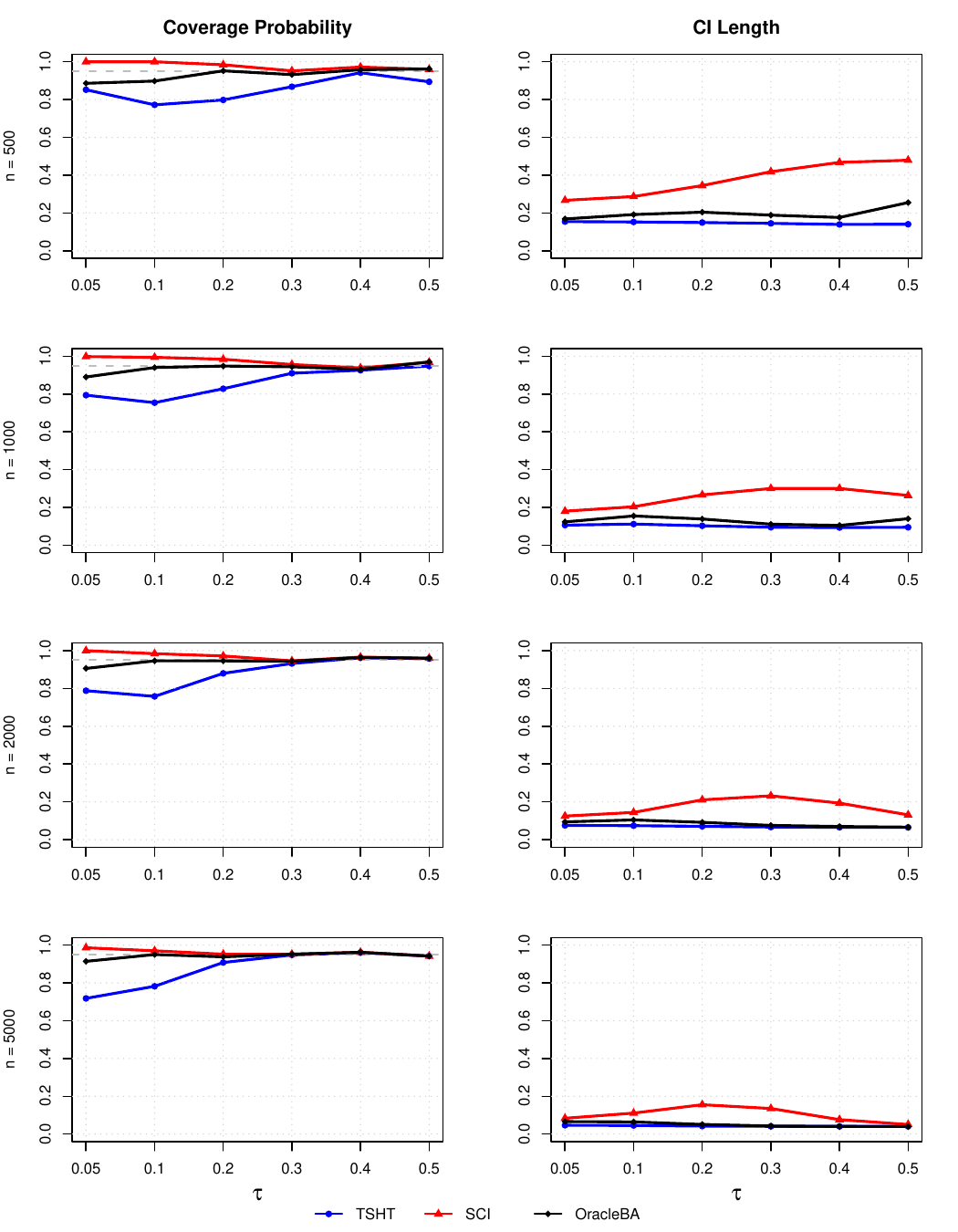}
\caption{\label{fig:simul by tau2 app} Coverage Probabilities and CI Lengths for $\beta_1^*$  Varying with $\tau$ for (S2) }
\end{figure}

\begin{figure}[t]
\centering 
\includegraphics[scale = 0.8]{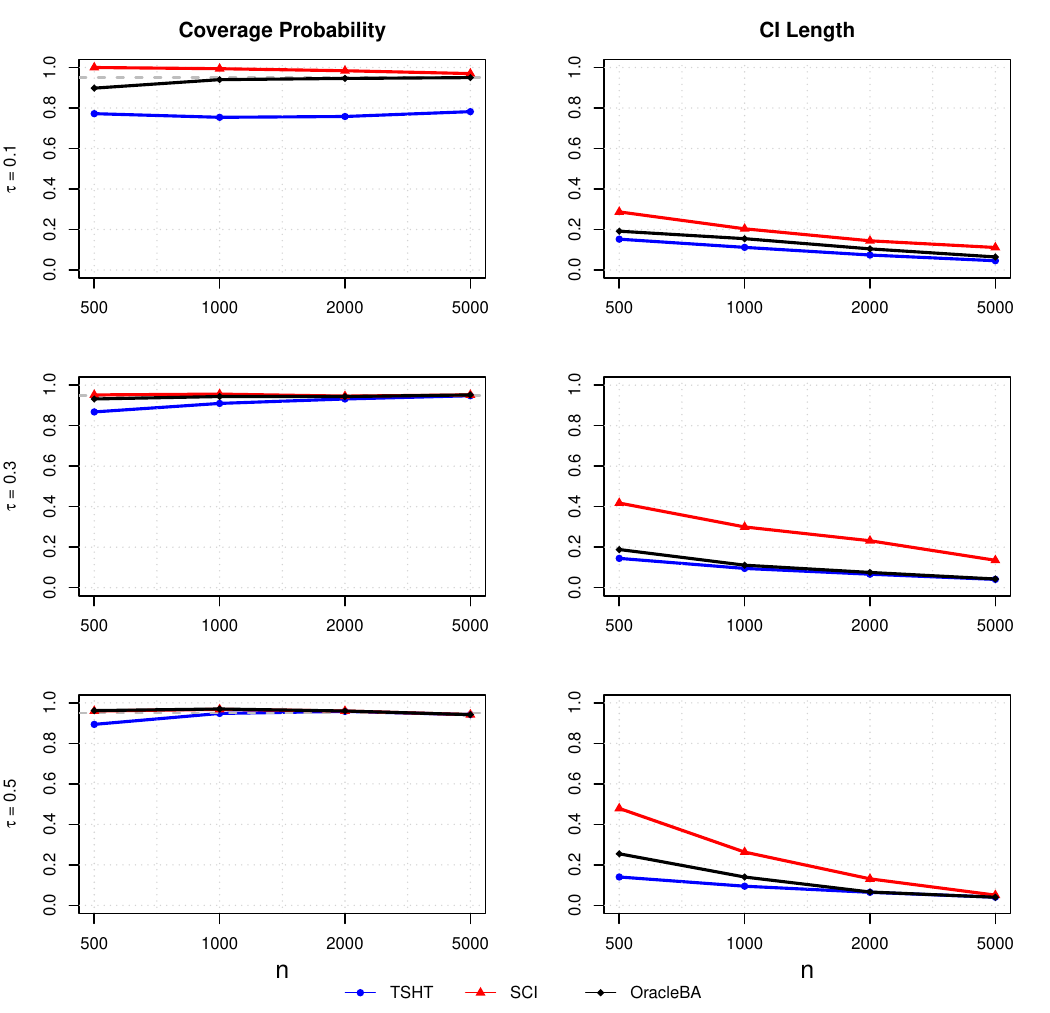}
\caption{\label{fig:simul by n2 app} Coverage Probabilities and CI Lengths for $\beta_1^*$  Varying with $n$ for (S2) }
\end{figure}

\begin{figure}[h]
\centering 
\includegraphics[scale = 0.75]{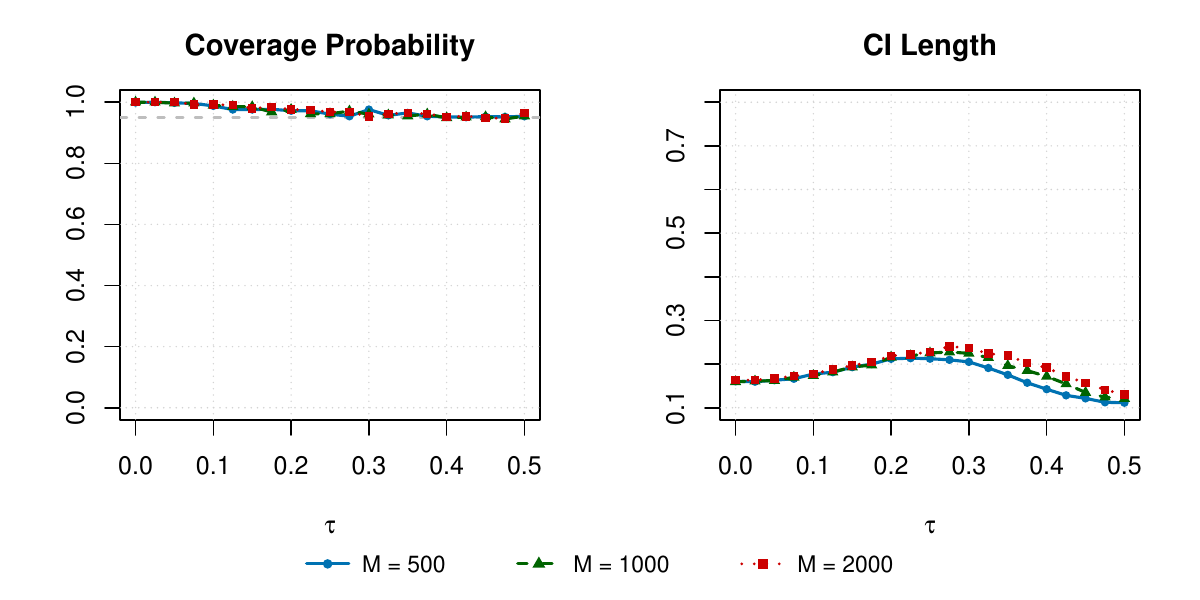}
\caption{\label{fig:simul by M} Simulation Results for $\beta_1^*$  with various $M$.}
\begin{flushleft}
\footnotesize
\textit{Note:} The results are generated under (S1), $n=1000$, and \texttt{prop} = 5\%.
\end{flushleft}
\end{figure}

\vspace{-3em}
\begin{figure}[h]
\centering 
\includegraphics[scale = 0.75]{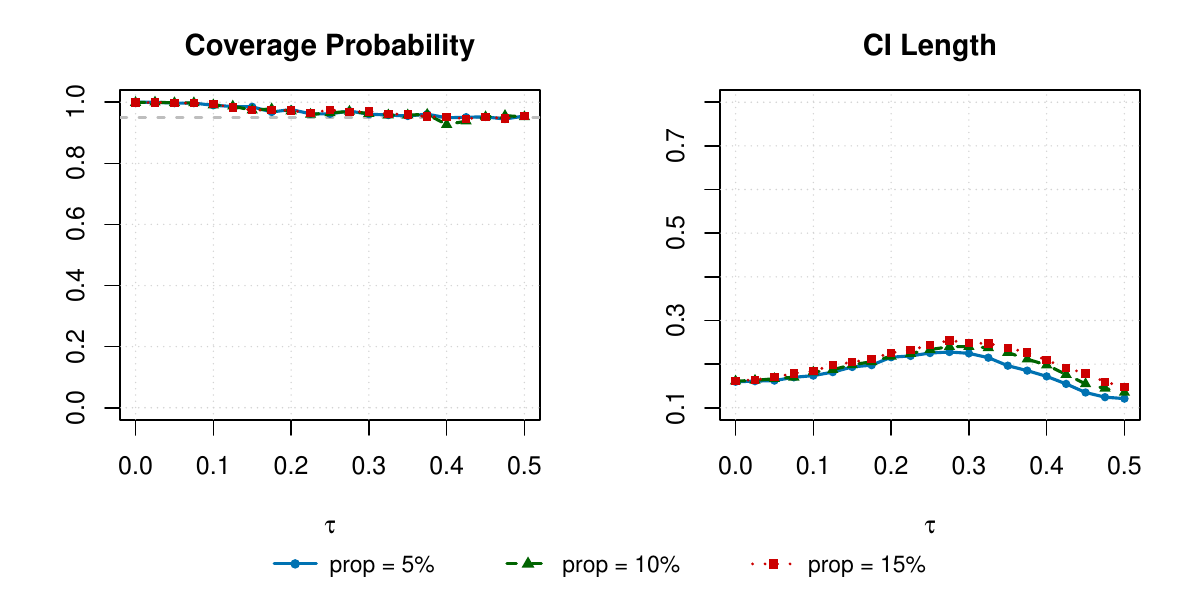}
\caption{\label{fig:simul by prop} Simulation Results for $\beta_1^*$   with various \texttt{prop}}
\begin{flushleft}
\footnotesize
\textit{Note:} The results are generated under (S1), $n=1000$, and $M = 1000$.
\end{flushleft}
\end{figure}

\clearpage

\input{app/proof}
\input{app/proof3}
\input{app/data_app}

\end{document}

%% file: fig/fig_thm.tex
\begin{figure}
\centering
	\subfigure[Successful Identification]{ 
	\begin{tikzpicture}[scale=0.7]
	\begin{axis}[
	    axis lines = center,
	    xmin=-2, xmax=8,
	    ymin=-2, ymax=8,
	    grid=none,
	    xtick=\empty,
	    ytick=\empty,
	]
	\addplot [
	    domain=-2:8, 
	    samples=100, 
	    color=blue,
	]
	{x};
	\addplot [
	    domain=-2:8, 
	    samples=100, 
	    color=blue,
	]
	{-x+3+3};
	\addplot [
	    domain=-2:8, 
	    samples=100, 
	    color=blue,
	]
	{2*x-6+3};
	\addplot [
	    domain=-2:8, 
	    samples=100, 
	    color=red,
	]
	{-2*x+2+6+3};
	\addplot [
	    domain=-2:8, 
	    samples=100, 
	    color=red,
	]
	{x/2+2-1.5+3};
	\node[label={$\boldsymbol{\beta^*}$},circle,fill,inner sep=2pt] at (axis cs:3,3) {};
	\node[label={$\boldsymbol{\beta^{(1)}}$},circle,fill,inner sep=2pt] at (axis cs:3-4/3,3+4/3) {};
	\node[label={$\boldsymbol{\beta^{(2)}}$},circle,fill,inner sep=2pt] at (axis cs:3+0,3+2) {};
 \node[label={$\beta_2$}] at (axis cs:3+4.6,-0.1) {};
\node[label={$\beta_1$}] at (axis cs:-0.4,3+3.6) {};
	\end{axis}
	\end{tikzpicture}
	}\noindent\hspace{0.08em}
	\subfigure[Counterexample]{
	\begin{tikzpicture}[scale=0.7]
\begin{axis}[
    axis lines = center,
    xmin=-2, xmax=8,
    ymin=-2, ymax=8,
    grid=none,
    xtick=\empty,
    ytick=\empty,
]
\addplot [
    domain=-2:8, 
    samples=100, 
    color=blue,
]
{x};
\addplot [
    domain=-2:8, 
    samples=100, 
    color=blue,
]
{-x+3+3};
\addplot [
    domain=-2:8, 
    samples=100, 
    color=blue,
]
{2*x-6+3};
\addplot [
    domain=-2:8, 
    samples=100, 
    color=red,
]
{-2*x+16/3+6+3};
\addplot [
    domain=-2:8, 
    samples=100, 
    color=red,
]
{x/2+2-1.5+3};
\node[label={$\boldsymbol{\beta^*}$},circle,fill,inner sep=2pt] at (axis cs:3,3) {};
\node[label={$\boldsymbol{\beta^{(1)}}$},circle,fill,inner sep=2pt] at (axis cs:3-4/3,3+4/3) {};
\node[label={$\boldsymbol{\beta^{(2)}}$},circle,fill,inner sep=2pt] at (axis cs:3+4/3,3+8/3) {};
\node[label={$\beta_2$}] at (axis cs:3+4.6,-0.1) {};
\node[label={$\beta_1$}] at (axis cs:-0.4,3+3.6) {};
\end{axis}
\end{tikzpicture}
}
\caption{Illustrative graphs for Identification}
{\begin{flushleft}\footnotesize{ Note: Each panel sketches an example with 5 IVs and 2 treatments. Each axis represents the value of one treatment effect. Each straight line represents a hyperplane defined in (\ref{eq: hyperplane}) of an IV. The hyperplanes of valid IVs are in blue color, while those of invalid ones are red. A vector $\beta=(\beta_1,\beta_2)^\top$ receives a vote from an IV when it falls on the corresponding hyperplane. For example, $\beta^*$ in both panels receives votes from three valid IVs, and $\beta^{(1)}$ in both panels receives two votes from a valid IV and an invalid one.  }\end{flushleft}  }
\label{fig: thm}
\end{figure}

%% file: tab/real_data.tex
\begin{table}[t]
\begin{center}
\small
\caption{Confidence Intervals for Effects of  Lipoprotein on CAD }\label{tab: real_data}
\begin{tabular}{|c| c|c|} 
\hline 
Method                   & LDL           & HDL             \\
\hline 
TSLS                 & (0.183,0.500) & (-0.315,0.112) \\
 
 {TSHT}  & (0.254,0.618) & (-0.224,0.328) \\
 {SCI}   & (0.168,0.776) & (-0.375,0.328) \\
                        \hline  
\end{tabular}
\end{center}
{\footnotesize Notes: The table shows the 95\% confidence intervals. ``TSLS'' represents the two-stage least squares estimator using all IVs. ``TSHT'' is the hard thresholding method in Section \ref{subsec: TSHT}. ``SCI'' is our proposed method. }
\end{table}

%% file: app/proof.tex
\section{Proofs}
\subsection{Proofs of Section \ref{sec: identification}}
We first state a lemma about the relationship between the existence of solutions to a linear equation and matrix rank.

\begin{Lemma}\label{lem: matrix rank}(a) \citep[Exercise 6.42]{abadir2005matrix} For any $m\times p_d$ matrix $\Upsilon$ and $m$-dimensional vector $\pi$, the linear equation $\Upsilon x = \pi$ has a solution if and only if ${\rm rank}(\Upsilon)={\rm rank}(\Upsilon,\pi)$. \\
(b) \citep[Exercise 6.33]{abadir2005matrix} For any $m\times p_d$ matrix $\Upsilon$, the linear equation $\Upsilon x = \boldsymbol{0}_m$ has a nonzero solution if and only if ${\rm rank}(\Upsilon)<p_d$. 
\end{Lemma}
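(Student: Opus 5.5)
For part (a), I will argue through the column space. Write $\mathcal{C}(\Upsilon)\subset\mathbb{R}^m$ for the span of the $p_d$ columns of $\Upsilon$. Since $\Upsilon x=\sum_{j\in[p_d]}x_j\Upsilon_{\cdot j}$, the equation $\Upsilon x=\pi$ has a solution if and only if $\pi\in\mathcal{C}(\Upsilon)$. The goal is therefore to show that $\pi\in\mathcal{C}(\Upsilon)$ holds exactly when ${\rm rank}(\Upsilon)={\rm rank}(\Upsilon,\pi)$.

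Adjoining the column $\pi$ gives $\mathcal{C}(\Upsilon)\subseteq\mathcal{C}(\Upsilon,\pi)$, so ${\rm rank}(\Upsilon)\le{\rm rank}(\Upsilon,\pi)\le{\rm rank}(\Upsilon)+1$. The two directions then go as follows.
\begin{itemize}
\item If $\pi\in\mathcal{C}(\Upsilon)$, the two column spaces coincide, and hence their dimensions are equal.
\item Conversely, suppose the ranks are equal. Then $\mathcal{C}(\Upsilon)$ is a subspace of $\mathcal{C}(\Upsilon,\pi)$ of the same finite dimension. A subspace of a finite-dimensional space with the same dimension is the whole space, so the two spaces are equal and $\pi\in\mathcal{C}(\Upsilon,\pi)=\mathcal{C}(\Upsilon)$.
\end{itemize}
This second direction is the only step needing a genuine fact beyond definitions: a basis of the smaller space is a linearly independent set of the right size in the larger one, and so it is a basis there too.

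For part (b), I will use the rank--nullity theorem for the linear map $x\mapsto\Upsilon x$ from $\mathbb{R}^{p_d}$ to $\mathbb{R}^m$. It gives $\dim\{x:\Upsilon x=\boldsymbol{0}_m\}=p_d-{\rm rank}(\Upsilon)$. A nonzero solution exists exactly when this null space is nontrivial, that is, when its dimension is positive, which is equivalent to ${\rm rank}(\Upsilon)<p_d$.

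Alternatively, and avoiding rank--nullity, I can note that a nonzero $x$ with $\Upsilon x=\boldsymbol{0}_m$ is precisely a nontrivial linear dependence among the $p_d$ columns of $\Upsilon$. Such a dependence exists if and only if the columns are not linearly independent, that is, if and only if the column rank is strictly less than $p_d$.

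Neither part has a real obstacle, since both are standard linear algebra, as the citation to \citet{abadir2005matrix} indicates. The one point requiring care is the dimension argument in the converse of (a).
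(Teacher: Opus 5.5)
Your proof is correct. The paper does not prove this lemma at all; it only cites Exercises 6.42 and 6.33 of \citet{abadir2005matrix}, so your argument supplies a self-contained version of what the paper outsources.

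Your two arguments are the standard ones behind those exercises:
\begin{itemize}
\item For part (a), you use the inclusion $\mathcal{C}(\Upsilon)\subseteq\mathcal{C}(\Upsilon,\pi)$ together with the fact that a subspace of equal finite dimension is the whole space. You correctly single out that fact as the only non-definitional step.
\item For part (b), you use rank--nullity, or equivalently a nontrivial linear dependence among the $p_d$ columns.
\end{itemize}

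The citation buys brevity. Your version makes the appendix independent of the reference at the cost of a few lines.

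One small note for when you use part (a) later. In the proof of Proposition 1, the paper invokes it in the form ``$\Upsilon^*_{\mathcal{H}\cdot}x+\pi^*_{\mathcal{H}}=\boldsymbol{0}_{|\mathcal{H}|}$ has no solution.'' This follows from your statement by replacing $x$ with $-x$.
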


\begin{proof}[Proof of Theorem \ref{thm: plur}]By definition, 
\[\begin{aligned}
    \max_{\beta\neq\beta^*}|\{k\in\mathcal{S}^*:\pi_k(\beta)=0\}| &\leq \max_{\beta\neq\beta^*}|\{k \in \mathcal{S}^*\backslash \mathcal{V}^*:\pi_k(\beta)=0\}| + \max_{\beta\neq\beta^*}|\{k\in\mathcal{V}^*:\pi_k(\beta)=0\}| \\
    &\leq |\mathcal{S}^*| - |\mathcal{V}^*| + \max_{\beta\neq\beta^*}|\{k\in\mathcal{V}^*:\pi_k(\beta)=0\}|. 
\end{aligned}\]
The condition $|\mathcal{V}^*| > \frac{|\mathcal{S}^*| + h_0 - 1}{2}$ can be reorganized as  
\[|\mathcal{V}^*| > |\mathcal{S}^*| - |\mathcal{V}^*| + h_0 - 1. \] 
It then suffices to show 
\begin{equation*} 
    \max_{\beta\neq\beta^*}|\{k\in\mathcal{V}^*:\pi_k(\beta)=0\}| \leq h_0 - 1.
\end{equation*} 
Recall that by definition, $\pi(\beta) = \pi^* + \Upsilon^*(\beta^* - \beta)$. As $\pi^*_k = 0$ for all $k\in \mathcal{V}^*$, we have 
\[\begin{aligned}
    \max_{\beta\neq\beta^*}|\{k\in\mathcal{V}^*:\pi_k(\beta)=0\}| &= \max_{\beta\neq\beta^*}|\{k\in\mathcal{V}^*:\Upsilon^*_{k\cdot}(\beta^* - \beta)=0\}| \\
    &= \max_{x\neq \boldsymbol{0}_{p_d}}|\{k\in\mathcal{V}^*:\Upsilon^*_{k\cdot}x=0\}|.
\end{aligned}\]
We then need to show 
\begin{equation}\label{eq: suffices to show T1}
    \max_{x\neq \boldsymbol{0}_{p_d}}|\{k\in\mathcal{V}^*:\Upsilon^*_{k\cdot}x=0\}| \leq h_0 - 1.
\end{equation} 
We prove (\ref{eq: suffices to show T1}) by contradiction. If $\max_{x\neq \boldsymbol{0}_{p_d}}|\{k\in\mathcal{V}^*:\Upsilon^*_{k\cdot}x=0\}| > h_0 - 1$, there exist a nonzero vector $x\in\mathbb{R}^{p_d}\backslash\{\boldsymbol{0}_{p_d}\}$ and a subset of valid IVs indexed by $\mathcal{H}_1\subset \mathcal{V}^*$, such that 
\[|\mathcal{H}_1| > h_0 - 1\text{ and }\Upsilon^*_{\mathcal{H}_1\cdot}x=0.\]
Let $\mathcal{H}_0$ be a subset of $\mathcal{H}_1$ satisfying $|\mathcal{H}_0| = h_0$. Then
\[\Upsilon^*_{\mathcal{H}_0\cdot}x=0.\]
That means the linear equation $\Upsilon^*_{\mathcal{H}_0\cdot}x=0$ has a nonzero solution $x$. Then by Lemma \ref{lem: matrix rank}(b),     
\begin{equation}\label{eq: h0 rank < pd}
   {\rm rank}(\Upsilon^*_{\mathcal{H}_0\cdot}) < p_d.
\end{equation} 
Because $ |\mathcal{H}_0| = h_0$, (\ref{eq: h0 rank < pd}) is contradictory to the definition of $h_0$ in (\ref{def: h}). Thus, (\ref{eq: suffices to show T1}) is verified by contradiction and we complete the proof of Theorem \ref{thm: plur}. 
\end{proof}
\begin{proof}[Proof of Proposition \ref{prop: rare}]
\par Notice that $\Pr\{\text{Assumption \ref{cond: plurality new} holds}\}=1$ is a direct corollary of (\ref{eq: max vote pd}) and $|\mathcal{V}^*| > p_d$. It thus suffices to prove (\ref{eq: max vote pd}). For any $\mathcal{H}\subset \mathcal{S}^*$, define the event
\begin{equation}\label{eq: def R (H)}
    \mathscr{R}(\mathcal{H}) = \{\text{${\rm rank}(\Upsilon^*_{\mathcal{H}\cdot}) < {\rm rank}(\Upsilon^*_{\mathcal{H}\cdot},\pi^*_{\mathcal{H}})$}\}.
\end{equation} 
Further, define 
\[\mathscr{R}^* = \{\text{ ${\rm rank}(\Upsilon^*_{\mathcal{H}\cdot}) = p_d$ for any $\mathcal{H}\subset \mathcal{S}^*$ satisfying $|\mathcal{H}|\geq p_d$}\}\]
and the class of subsets 
\[\mathscr{H}_0 = \{\mathcal{H}\subset \mathcal{S}^*: |\mathcal{H}|>p_d\text{ and }\pi^*_{\mathcal{H}} \neq \boldsymbol{0}_{|\mathcal{H}|}\}. \]
In words, $\mathscr{H}_0$ includes all subsets $\mathcal{H} \subset \mathcal{S}^*$ of cardinality  greater than $p_d$, such that at least one of  the instruments indexed by $\mathcal{H}$ is invalid.  
\par It suffices to show the following results:
\begin{enumerate}[(a)]
    \item (\ref{eq: max vote pd}) holds when both $\mathscr{R}^*$ and $\bigcap_{\mathcal{H} \in \mathscr{H}_0}\mathscr{R}(\mathcal{H})$ hold;   
    \item $\mathscr{R}^*$ and $\bigcap_{\mathcal{H} \in \mathscr{H}_0}\mathscr{R}(\mathcal{H})$ hold with probability one.
\end{enumerate}
\underline{Proof of (a)}. Recall that by definition, $\pi(\beta) = \pi^* + \Upsilon^*(\beta^* - \beta)$. Lemma \ref{lem: matrix rank}(a) suggests that under $\mathscr{R}(\mathcal{H})$, the linear equation $\Upsilon_{\mathcal{H}\cdot}^* x + \pi_{\mathcal{H}}^* = \boldsymbol{0}_{|\mathcal{H}|}$ has no solution. Since $\pi_{\mathcal{H}}(\beta) = \pi^*_{\mathcal{H}} + \Upsilon^*_{\mathcal{H}\cdot}(\beta^* - \beta)$, under $\mathscr{R}(\mathcal{H})$ we cannot find any $\beta$ such that $\pi_{\mathcal{H}}(\beta) = \boldsymbol{0}_{|\mathcal{H}|}$. Therefore, under $\bigcap_{\mathcal{H} \in \mathscr{H}_0}\mathscr{R}(\mathcal{H})$, we have 
\[\pi_{\mathcal{H}}(\beta) \neq \boldsymbol{0}_{|\mathcal{H}|}\text{ for any $\beta \in \mathbb{R}^{p_d}$, if }\mathcal{H} \in \mathscr{H}_0.\]
Therefore, under $\bigcap_{\mathcal{H} \in \mathscr{H}_0}\mathscr{R}(\mathcal{H})$, if there exists a subset $\mathcal{H}\subset \mathcal{S}^*$ and a candidate effect $\beta \in \mathbb{R}^{p_d}$ such that $\pi_{\mathcal{H}}(\beta) = \boldsymbol{0}_{|\mathcal{H}|},$ we have $\mathcal{H}\notin \mathscr{H}_0$. That means one of the following must be satisfied:
\begin{enumerate}[(I)]
    \item $|\mathcal{H}| > p_d$ and  $\pi^*_{\mathcal{H}} = \boldsymbol{0}_{|\mathcal{H}|}$; 
    \item $|\mathcal{H}| \leq p_d$.
\end{enumerate}
\par Under (I), by $\pi^*_{\mathcal{H}} = \boldsymbol{0}_{|\mathcal{H}|}$ and $\pi_{\mathcal{H}}(\beta) = \boldsymbol{0}_{|\mathcal{H}|}$ we have
\begin{equation}\label{eq: Ups 0}
    \Upsilon^*_{\mathcal{H}\cdot}(\beta^* - \beta) = \pi_{\mathcal{H}}(\beta) - \pi^*_{\mathcal{H}} = \boldsymbol{0}_{|\mathcal{H}|}.
\end{equation} 
Note that the matrix $\Upsilon^*_{\mathcal{H}\cdot}$ is full-column-rank under $\mathscr{R}^*$. By Lemma \ref{lem: matrix rank}(b), the only solution to (\ref{eq: Ups 0}) is $\beta = \beta^*$. Therefore, if the additional restriction $\beta \neq \beta^*$ is imposed, the only possibility is (II). Then 
\[\begin{aligned}
    \max_{\beta\neq\beta^*} \left|\{k\in\mathcal{S}^*:\pi_k(\beta) = 0\}\right|&= \max_{ \{\mathcal{H}\subset \mathcal{S}^*:\exists \beta\neq\beta^* \text{ s.t. }\pi_{\mathcal{H}}(\beta) = \boldsymbol{0}_{|\mathcal{H}|} \}}|\mathcal{H}| \\
&\leq p_d. 
\end{aligned}\]
This verifies (\ref{eq: max vote pd}) in Proposition \ref{prop: rare}. 
\par \noindent \underline{Proof of (b)}. We first show 
\begin{equation}\label{eq: R star 1}
\Pr\{\mathscr{R}^*\} = 1. 
\end{equation}
For any $\mathcal{H}\subset \mathcal{S}^*$ with $|\mathcal{H}|\geq p_d$, define 
\[\mathscr{R}^*_{\mathcal{H},1} = \{\Upsilon^*_{\mathcal{H},1} \neq \boldsymbol{0}_{|\mathcal{H}|} \}.\]
By continuity of the distributions, the random vector $\Upsilon^*_{\mathcal{H},1}$ lies on any singleton with probability zero. We thus have 
\[\Pr\{\mathscr{R}^*_{\mathcal{H},1}\} = 1 - \Pr\{\Upsilon^*_{\mathcal{H},1} = \boldsymbol{0}_{|\mathcal{H}|} \} = 1.\]
If $p_d=1$, then $\mathscr{R}^*=\bigcap_{\mathcal{H}\subset\mathcal{S}^*,|\mathcal{H}|\geq p_d}\mathscr{R}^*_{\mathcal{H},1}$, and
\[\Pr\{\mathscr{R}^*\}=\Pr\{\bigcap_{\mathcal{H}\subset\mathcal{S}^*,|\mathcal{H}|\geq p_d}\mathscr{R}^*_{\mathcal{H},1}\}=1,\]
given that the intersection of finitely many events that each have a probability of 1 also has a probability of 1.
Therefore, (\ref{eq: R star 1}) is verified. Otherwise, when $p_d = 2$ define 
\[\mathscr{R}^*_{\mathcal{H},2} = \{{\rm rank}\left(\Upsilon^*_{\mathcal{H},1},\Upsilon^*_{\mathcal{H},2} \right) = 2 \}.\]
Recall $|\mathcal{H}| \geq p_d$. Again by continuity of the distributions, the $|\mathcal{H}|$-dimensional random vector lies on any one-dimensional subspace with zero probability. Let $\mathscr{L}_{\mathcal{H},1}$ denote the linear space spanned by $\Upsilon^*_{\mathcal{H},1}$ whose dimension is 1. Because the nonzero elements are drawn from independent continuous distributions,  the random vectors $\Upsilon^*_{\mathcal{H},1}$ and $\Upsilon^*_{\mathcal{H},2}$ are independent. Therefore,
\begin{equation}\label{eq: Cond prob}
    \Pr\{\mathscr{R}^*_{\mathcal{H},2}|\mathscr{R}^*_{\mathcal{H},1}\} = 1 -  \Pr\{\Upsilon^*_{\mathcal{H},2}\in \mathscr{L}_{\mathcal{H},1}|\mathscr{R}^*_{\mathcal{H},1}\} = 1,
\end{equation}
which implies that 
\begin{equation}
    \Pr\{\mathscr{R}^*_{\mathcal{H},2}\} \geq \Pr\{\mathscr{R}^*_{\mathcal{H},2} \cap  \mathscr{R}^*_{\mathcal{H},1}\} = \Pr\{\mathscr{R}^*_{\mathcal{H},2} | \mathscr{R}^*_{\mathcal{H},1}\}\Pr\{\mathscr{R}^*_{\mathcal{H},1}\} = 1. 
\end{equation}
If $p_d = 2$, $\mathscr{R}^* = \bigcap_{\mathcal{H}\subset\mathcal{S}^*,|\mathcal{H}|\geq p_d}\mathscr{R}^*_{\mathcal{H},2}$ and thus (\ref{eq: R star 1}) is verified. Otherwise, when $p_d \geq 3$, we further define 
\[\mathscr{R}^*_{\mathcal{H},j} = \{{\rm rank}\left(\Upsilon^*_{\mathcal{H},1},\dots,\Upsilon^*_{\mathcal{H},j} \right) = j \}\]
for $j \geq 3$. We can show by induction that for any $j=3,\dots,p_d$, 
\[\Pr\left\{\mathscr{R}^*_{\mathcal{H},j} | \mathscr{R}^*_{\mathcal{H},j-1},\dots,\mathscr{R}^*_{\mathcal{H},1}\right\} = 1\]
following the arguments for the proof of (\ref{eq: Cond prob}). Therefore, 
\[\begin{aligned} 
\Pr\left\{\mathscr{R}^*_{\mathcal{H},p_d}\right\} &\geq \Pr\left\{\bigcap_{j=1}^{p_d}\mathscr{R}^*_{\mathcal{H},j} \right\} \\
    &=\Pr\left\{\mathscr{R}^*_{\mathcal{H},1}\right\}\cdot \prod_{j=2}^{p_d}\Pr\left\{\mathscr{R}^*_{\mathcal{H},j} | \mathscr{R}^*_{\mathcal{H},j-1},\dots,\mathscr{R}^*_{\mathcal{H},1}\right\} \\
    &=1.
\end{aligned}\]
Thus, $\Pr\left\{\mathscr{R}^*\right\} = \Pr\left\{\bigcap_{\mathcal{H}\subset\mathcal{S}^*,|\mathcal{H}|\geq p_d}\mathscr{R}^*_{\mathcal{H},p_d}\right\}=1$ and (\ref{eq: R star 1}) is verified. 
\par We then show $\Pr\{\bigcap_{\mathcal{H}\in\mathscr{H}_0}\mathscr{R}(\mathcal{H}) \} = 1$. Given that $\mathscr{H}_0$ is a finite collection, it suffices to show that $\Pr\{\mathscr{R}(\mathcal{H}) \} = 1 \text{ for any }\mathcal{H}\in\mathscr{H}_0$, where $\mathscr{R}(\mathcal{H})$ is defined in (\ref{eq: def R (H)}). Given that (\ref{eq: R star 1}) has been verified, from now on we always assume that  $\mathscr{R}^*$ is true, and omit the conditioning on $\mathscr{R}^*$ for simplicity. Under $\mathscr{R}^*$, it suffices to show
\begin{equation}\label{eq: R H 1}
    \Pr\{  \rank(\Upsilon_{\mathcal{H}\cdot}^*, \pi^*_{\mathcal{H}}) = p_d + 1\} = 1.
\end{equation}
for any $\mathcal{H}\in\mathscr{H}_0$. Note that by definition any $\mathcal{H}\in\mathscr{H}_0$ includes at least one invalid IV. 
\par Case I. When all instruments in $\mathcal{H}$ are invalid, i.e.\ $\mathcal{H}\subset \mathcal{V}^{*c}$, all entries in $\pi^*_{\mathcal{H}}$ are random.  Then we can prove (\ref{eq: R H 1}) following the arguments in proving (\ref{eq: R star 1}). 
\par Case II. When some of the instruments in $\mathcal{H}$ are valid, i.e.\ $\mathcal{H} \cap \mathcal{V}^*$ is not empty, $\pi^*_{\mathcal{H} \cap \mathcal{V}^*} = \textbf{0}_{|\mathcal{H} \cap \mathcal{V}^*|}$ is deterministic, and $\pi^*_{\mathcal{H} \cap \mathcal{V}^{*c}}$ is random. Recall that under $\mathscr{R}^*$ we have ${\rm rank}(\Upsilon^*_{\mathcal{H}\cdot}) = p_d$, and in this case the event in (\ref{eq: R H 1}) is equivalent to 
\begin{equation}\label{eq: equivalent event}
\begin{aligned}
    \{{\rm rank}(\Upsilon^*_{\mathcal{H}\cdot},\pi^*_{\mathcal{H}}) = p_d + 1\} &=  \{ {\rm rank}(\Upsilon^*_{\mathcal{H}\cdot})< {\rm rank}(\Upsilon^*_{\mathcal{H}\cdot},\pi^*_{\mathcal{H}}) \}\\
    &= \{\Upsilon^*_{\mathcal{H}\cdot}x\neq\pi^*_{\mathcal{H}}\text{ for any }x\in\mathbb{R}^{p_d}\},
\end{aligned}
\end{equation}
where the second row applies Lemma \ref{lem: matrix rank}(a).  Note that 
any $x$ such that $\Upsilon^*_{\mathcal{H}\cdot}x = \pi^*_{\mathcal{H}}$ must satisfy $\Upsilon^*_{(\mathcal{H}\cap\mathcal{V}^*)\cdot}x = \pi^*_{(\mathcal{H}\cap\mathcal{V}^*)} = \boldsymbol{0}_{|(\mathcal{H}\cap\mathcal{V}^*)|}$. Equivalently, any solution to the equation $\Upsilon^*_{\mathcal{H}\cdot}x = \pi^*_{\mathcal{H}}$ is in the set  
\[\mathcal{X}_0 = \{x\in\mathbb{R}^{p_d}: \Upsilon^*_{(\mathcal{H} \cap \mathcal{V}^*)\cdot} x = \textbf{0}_{|\mathcal{H} \cap \mathcal{V}^*|}  \}.\]
Therefore, 
\begin{equation}\label{eq: Ups x neq pi equiv}
    \{\Upsilon^*_{\mathcal{H}\cdot}x\neq\pi^*_{\mathcal{H}}\text{ for any }x\in\mathbb{R}^{p_d}\}=\{\Upsilon^*_{(\mathcal{H}\cap\mathcal{V}^{*c})\cdot}x \neq \pi^*_{(\mathcal{H}\cap\mathcal{V}^{*c})}\text{ for any }x\in\mathcal{X}_0\}. 
\end{equation} 
By (\ref{eq: equivalent event}) and (\ref{eq: Ups x neq pi equiv}), to prove (\ref{eq: R H 1}) it suffices to show 
\begin{equation}\label{eq: Ups x neq pi}
    \Pr\{\Upsilon^*_{(\mathcal{H}\cap\mathcal{V}^{*c})\cdot}x \neq \pi^*_{(\mathcal{H}\cap\mathcal{V}^{*c})}\text{ for any }x\in\mathcal{X}_0\}=1.
\end{equation}

Under $\mathscr{R}^*$, the rank of the $|\mathcal{H} \cap \mathcal{V}^*|\times p_d$ matrix $\Upsilon^*_{(\mathcal{H} \cap \mathcal{V}^*)\cdot}$ equals either $|\mathcal{H} \cap \mathcal{V}^*|$ or $p_d$. By the  Rank-Nullity Theorem, the null space $\mathcal{X}_0$ satisfies 
\begin{equation}\label{eq: dim X0 space}
   {\rm dim}(\mathcal{X}_0) = \max\{p_d - |\mathcal{H} \cap \mathcal{V}^*|,0\}. 
\end{equation}
When ${\rm dim}(\mathcal{X}_0)=0$, the space $\mathcal{X}_0 = \{\boldsymbol{0}_{p_d}\}$. Note that $\mathcal{H}$ includes at least one invalid IV, and thus $\pi^*_{(\mathcal{H}\cap\mathcal{V}^{*c})} \neq \boldsymbol{0}_{|\mathcal{H}\cap\mathcal{V}^{*c}|} = \Upsilon^*_{(\mathcal{H}\cap\mathcal{V}^{*c})\cdot}\boldsymbol{0}_{p_d}$, and thus the event in (\ref{eq: Ups x neq pi}) holds. 

Otherwise, ${\rm dim}(\mathcal{X}_0) = p_d - |\mathcal{H}\cap\mathcal{V}^*|>0$. Define the null space and range space of the linear transformation $\Upsilon^*_{(\mathcal{H}\cap\mathcal{V}^{*c})\cdot}$ as
\[\mathcal{N}_0 = \{x\in\mathcal{X}_0:\Upsilon^*_{(\mathcal{H}\cap\mathcal{V}^{*c})\cdot}x = \boldsymbol{0}_{|\mathcal{H}\cap\mathcal{V}^{*c}|} \},\]
and 
\[\mathcal{W}_0 = \{\Upsilon^*_{(\mathcal{H}\cap\mathcal{V}^{*c})\cdot}x: x\in\mathcal{X}_0\}.\]
By the Rank-Nullity Theorem,
\begin{equation}\label{eq: dim W0 bound}
    \begin{aligned}
    {\rm dim}( \mathcal{W}_0 ) &= {\rm dim}({\mathcal{X}_0}) - {\rm dim}( \mathcal{N}_0 ) \\
    &\leq  {\rm dim}({\mathcal{X}_0})  =  p_d - |\mathcal{H} \cap \mathcal{V}^*|.
\end{aligned}
\end{equation}   Recall that $|\mathcal{H}|>p_d$ for any $\mathcal{H}\in\mathscr{H}_0$, and  $\pi^*_{\mathcal{H}\cap\mathcal{V}^{*c}}$ is a random vector of length $$|\mathcal{H}| - |\mathcal{H}\cap\mathcal{V}^{*}| > p_d - |\mathcal{H}\cap\mathcal{V}^{*}| \geq {\rm dim}(\mathcal{W}_0),$$
where the last step applies (\ref{eq: dim W0 bound}). That means the length of the continuously distributed random vector $\pi^*_{\mathcal{H}\cap\mathcal{V}^{*c}}$ is larger than the dimension of $\mathcal{W}_0$.  Therefore, $\pi^*_{\mathcal{H}\cap\mathcal{V}^{*c}}$ falls onto the subspace $\mathcal{W}_0$ with probability zero. This indicates that (\ref{eq: Ups x neq pi}) holds with probability one, and therefore (\ref{eq: R H 1}) holds. This completes the proof of Part (b) and concludes the proof of Proposition \ref{prop: rare}.
\end{proof}

%% file: app/proof3.tex
\subsection{Proofs of Section \ref{subsec: theory}}
In the following proofs, we abuse the notation $C$ or $C_j$ for $j\in\mathbb{N}$ to denote absolute constants, which may vary from place to place.  

We first establish model-selection consistency in the first-stage hard thresholding for relevant IVs and the just-identifying IV subsets, respectively. 
\begin{Proposition}\label{prop: first stage}Suppose that Assumptions \ref{assu: distribution} and \ref{assu: finite sample rank} hold. Then $\lim_{n\to\infty}\Pr\{\hat{\mathcal{S}} = \mathcal{S}^*\} = 1$, and  $\lim_{n\to\infty}\Pr\{\hat{\mathscr{H}} = \mathscr{H}^*\} = 1$. 
\end{Proposition}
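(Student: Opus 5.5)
We prove the two claims in sequence, first $\hat{\mathcal{S}}=\mathcal{S}^*$ and then $\hat{\mathscr{H}}=\mathscr{H}^*$ on that event. In both steps the standardized first-stage statistics are $O_p(1)$ under the ``null'' configurations, grow like $\sqrt{n}$ or $n$ under the ``alternative'' configurations, and the thresholds $\sqrt{\log n}$ and $\log n$ lie strictly between these orders. Since $p_z,p_d$ are fixed, only finitely many indices $k$ and subsets $\mathcal{H}$ are involved. A union bound over them therefore costs nothing asymptotically.

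\textbf{Step 1 (relevant IVs).} Under Assumption \ref{assu: distribution}, the law of large numbers gives $n^{-1}Z^\top\RM_XZ\convp Q$. Here $Q$ is the Schur complement of $\Sigma$, so its eigenvalues lie in $[c_\Sigma,C_\Sigma]$. Likewise $(\hat\sigma_j^\varepsilon)^2\convp\Lambda_{j+1,j+1}\in[c_\Lambda,C_\Lambda]$. Consequently $\hat{\rm SE}(\hat\Upsilon_{k,j})$ equals $n^{-1/2}$ times a quantity bounded away from $0$ and $\infty$ with probability tending to one. The central limit theorem (fourth moments together with $\E(v_\ic|W_\ic)=0$) gives $\sqrt{n}(\hat\Upsilon-\Upsilon^*)=O_p(1)$.
\begin{itemize}
\item For $k\notin\mathcal{S}^*$, every $t$-ratio $|\hat\Upsilon_{k,j}|/\hat{\rm SE}(\hat\Upsilon_{k,j})$ is $O_p(1)$. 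It therefore falls below $\sqrt{\log n}$ with probability tending to one.
\item For $k\in\mathcal{S}^*$, Assumption \ref{assu: finite sample rank}(a) supplies a $j$ with $|\Upsilon^*_{k,j}|>c_\Upsilon$. Then $|\hat\Upsilon_{k,j}|\ge c_\Upsilon/2$ with probability tending to one, so the ratio is at least $c\sqrt{n}\gg\sqrt{\log n}$.
\end{itemize}
A union bound over the $p_z$ indices gives $\Pr\{\hat{\mathcal{S}}=\mathcal{S}^*\}\to1$.

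\textbf{Step 2 (just-identifying subsets).} We work on the event $\hat{\mathcal{S}}=\mathcal{S}^*$, so the candidate sets are exactly the $p_d$-subsets $\mathcal{H}\subset\mathcal{S}^*$. First, $\hat\Sigma_\varepsilon\convp\E(\varepsilon_\ic\varepsilon_\ic^\top)$, whose eigenvalues lie in $[c_\Lambda,C_\Lambda]$. Second, $n^{-1}Z_{\cdot\mathcal{H}}^\top\RM_{W^{(\mathcal{H})}}Z_{\cdot\mathcal{H}}\convp Q_{\mathcal{H}}$, again a Schur complement of $\Sigma$ with eigenvalues in $[c_\Sigma,C_\Sigma]$. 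Writing $A_n=\hat\Upsilon_{\mathcal{H}\cdot}\hat\Sigma_\varepsilon^{-1/2}$, we have $n\,c\,\lambda_{\min}(A_n^\top A_n)\le{\rm CD}^{(\mathcal{H})}\le n\,C\,\lambda_{\min}(A_n^\top A_n)$ with probability tending to one.

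\emph{Full-rank case.} If $\mathcal{H}\in\mathscr{H}^*$, Assumption \ref{assu: finite sample rank}(b) and consistency of $\hat\Upsilon$ give $\lambda_{\min}(A_n^\top A_n)\ge c>0$ with probability tending to one, by continuity of eigenvalues (Weyl). Hence ${\rm CD}^{(\mathcal{H})}\ge cn\gg\log n$.

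\emph{Rank-deficient case.} If $\mathcal{H}\notin\mathscr{H}^*$, pick a unit vector $x$ with $\Upsilon^*_{\mathcal{H}\cdot}x=0$. Set $y=\hat\Sigma_\varepsilon^{1/2}x/\|\hat\Sigma_\varepsilon^{1/2}x\|_2$. Then $A_ny\propto\hat\Upsilon_{\mathcal{H}\cdot}x=(\hat\Upsilon-\Upsilon^*)_{\mathcal{H}\cdot}x=O_p(n^{-1/2})$. The variational characterization of $\lambda_{\min}$ then yields ${\rm CD}^{(\mathcal{H})}\le Cn\|A_ny\|_2^2=O_p(1)<\log n$ with probability tending to one.

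A union bound over the at most $\binom{p_z}{p_d}$ subsets finishes the proof.

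\textbf{Main obstacle.} The delicate part is the rank-deficient bound. There we must check that the projection $\RM_{W^{(\mathcal{H})}}$ and the $\hat\Sigma_\varepsilon^{-1/2}$ scaling do not inflate the $O_p(n^{-1/2})$ estimation error. We avoid needing the exact Cragg--Donald limiting distribution by using the test-vector argument above, which only requires $\sqrt{n}$-consistency of $\hat\Upsilon$ and bounded eigenvalues of the rescaling matrices.
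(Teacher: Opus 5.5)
Your proof is correct and follows essentially the same route as the paper's. For $\hat{\mathcal{S}}$, you compare $O_p(1)$ $t$-ratios for irrelevant IVs against order-$\sqrt{n}$ ratios for relevant ones around the $\sqrt{\log n}$ threshold. For the CD statistic, you use a lower bound of order $cn$ via $\lambda_{\min}$ in the full-rank case, and a null-vector test argument giving $O_p(1)$ in the rank-deficient case. You are only slightly more explicit than the paper about how the $\hat\Sigma_\varepsilon^{-1/2}$ and $\RM_{W^{(\mathcal{H})}}$ rescalings enter, through $A_n$ and the rescaled test vector $y$.
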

\begin{proof}[Proof of Proposition \ref{prop: first stage}]
According to the classical central limit theorems for ordinary least squares, the estimators of reduced-form coefficients $\hat\Gamma$ and $\hat\Upsilon$ satisfy the following asymptotic normality 
\begin{equation}
    \dfrac{\hat\Upsilon_{k,j}-\Upsilon^*_{k,j}}{\hat{\rm SE}(\hat\Upsilon_{k,j})} \convd N(0,1),
\end{equation}
and the standard error $\hat{\rm SE}(\hat\Upsilon_{k,j}) = O_p(n^{-1/2})$.
Therefore, for any $k\in[p_z]$, if $k\notin \mathcal{S}^*$,  $\Upsilon^*_{kj} = 0$, and thus 
\[|\hat{\Upsilon}_{kj}/\hat{\rm SE}(\hat\Upsilon_{k,j})|  = o_p\left(\sqrt{ \log n}\right), \]
indicating that for any   absolute constant $c$,
\begin{equation}\label{eq: prob first stage 1}
\Pr(|\hat{\Upsilon}_{kj}/\hat{\rm SE}(\hat\Upsilon_{k,j})| > \sqrt{\log n} ) \to 0\text{ for any }k\notin\mathcal{S}^*. 
\end{equation}
Besides, recall that $\|\Upsilon^*_{k\cdot}\|_\infty \gg \sqrt{\log n/n}$ for any $k \in \mathcal{S}^*$  by Assumption \ref{assu: finite sample rank}. Given that $\hat{\rm SE}(\hat\Upsilon_{k,j}) = O_p(n^{-1/2})$, we have $\max_{j\in[p_d]}|\Upsilon^*_{k,j}|/\hat{\rm SE}(\hat\Upsilon_{k,j}) > 2\sqrt{\log n}$ with probability approaching one. Therefore, 
\[\begin{aligned} \max_{j\in[p_d]}|\hat{\Upsilon}_{kj}|/ \hat{\rm SE}(\hat\Upsilon_{k,j})&\geq \max_{j\in[p_d]}|\Upsilon_{kj}^*|/\hat{\rm SE}(\hat\Upsilon_{k,j}) - \max_{j\in[p_d]}|\hat{\Upsilon}_{kj} - \Upsilon_{kj}^* | /\hat{\rm SE}(\hat\Upsilon_{k,j}) \\
&\geq 2\sqrt{\log n} - O_p(1) \geq  \sqrt{\log n}
\end{aligned}\]
with probability approaching one. Therefore, 
\begin{equation}\label{eq: prob first stage 2} 
\Pr\left(\max_{j\in[p_d]}|\hat{\Upsilon}_{kj}/\hat{\rm SE}(\hat\Upsilon_{k,j})| > \sqrt{\log n} \right) \to 1\text{ for any }k\in\mathcal{S}^*. 
\end{equation}
The limits (\ref{eq: prob first stage 1}) and (\ref{eq: prob first stage 2}) indicate that $\Pr(\hat{\mathcal{S}} = \mathcal{S}^*) \to 1$ as $n\to\infty$. 

We then prove $\Pr\{\hat{\mathscr{H}} = \mathscr{H}^*\} \to 1$. When $\mathcal{H}\in\mathscr{H}^*$, $\lambda_{\min}( \Upsilon_{\mathcal{H}\cdot}^{*\top}  \Upsilon_{\mathcal{H}\cdot}^*)$ is bounded away from zero. In addition, standard limit theory yields that 
\[\lambda_{\min}(\hat\Upsilon_{\mathcal{H}\cdot}^\top \hat\Upsilon_{\mathcal{H}\cdot}) - \lambda_{\min}( \Upsilon_{\mathcal{H}\cdot}^{*\top}  \Upsilon_{\mathcal{H}\cdot}^*)= O_p(n^{-1/2}),\]
and the eigenvalues of $\hat\Sigma_{\varepsilon}^{-1/2}$ and $\left( Z_{\cdot \mathcal{H}}^\top \RM_{W^{(\mathcal{H})}} Z_{\cdot \mathcal{H}} \right)/n$ are bounded away from zero and infinity with probability approaching one. Therefore, for any $\mathcal{H}\in\mathscr{H}^*$, we have 
\[{\rm CD}(\mathcal{H}) \geq c\cdot n\lambda_{\min}(\hat\Upsilon_{\mathcal{H}\cdot}^\top \hat\Upsilon_{\mathcal{H}\cdot}) \geq c^\prime \cdot n \gg \log n\]
for some absolute constant $c$ with probability approaching  one. That means $\Pr\{\mathcal{H}\in\hat{\mathscr{H}}\} \to 1$. Also, for  any $\mathcal{H}\notin\mathscr{H}^*$, we have ${\rm rank}(\Upsilon^*_{\mathcal{H}\cdot}) \leq p_d - 1$. Therefore, there exists a nonzero vector $v\in\mathbb{R}^{p_d}$ such that $ \|v\|_2= 1$ and $\Upsilon^*_{\mathcal{H}\cdot}v = 0$. By the law of large numbers and central limit theorem for OLS, we have 
\[\hat \Upsilon_{\mathcal{H}\cdot}v = \hat \Upsilon_{\mathcal{H}\cdot}v  -\Upsilon^*_{\mathcal{H}\cdot}v = O_p(n^{-1/2}). \]
Therefore, 
\[{\rm CD}(\mathcal{H}) \leq C\cdot n\lambda_{\min}(\hat \Upsilon_{\mathcal{H}\cdot}^\top \hat \Upsilon_{\mathcal{H}\cdot}) \leq C\cdot nv^\top\hat \Upsilon_{\mathcal{H}\cdot}^\top \hat \Upsilon_{\mathcal{H}\cdot}v = O_p(1)\ll \log n\]
where $C$ is an absolute constant. Therefore, $\Pr\{\mathcal{H}\notin\hat{\mathscr{H}}\} \to 1$. We complete the proof of $\Pr\{\hat{\mathscr{H}} = \mathscr{H}^*\} \to 1$.
\end{proof}
\begin{proof}[Proof of Proposition \ref{prop: m star}]
Define the event 
\begin{equation}\label{eq: def E1}
    \mathcal{E}_1 = \left\{\hat{\mathcal{S}} ={\mathcal{S}}^*,\hat{\mathscr{H}} = \mathscr{H}^* \right\}.
\end{equation} 
We have $\Pr\{\mathcal{E}_1\}\to 1$ by Proposition \ref{prop: first stage}.  
Let $\mathcal{O}$ denote the observed data, and define 
    \begin{equation}\label{eq: U hat U m def}
        \hat U =\left(\textbf{1}\{k\notin\hat{\mathcal{H}}_{\ell^*}\}\cdot \frac{\hat\pi_k^{({\ell^*}) } - \pi_k^* } {\hat{\rm SE}(\hat\pi_k^{({\ell^*}) })}\right)_{k\in\hat{\mathcal{S}}}\text{ and }U^{[m]} = -\xi^{[m]}.
    \end{equation}
    Furthermore, define the following event for the data $\mathcal{O}$:
\begin{equation}
\begin{aligned}
    \mathcal{E}_2 &= \left\{\|\hat U\|_\infty \leq   z_{1-\alpha_0/(2|\mathcal{S}^*|)}\right\},\\ 
    \mathcal{E} &= \mathcal{E}_1\cap\mathcal{E}_2.\\ 
\end{aligned}
\end{equation}
By central limit theorem we can show that $\mathop{\lim\inf}\limits_{n\to\infty}\Pr(\mathcal{E}_2) \geq 1- \alpha_0$. Therefore, 
\[\mathop{\lim\inf}\limits_{n\to\infty}\Pr(\mathcal{E}) \geq 1 - \alpha_0.\]
Recall that $\widehat{U}$ is a function of the observed data $\mathcal{O}$. Also, $U^{[m]}_{k} = -\xi^{[m]}_k \sim \mathcal{N}(0,1)$. Let $f(\cdot \mid \mathcal{O})$ denote the conditional density function of $U^{[m]}$ given the data $\mathcal{O}$, that is,

$$
f\left(U^{[m]}=U \mid \mathcal{O}\right)=\prod_{k\in \hat{\mathcal{S}}} \frac{1}{\sqrt{2 \pi}} \exp \left(-\frac{U_{k}^{2}}{2}\right)
$$
On the event $\mathcal{E}$, we have
$$
\sum_{k\in \hat{\mathcal{S}}} \frac{\widehat{U}_{k}^{2}}{2} \leq \frac{|\mathcal{S}^*|}{2}\left[z_{1-\alpha_0/(2|\mathcal{S}^*|)}\right]^{2},
$$
and further establish
\begin{equation}\label{eq: lower f}
f\left(U^{[m]}=\widehat{U} \mid \mathcal{O}\right) \cdot \mathbf{1}_{\mathcal{O} \in \mathcal{E}} \geq c(\alpha_0):=\left(\frac{1}{\sqrt{2 \pi}}\right)^{|\mathcal{S}^*|} \exp \left(-\frac{ |\mathcal{S}^*|}{2}\left[z_{1-\alpha_0/(2|\mathcal{S}^*|)}\right]^{2}\right) 
\end{equation}
We use $\Pr(\cdot \mid \mathcal{O})$ to denote the conditional probability with respect to the observed data $\mathcal{O}$. Note that
$$
\begin{aligned}
& \Pr\left(\min _{1 \leq m \leq M}\left\|U^{[m]}-\widehat{U}\right\|_{\infty} \leq \rho_n(M) \mid \mathcal{O}\right) \\
& =1-\Pr\left(\min _{1 \leq m \leq M}\left\|U^{[m]}-\widehat{U}\right\|_{\infty} \geq \rho_n(M) \mid \mathcal{O}\right) \\
& =1-\prod_{m=1}^{M}\left[1-\Pr\left(\left\|U^{[m]}-\widehat{U}\right\|_{\infty} \leq \rho_n(M) \mid \mathcal{O}\right)\right] \\
& \geq 1-\exp \left[-\sum_{m=1}^{M} \Pr\left(\left\|U^{[m]}-\widehat{U}\right\|_{\infty} \leq \rho_n(M) \mid \mathcal{O}\right)\right]
\end{aligned}
$$
where the second equality follows from the conditional independence of $\left\{U^{[m]}\right\}_{1 \leq m \leq M}$ given the observed data $\mathcal{O}$,  and the last inequality follows from $1-x \leq e^{-x}$. Applying the inequality above, we can further establish
\begin{equation}
\begin{aligned}\label{eq: p hat U Um}
& \Pr\left(\min _{1 \leq m \leq M}\left\|U^{[m]}-\widehat{U}\right\|_{\infty} \leq \rho_n(M) \mid \mathcal{O}\right) \cdot \mathbf{1}_{\mathcal{O} \in \mathcal{E}} \\
& \geq\left(1-\exp \left[-\sum_{m=1}^{M} \Pr\left(\left\|U^{[m]}-\widehat{U}\right\|_{\infty} \leq \rho_n(M) \mid \mathcal{O}\right)\right]\right) \cdot \mathbf{1}_{\mathcal{O} \in \mathcal{E}}  \\
& =1-\exp \left[-\sum_{m=1}^{M} \Pr\left(\left\|U^{[m]}-\widehat{U}\right\|_{\infty} \leq \rho_n(M) \mid \mathcal{O}\right) \cdot \mathbf{1}_{\mathcal{O} \in \mathcal{E}}\right]
\end{aligned}
\end{equation}
For the remainder of the proof, we establish a lower bound for
\begin{equation}\label{eq: p hat u inicator}
\Pr\left(\left\|U^{[m]}-\widehat{U}\right\|_{\infty} \leq \rho_n(M) \mid \mathcal{O}\right) \cdot \mathbf{1}_{\mathcal{O} \in \mathcal{E}} 
\end{equation}
and apply (\ref{eq: p hat U Um}) to establish a lower bound for
$$
\Pr\left(\min _{1 \leq m \leq M}\left\|U^{[m]}-\widehat{U}\right\|_{\infty} \leq \rho_n(M) \mid \mathcal{O}\right)
$$
We further decompose the targeted probability in (\ref{eq: p hat u inicator}) as
\begin{equation}\label{eq: decom}
\begin{aligned}
& \Pr\left(\left\|U^{[m]}-\widehat{U}\right\|_{\infty} \leq \rho_n(M) \mid \mathcal{O}\right) \cdot \mathbf{1}_{\mathcal{O} \in \mathcal{E}} \\
= & \int f\left(U^{[m]}=U \mid \mathcal{O}\right) \cdot \mathbf{1}_{\left\{\|U-\widehat{U}\|_{\infty} \leq \rho_n(M)\right\}} d U \cdot \mathbf{1}_{\mathcal{O} \in \mathcal{E}} \\
= & \int f\left(U^{[m]}=\widehat{U} \mid \mathcal{O}\right) \cdot \mathbf{1}_{\left\{\|U-\hat{U}\|_{\infty} \leq \rho_n(M)\right\}} d U \cdot \mathbf{1}_{\mathcal{O} \in \mathcal{E}} \\
& +\int\left[f\left(U^{[m]}=U \mid \mathcal{O}\right)-f\left(U^{[m]}=\widehat{U} \mid \mathcal{O}\right)\right] \cdot \mathbf{1}_{\left\{\|U-\widehat{U}\|_{\infty} \leq \rho_n(M)\right\}} d U \cdot \mathbf{1}_{\mathcal{O} \in \mathcal{E}} .
\end{aligned}
\end{equation}
By (\ref{eq: lower f}), we establish
\begin{equation}\label{eq: lower f 2}
\begin{aligned}
& \int f\left(U^{[m]}=\widehat{U} \mid \mathcal{O}\right) \cdot \mathbf{1}_{\left\{\|U-\widehat{U}\|_{\infty} \leq \rho_n(M)\right\}} d U \cdot \mathbf{1}_{\mathcal{O} \in \mathcal{E}} \\
& \geq c(\alpha_0) \cdot \int \mathbf{1}_{\left\{\|U-\widehat{U}\|_{\infty} \leq \rho_n(M)\right\}} d U \cdot \mathbf{1}_{\mathcal{O} \in \mathcal{E}} \\
& = c(\alpha_0) \cdot\left[2 \rho_n(M)\right]^{|\mathcal{S}^*|} \cdot \mathbf{1}_{\mathcal{O} \in \mathcal{E}} .
\end{aligned}
\end{equation}
By Taylor expansion, there exists a $t \in(0,1)$ such that
$$
f\left(U^{[m]}=U \mid \mathcal{O}\right)-f\left(U^{[m]}=\widehat{U} \mid \mathcal{O}\right)=[\nabla f(\widehat{U}+t(U-\widehat{U}))]^{\top}(U-\widehat{U})
$$
with the gradient function 
$$
\nabla f(u)=-uf(u) = -\frac{1}{(2\pi)^{|\mathcal{S}^*|/2}}u\cdot \exp(-\|u\|_2^2/2).
$$
Since $\|\nabla f\|_{2}$ is  bounded from above, there exists a positive constant $C>0$ such that
$$
\left|f\left(U^{[m]}=U \mid \mathcal{O}\right)-f\left(U^{[m]}=\widehat{U} \mid \mathcal{O}\right)\right| \leq C \sqrt{|\mathcal{S}^*|}\|U-\widehat{U}\|_{\infty}
$$
Then we establish
\begin{equation}
\begin{aligned}\label{eq: f decom err}
& \left|\int\left[f\left(U^{[m]}=U \mid \mathcal{O}\right)-f\left(U^{[m]}=\widehat{U} \mid \mathcal{O}\right)\right] \cdot \mathbf{1}_{\left\{\|U-\widehat{U}\|_{\infty} \leq \rho_n(M)\right\}} d U \cdot \mathbf{1}_{\mathcal{O} \in \mathcal{E}}\right| \\
& \leq C \sqrt{|\mathcal{S}^*|} \cdot \rho_n(M) \cdot \int \mathbf{1}_{\left\{\|U-\widehat{U}\|_{\infty} \leq \rho_n(M)\right\}} d U \cdot \mathbf{1}_{\mathcal{O} \in \mathcal{E}}  \\
& =C \sqrt{|\mathcal{S}^*|} \cdot \rho_n(M) \cdot\left[2 \rho_n(M)\right]^{|\mathcal{S}^*|} \cdot \mathbf{1}_{\mathcal{O} \in \mathcal{E}}
\end{aligned}
\end{equation}
Since $\rho_n(M) \rightarrow 0$ and $c(\alpha_0)$ is a positive constant, then there exists a positive integer $M_{0}$ such that
$$
C \sqrt{|\mathcal{S}^*|} \cdot \rho_n(M) \leq \frac{1}{2} c(\alpha_0) \quad \text { for } \quad M \geq M_{0}
$$
We combine the above inequality, (\ref{eq: decom}), (\ref{eq: lower f 2}) and (\ref{eq: f decom err}) and obtain that for $M \geq M_{0}$,
$$
\begin{aligned}
\Pr\left(\|U^{[m]}-\widehat{U}\|_{\infty} \leq \rho_n(M) \mid \mathcal{O}\right) \cdot \mathbf{1}_{\mathcal{O} \in \mathcal{E}} 
&\geq \frac{1}{2} c(\alpha_0) \cdot\left[2 \rho_n(M)\right]^{|\mathcal{S}^*|} \cdot \mathbf{1}_{\mathcal{O} \in \mathcal{E}} .
\end{aligned}
$$
Together with (\ref{eq: p hat U Um}), we establish that for $M \geq M_{0}$,
\begin{align*}
& \Pr\left(\min _{1 \leq m \leq M}\left\|U^{[m]}-\widehat{U}\right\|_{\infty} \leq \rho_n(M) \mid \mathcal{O}\right) \cdot \mathbf{1}_{\mathcal{O} \in \mathcal{E}} \\
& \geq 1-\exp \left[-M \cdot \frac{1}{2} c(\alpha_0) \cdot\left[2 \rho_n(M)\right]^{|\mathcal{S}^*|} \cdot \mathbf{1}_{\mathcal{O} \in \mathcal{E}}\right] \\
& =\left(1-\exp \left[-M \cdot \frac{1}{2} c(\alpha_0) \cdot\left[2 \rho_n(M)\right]^{|\mathcal{S}^*|}\right]\right) \cdot \mathbf{1}_{\mathcal{O} \in \mathcal{E}}.
\end{align*}
In addition, we show that on the event $\mathcal{E}$, 
\begin{equation}\label{eq: m not in M}
\min_{m\notin\mathcal{M}}\|U^{[m]}-\hat U\|_\infty > \rho_n(M).
\end{equation}
This can be verified by 
\[\begin{aligned}
\min_{m\notin\mathcal{M}}\|U^{[m]}-\hat U\|_\infty &\geq \min_{m\notin\mathcal{M}}\|\xi^{[m]}\|_\infty - \|\hat U\|_\infty 
> 1.1z_{1-\alpha_0/(2|\mathcal{S}^*|)} - z_{1-\alpha_0/(2|\mathcal{S}^*|)} \\
&=0.1z_{1-\alpha_0/(2|\mathcal{S}^*|)} > \rho_n(M)
\end{aligned}\]
when $M$ is sufficiently large, where the second inequality applies (\ref{eq: screen sample}) and $\mathcal{E}_2$. (\ref{eq: m not in M}) implies that
\begin{equation}\label{eq: equiv min}
    \{\min_{m\in\mathcal{M}}\|U^{[m]}-\hat U\|_\infty\leq \rho_n(M)\} = \{\min_{1\leq m\leq M}\|U^{[m]}-\hat U\|_\infty\leq \rho_n(M)\}.
\end{equation} 
We use $\E_{\mathcal{O}}$ to denote the expectation taken with respect to the observed data $\mathcal{O}$. We further integrate with respect to $\mathcal{O}$ and establish that for $M \geq M_{0}$,
$$
\begin{aligned}
\ \ \ \ &\Pr\left(\min _{m\in\mathcal{M}}\left\|U^{[m]}-\widehat{U}\right\|_{\infty} \leq \rho_n(M)\right) \\
& =\E_{\mathcal{O}}\left[\Pr\left(\min _{m\in\mathcal{M}}\left\|U^{[m]}-\widehat{U}\right\|_{\infty} \leq \rho_n(M) \mid \mathcal{O}\right)\right] \\
& \geq \E_{\mathcal{O}}\left[\Pr\left(\min _{m\in\mathcal{M}}\left\|U^{[m]}-\widehat{U}\right\|_{\infty} \leq \rho_n(M) \mid \mathcal{O}\right) \cdot \mathbf{1}_{\mathcal{O} \in \mathcal{E}}\right] \\
& =\E_{\mathcal{O}}\left[\Pr\left(\min _{1 \leq m \leq M}\left\|U^{[m]}-\widehat{U}\right\|_{\infty} \leq \rho_n(M) \mid \mathcal{O}\right) \cdot \mathbf{1}_{\mathcal{O} \in \mathcal{E}}\right] \\
& \geq \E_{\mathcal{O}}\left[\left(1-\exp \left[-M \cdot \frac{1}{2} c(\alpha_0) \cdot\left[2 \rho_n(M)\right]^{|\mathcal{S}^*|}\right]\right) \cdot \mathbf{1}_{\mathcal{O} \in \mathcal{E}}\right],
\end{aligned}
$$
where the fourth row applies (\ref{eq: equiv min}). 

Taking $\rho_n(M)=C_0 \left[\frac{  \log n}{ M}\right]^{\frac{1}{2|\hat{\mathcal{S}}|}}$ with $C_0=\frac{1}{2}(\frac{2}{c(\alpha_0)})^{1/(| {\mathcal{S}}^*|)}$, we establish that for $M \geq M_{0}$,

$$
\Pr\left(\min _{m\in\mathcal{M}}\left\|U^{[m]}-\widehat{U}\right\|_{\infty} \leq \rho_n(M)\right) \geq\left(1-n^{-1}\right) \cdot \Pr(\mathcal{E}).
$$
Therefore, 
\[\begin{aligned}
\mathop{\lim\inf}_{n\to\infty}\lim_{M\to\infty}\Pr\left(\min_{m\in\mathcal{M}}\|\hat U - U^{[m]}\|_\infty \leq \rho_n(M) \right)  \geq \lim\inf_{n\to\infty}\Pr(\mathcal{E}) \geq 1 - \alpha_0.
\end{aligned}\]
This concludes the proof of Proposition \ref{prop: m star}.
\end{proof}

\begin{proof}[Proof of Theorem \ref{thm: cover}]
Define 
\[\hat V = (\hat V_k)_{k\in\hat{\mathcal{S}}},\ \hat V_k = \hat U_k{\bf 1}\{k\in\hat{\mathcal{S}}\backslash\hat{\mathcal{H}}_{\ell^*}\},\]
and 
\[V^{[m]} = (V^{[m]}_k)_{k\in\hat{\mathcal{S}}},\ V^{[m]}_k = U^{[m]}_k{\bf 1}\{k\in\hat{\mathcal{S}}\backslash\hat{\mathcal{H}}_{\ell^*}\},\]
where $\hat U$ and $U^{[m]}$ are defined in (\ref{eq: U hat U m def}). Furthermore, define the event 
\begin{equation}\label{eq: event M}
    \mathcal{E}_3 = \mathcal{E}_1\cap\{\min_{m\in\mathcal{M}}\|\hat V - V^{[m]}\|_\infty \leq \rho_n(M)\},
\end{equation}
where $\mathcal{E}_1$ is defined in (\ref{eq: def E1}). We have $\Pr\{\mathcal{E}_3\}\geq 1-\alpha_0 + o(1)$ by Propositions \ref{prop: first stage} and \ref{prop: m star}. Under $\mathcal{E}_3$, let $m^*$ denote the index of sampling that satisfies  
\[\|\hat V - V^{[m^*]}\|_\infty \leq \rho_n(M).\]
Define $0/0=0$ and $1/0=\infty$ to accommodate zero standard errors. Under $\mathcal{E}_3$, for any $k\in\mathcal{V}^*$, we have $\pi_k^*=0$ and thus 
\begin{equation}\label{eq: argue valid 1}
\dfrac{|\tilde\pi_k^{(\ell^*,m^*)}|}{\hat{\rm SE}(\hat\pi_k^{(\ell^*)})} = \dfrac{|\tilde\pi_k^{(\ell^*,m^*)}-\pi_k^*|}{\hat{\rm SE}(\hat\pi_k^{(\ell^*)})} \leq \|\hat V - V^{[m^*]}\|_\infty \leq \rho_n(M),
\end{equation}
and thus $k\in\tilde{\mathcal{V}}^{(\ell^*,m^*)}$. This implies $\mathcal{V}^*\subset\tilde{\mathcal{V}}^{(\ell^*,m^*)}$ and hence $|\tilde{\mathcal{V}}^{(\ell^*,m^*)}|\geq |\mathcal{V}^*| > (|\mathcal{S}^*|+p_d-1)/2$. Under $\mathcal{E}_1$, we have $\ell^*\in\hat{\mathcal{T}}^{\rm GenMaj}_{m^*}$, which means $\tilde{\mathcal{V}}^{(\ell^*,m^*)}$ is used to construct SCI. 
Then 
\begin{equation}\label{eq: prob cover bound 1}
\begin{aligned}
    \Pr\{ \beta^*_1 \in {\rm SCI}(\alpha) \} &\geq \Pr\{ (\beta^*_1 \in {\rm SCI}(\alpha))\cap \mathcal{E}_3 \} \\&\geq \Pr\{ (\beta^*_1 \in {\rm CI}(\tilde{\mathcal{V}}^{(\ell^*,m^*)},\alpha-\alpha_0))\cap \mathcal{E}_3 \}.
\end{aligned}
\end{equation}
If 
\begin{equation}\label{eq: valid proof}
\Pr\{\tilde{\mathcal{V}}^{(\ell^*,m^*)} = \mathcal{V}^*|\mathcal{E}_3\}\to 1,
\end{equation} 
we have $\Pr\{(\tilde{\mathcal{V}}^{(\ell^*,m^*)} \neq \mathcal{V}^*)\cap \mathcal{E}_3\}\to0$ and thus
\begin{equation}\label{eq: prob lb limit}
\begin{aligned}
    &\ \ \ \ \Pr\{ (\beta^*_1 \in {\rm CI}(\tilde{\mathcal{V}}^{(\ell^*,m^*)},\alpha-\alpha_0))\cap \mathcal{E}_3 \} \\
    &=\Pr\{ (\beta^*_1 \in {\rm CI}(\tilde{\mathcal{V}}^{(\ell^*,m^*)},\alpha-\alpha_0))\cap \mathcal{E}_3 \cap (\tilde{\mathcal{V}}^{(\ell^*,m^*)} = \mathcal{V}^*)\}\\
    &\ \ \ \ +\Pr\{ (\beta^*_1 \in {\rm CI}(\tilde{\mathcal{V}}^{(\ell^*,m^*)},\alpha-\alpha_0))\cap \mathcal{E}_3 \cap(\tilde{\mathcal{V}}^{(\ell^*,m^*)} \neq \mathcal{V}^*)\} \\
    &= \Pr\{ (\beta^*_1 \in {\rm CI}(\mathcal{V}^*,\alpha-\alpha_0))\cap \mathcal{E}_3 \} +o(1)\\
    &\geq \Pr\{ (\beta^*_1 \in {\rm CI}(\mathcal{V}^*,\alpha-\alpha_0)) \} - \left(1-\Pr(  \mathcal{E}_3)\right)+o(1). 
\end{aligned}
\end{equation} 
By the asymptotic normality of TSLS estimators, we have 
\begin{equation}\label{eq: asymp norm tsls}
    \lim_{n\to\infty}\Pr\{ (\beta^*_1 \in {\rm CI}(\mathcal{V}^*,\alpha-\alpha_0)) \} = 1- \alpha + \alpha_0.
\end{equation}
 By (\ref{eq: prob cover bound 1}), (\ref{eq: prob lb limit}), and (\ref{eq: asymp norm tsls}), we have 
\[\begin{aligned}
    &\ \ \ \  \mathop{\lim\inf}_{n\to\infty}\lim_{M\to\infty}\Pr\{ \beta^*_1 \in {\rm SCI}(\alpha) \} \\
    &\geq \mathop{\lim\inf}_{n\to\infty}\lim_{M\to\infty}\left[\Pr\{ (\beta^*_1 \in {\rm CI}(\mathcal{V}^*,\alpha-\alpha_0)) \} - \left(1-\Pr(  \mathcal{E}_3)\right)\right]\\
    &= (1 - \alpha + \alpha_0) - 1 +\mathop{\lim\inf}_{n\to\infty}\lim_{M\to\infty}\Pr(\mathcal{E}_3)   \\
    &\geq 1 - \alpha, 
\end{aligned} \]
where the last step applies Proposition \ref{prop: m star}. 
\par It then suffices to show (\ref{eq: valid proof}). Define $0/0=0$ and $1/0=\infty$ to accommodate zero standard errors. Under $\mathcal{E}_3$, for any $k\in\mathcal{V}^*$, by (\ref{eq: argue valid 1}) we have $k \in \tilde{\mathcal{V}}^{(\ell^*,m^*)}$. For $k\notin\mathcal{V}^*$, we have $\pi_k^*\neq 0$. 
Define $t_{\min} = \min_{k\in \hat{\mathcal{S}}\backslash \mathcal{V}^*}(|\pi_k^*|/\hat{\rm SE}(\hat{\pi}_k^{(\ell^*)}))$. With a sufficiently large $M$, we have $t_{\min} > 2\cdot\rho_n(M)$. 
Therefore, 
\begin{equation}\label{eq: argue valid 2}
\dfrac{|\tilde\pi_k^{(\ell^*,m^*)}|}{\hat{\rm SE}(\hat\pi_k^{(\ell^*)})} \geq t_{\min}  - \|\hat V - V^{[m^*]}\|_\infty > 2\rho_n(M)-\rho_n(M) = \rho_n(M),
\end{equation}
which implies $k\notin\tilde{\mathcal{V}}^{(\ell^*,m^*)}$. Therefore, (\ref{eq: argue valid 1}) and (\ref{eq: argue valid 2}) imply (\ref{eq: valid proof}), which ends the proof of Theorem \ref{thm: cover}.  
\end{proof}

\begin{proof}[Proof of Theorem \ref{thm: length}] Note that under $\mathcal{E}_1$, for all $k\in\mathcal{S}^*$, $\ell\in[ |\mathscr{H}^*|]$ and any $m\in\mathcal{M}$, 
\begin{equation}\label{eq: tilde pi hat pi}
    \begin{aligned}
    |\tilde\pi_k^{(\ell,m)} - \hat\pi_k^{(\ell)}|  = |\xi^{[m]}_k|\cdot\hat{\rm SE}(\hat\pi_k^{(\ell)}) &\leq \hat{\rm SE}(\hat\pi_k^{(\ell)})\cdot 1.1z_{1-\alpha_0/(2|\hat{\mathcal{S}}|)}\\
    &\leq \hat{\rm SE}(\hat\pi_k^{(\ell)})\cdot C\log(2|\hat{\mathcal{S}}|/\alpha_0) =  \hat{\rm SE}(\hat\pi_k^{(\ell)})\cdot C\log(2|\mathcal{S}^*|/\alpha_0)
\end{aligned}
\end{equation} 
w.p.a.1 for some absolute constant $C>0$. In addition, define 
\begin{equation}\label{eq: E4 event}
    \mathcal{E}_4 = \bigcap_{k\in\mathcal{S}^*,\ell\in[|\mathscr{H}^*|]}\mathcal{E}_{4,k \ell},\text{ where }\mathcal{E}_{4,k \ell} = \left\{| \hat\pi_k^{(\ell)} - \pi^{(\ell)}_k|  \leq  \hat{\rm SE}(\hat\pi_k^{(\ell)}) z_{1-\alpha_0/(8|\mathscr{H}^*||\mathcal{S}^*|)} \right\}.
\end{equation} 
By asymptotic normality we have 
\begin{equation}\label{eq: prob E4}
    \begin{aligned}
    \Pr\left\{\mathcal{E}_4\right\} &= 1 - \Pr\{ \bigcup_{k\in\mathcal{S}^*,\ell\in[|\mathscr{H}^*|]}\mathcal{E}_{4,k\ell}^c\} \\
    &\geq 1-|\mathscr{H}^*||\mathcal{S}^*|\cdot \alpha_0 /(4|\mathscr{H}^*||\mathcal{S}^*|) + o(1) \geq 1-  \alpha_0 / 3
\end{aligned}
\end{equation} 
with a sufficiently large $n$. Under $\mathcal{E}_1$, $\mathcal{E}_4$, and (\ref{eq: tilde pi hat pi}), we have for any $m\in\mathcal{M}$
\begin{equation}\label{eq: tilde pi - pi}
\begin{aligned}
    |\tilde\pi_k^{(\ell,m)} - \pi_k^{(\ell)}| &\leq  \hat{\rm SE}(\hat\pi_k^{(\ell)})\cdot \left[C\log(2|\mathcal{S}^*|/\alpha_0) + z_{1-\alpha_0/(8|\mathscr{H}^*||\mathcal{S}^*|)} \right]\\ &\leq \hat{\rm SE}(\hat\pi_k^{(\ell)})\cdot C\log(8|\mathscr{H}^*||\mathcal{S}^*|/\alpha_0).
\end{aligned}
\end{equation} 
Define 
\begin{equation}\label{eq: E5 event}
    \mathcal{E}_5 = \bigcap_{k\in\mathcal{S}^*,\ell\in[|\mathscr{H}^*|]}\mathcal{E}_{5,k\ell},\text{ where }\mathcal{E}_{5,k\ell} = \left\{   \max_{m\in\mathcal{M}}\hat{\rm SE}(\hat\beta_1^{(\tilde{\mathcal{V}}^{(\ell,m)})}) + \hat{\rm SE}(\hat\pi_k^{(\ell)})  \leq \dfrac{C}{\sqrt{n}}  \right\}.
\end{equation} 
By standard limit theory we can deduce that $\hat{\rm SE}(\hat\pi_k^{(\ell)}) = O_p(n^{-1/2})$. Combining with
 (\ref{eq: uniform bound SE}) that will be proved later, we have $\lim_{n\to\infty}\Pr\left\{\mathcal{E}_5\right\} \to 1$. Recall that Proposition \ref{prop: first stage} implies $\lim_{n\to\infty}\Pr\left\{\mathcal{E}_1\right\} \to 1$. Therefore, 
\begin{equation}\label{eq: prob E5}
    \begin{aligned}
    \Pr\left\{\mathcal{E}_1\cap\mathcal{E}_5\right\} \to 1.
\end{aligned}
\end{equation} 
 Suppose that  $n$ and $M$ are sufficiently large. When $k\in\tilde{\mathcal{V}}^{(\ell,m)}$, by the definition in (\ref{eq: sampling V hat}) we have 
 \begin{equation}\label{eq: tilde pi bound}
     |\tilde\pi_k^{(\ell,m)}| \leq \hat{\rm SE}(\hat\pi_k^{(\ell)})\cdot \rho_n(M)
 \end{equation} Thus, under $\mathcal{E}_1$, $\mathcal{E}_4$, and $\mathcal{E}_5$, for any $k\in\tilde{\mathcal{V}}^{(\ell,m)}$ we have 
\[\begin{aligned}
| \pi_k^{(\ell)}| &\leq |\tilde\pi_k^{(\ell,m)}| + |\tilde\pi_k^{(\ell,m)} -  \pi_k^{(\ell)}| 
\\ 
&\leq \hat{\rm SE}(\hat\pi_k^{(\ell)})\cdot\left[\rho_n(M)+C\log(8|\mathcal{S}^*|/\alpha_0)\right] \leq   \dfrac{C_1\log(C_2/\alpha_0)}{\sqrt{n}} 
\end{aligned}\]
for some absolute constants $C_1$ and $C_2$, where the second inequality applies  (\ref{eq: tilde pi - pi}), (\ref{eq: tilde pi bound}) and $\mathcal{E}_5$.   Therefore, if $\ell\in\hat{\mathcal{T}}^{\rm GenMaj}_m$, we have  $|\tilde{\mathcal{V}}^{(\ell,m)}|>\frac{|\mathcal{S}^*|+p_d-1}{2}$, which implies  
\[\left|\left\{k\in\mathcal{S}^*:|\pi_k^{(\ell)}|\leq \dfrac{C_1\log(C_2/\alpha_0)}{\sqrt{n}}\right\}\right|>\frac{|\mathcal{S}^*|+p_d-1}{2}.\]
By the condition (\ref{eq: finite plurality}) in Assumption \ref{assume: finite multiple plurality},  we have $\ell \notin \mathcal{SI}$. Therefore, under $\mathcal{E}_1$, $\mathcal{E}_4$, and $\mathcal{E}_5$, we have $\hat{\mathcal{T}}^{\rm GenMaj}_m\cap \mathcal{SI}=\varnothing$, which means if $\ell\in\hat{\mathcal{T}}^{\rm GenMaj}_m$ for some $m\in\mathcal{M}$, we have  
\begin{equation}\label{eq: pi k * bound}
   |\pi_k^*| \leq \dfrac{C_1\log(C_2/\alpha_0)}{\sqrt{n}} \text{ for all }k \in \hat{\mathcal{H}}_\ell. 
\end{equation}
Therefore, define 
\begin{equation*}
    \mathcal{E}_6 = \bigcap_{m\in\mathcal{M},\ell\in\hat{\mathcal{T}}^{\rm GenMaj}_m}\left\{\|\pi^*_{\hat{\mathcal{H}}_\ell}\|_\infty\leq \frac{C_1\log(C_2/\alpha_0)}{\sqrt{n}}\right\},
\end{equation*}
and the discussions from (\ref{eq: tilde pi bound}) to (\ref{eq: pi k * bound}) result in
\begin{equation}\label{eq: prob E6}
    \Pr\{\mathcal{E}_6\} \geq \Pr\{\mathcal{E}_1\cap\mathcal{E}_4\cap \mathcal{E}_5\} \geq 1 - \frac{\alpha_0}{3} + o(1). 
\end{equation}

We next state the following Lemma \ref{lem: local to true}, claiming that locally invalid IVs satisfying (\ref{eq: pi k * bound}) will finally produce TSLS estimators that are local to the true $\beta^*$. Recall that $\hat\beta^{(\mathcal{H})}$ in (\ref{eq: def generic TSLS}) denotes the TSLS estimator using the IVs in $\mathcal{H}$. Define 
\begin{equation}\label{eq: def E7}
    \mathcal{E}_7 = \bigcap_{m\in\mathcal{M},\ell\in\hat{\mathcal{T}}^{\rm GenMaj}_m}\{\|\hat\beta^{(\tilde{\mathcal{V}}^{(\ell,m)})}-\beta^*\|_\infty\leq C\log(C/\alpha_0)/\sqrt{n}\}
\end{equation}
where $C$ is an absolute constant.
\begin{Lemma}\label{lem: local to true}Under the conditions of Theorem \ref{thm: length},  we have 
\begin{equation}\label{eq: Prob E6 E7}
\mathop{\lim\inf}_{n\to\infty}\lim_{M\to\infty} \Pr\{\mathcal{E}_7 \} \geq \mathop{\lim\inf}_{n\to\infty}\lim_{M\to\infty} \Pr\{\mathcal{E}_6 \} - \alpha_0/3.
\end{equation}
\end{Lemma}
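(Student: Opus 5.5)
We prove Lemma \ref{lem: local to true} by showing that, on $\mathcal{E}_6$ intersected with a high-probability event controlling a finite family of TSLS estimators, every TSLS estimator that enters the SCI lies within $C\log(C/\alpha_0)/\sqrt{n}$ of $\beta^*$. The key observation is that $\tilde{\mathcal{V}}^{(\ell,m)}$ always ranges over subsets of the finite set $\mathcal{S}^*$ (on $\mathcal{E}_1$), so there are at most $2^{|\mathcal{S}^*|}$ distinct TSLS estimators $\hat\beta^{(\mathcal{A})}$, $\mathcal{A}\subset\mathcal{S}^*$, no matter how large $M$ is. Limits in $M$ therefore create no uniformity issue once we work with this finite family.

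\textbf{Step 1 (localizing the used IV set).} On $\mathcal{E}_1\cap\mathcal{E}_4\cap\mathcal{E}_5$, the argument preceding (\ref{eq: pi k * bound}) gives $|\pi_k^{(\ell)}|\le C_1\log(C_2/\alpha_0)/\sqrt n$ for every $k\in\tilde{\mathcal{V}}^{(\ell,m)}$ with $\ell\in\hat{\mathcal{T}}^{\rm GenMaj}_m$. On $\mathcal{E}_6$, moreover, $\ell\notin\mathcal{SI}$, so $\|\pi^*_{\mathcal{H}_\ell}\|_\infty\lesssim\log(C/\alpha_0)/\sqrt n$.

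Since $\hat\beta^{(\mathcal{H}_\ell)}$ is exactly identified, its probability limit solves $\Gamma^*_{\mathcal{H}_\ell}=\Upsilon^*_{\mathcal{H}_\ell\cdot}\beta$. Hence the pseudo-true center is $\beta^{(\ell)}=\beta^*+(\Upsilon^*_{\mathcal{H}_\ell\cdot})^{-1}\pi^*_{\mathcal{H}_\ell}$, and Assumption \ref{assu: finite sample rank}(b) gives $\|\beta^{(\ell)}-\beta^*\|\lesssim\log(C/\alpha_0)/\sqrt n$. Combining this with $\pi^{(\ell)}=\Gamma^*-\Upsilon^*\beta^{(\ell)}=\pi^*-\Upsilon^*(\beta^{(\ell)}-\beta^*)$, we get $|\pi^*_k|\lesssim\log(C/\alpha_0)/\sqrt n$ for every $k\in\tilde{\mathcal{V}}^{(\ell,m)}$. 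In words, every IV treated as valid in a retained set is at most locally invalid.

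\textbf{Step 2 (TSLS with locally invalid IVs).} Fix $\mathcal{A}\subset\mathcal{S}^*$ containing a just-identifying subset, so that $\Upsilon^*_{\mathcal{A}\cdot}$ has full column rank; this is guaranteed because $\mathcal{H}_\ell\subset\tilde{\mathcal{V}}^{(\ell,m)}$. Write $Y=D\beta^*+Z_{\cdot\mathcal{A}^c}\pi^*_{\mathcal{A}^c}+Z_{\cdot\mathcal{A}}\pi^*_{\mathcal{A}}+X\varphi^*+u$. Since the regressors $Z_{\cdot\mathcal{A}^c}$ absorb $\pi^*_{\mathcal{A}^c}$, we obtain
\[\hat\theta^{(\mathcal{A})}-\theta^{(\mathcal{A})}=(D^{(\mathcal{A})\top}\RP_WD^{(\mathcal{A})})^{-1}D^{(\mathcal{A})\top}\RP_W(Z_{\cdot\mathcal{A}}\pi^*_{\mathcal{A}}+u).\]
The Gram matrix divided by $n$ converges to a positive definite limit by Assumptions \ref{assu: distribution} and \ref{assu: finite sample rank}. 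The bias term is therefore $O_p(\|\pi^*_{\mathcal{A}}\|_\infty)$, and the noise term is asymptotically normal at rate $n^{-1/2}$. For each of the finitely many $\mathcal{A}$, we choose the event $\{\|\hat\beta^{(\mathcal{A})}-\beta^*-\text{bias}\|_\infty\le z_{1-\alpha_0/(6\cdot 2^{|\mathcal{S}^*|}p_d)}\,C/\sqrt n\}$. A union bound gives total probability at least $1-\alpha_0/3+o(1)$, and we bound the bias deterministically by $C\log(C/\alpha_0)/\sqrt n$. Absorbing the quantile into $C\log(C/\alpha_0)$ yields $\mathcal{E}_7\supset\mathcal{E}_6\cap\mathcal{E}_1\cap\{\text{these events}\}$. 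Taking $\lim_M$ and then $\liminf_n$ gives (\ref{eq: Prob E6 E7}). The same bounded-Gram argument also delivers the uniform standard error bound (\ref{eq: uniform bound SE}) needed for $\mathcal{E}_5$.

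\textbf{Main obstacle.} The delicate step is the conditioning structure. The sets $\hat{\mathcal{T}}^{\rm GenMaj}_m$ and $\tilde{\mathcal{V}}^{(\ell,m)}$ are data- and draw-dependent, so we cannot apply a CLT to "the selected estimator." We avoid this by proving the deviation bound simultaneously over all fixed $\mathcal{A}\subset\mathcal{S}^*$, and then using $\mathcal{E}_6$ only to restrict which $\mathcal{A}$ can occur. A secondary point is that the local bias constant must be uniform in $n$, even though $\pi^*$ may depend on $n$. This holds because the bound in Step 1 is stated in absolute constants.
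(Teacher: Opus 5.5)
There is a genuine gap, and it sits in the probability bookkeeping, which is the entire content of this lemma. Your Step 2 is correct and matches what the paper's final step does implicitly: the bias-plus-noise decomposition, with uniformity in $M$ and in the data-dependent selections obtained from a union bound over the finitely many candidate sets $\mathcal{A}\subset\mathcal{S}^*$.

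The problem is Step 1. It delivers $|\pi_k^*|\lesssim\log(C/\alpha_0)/\sqrt n$ for $k\in\tilde{\mathcal{V}}^{(\ell,m)}$ only on $\mathcal{E}_1\cap\mathcal{E}_4\cap\mathcal{E}_5$, since you need $\mathcal{E}_4$ to pass from $\hat\pi^{(\ell)}$ to $\pi^{(\ell)}$. Yet your final containment is $\mathcal{E}_7\supset\mathcal{E}_6\cap\mathcal{E}_1\cap\{\text{Step 2 events}\}$, with $\mathcal{E}_4$ and $\mathcal{E}_5$ dropped.

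Dropping $\mathcal{E}_5$ is harmless, since its probability tends to one. Dropping $\mathcal{E}_4$ is not. It is a CLT event whose failure probability does not vanish in general; only $\Pr(\mathcal{E}_4)\ge 1-\alpha_0/3$ is available from (\ref{eq: prob E4}). Moreover, $\mathcal{E}_6$ by itself constrains only the population vector $\pi^*_{\hat{\mathcal{H}}_\ell}$, not the estimation error of $\hat\pi^{(\ell)}$. Without some concentration event for $\hat\pi^{(\ell)}$, a strongly invalid IV can enter $\tilde{\mathcal{V}}^{(\ell,m)}$, and then the bias term in Step 2 is uncontrolled.

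Once $\mathcal{E}_4$ is restored, what your argument actually proves is $\Pr(\mathcal{E}_7)\ge\Pr(\mathcal{E}_1\cap\mathcal{E}_4\cap\mathcal{E}_5)-\alpha_0/3+o(1)$. Because $\mathcal{E}_1\cap\mathcal{E}_4\cap\mathcal{E}_5\subset\mathcal{E}_6$, this is weaker than (\ref{eq: Prob E6 E7}). It still gives the $1-2\alpha_0/3$ bound used in the proof of Theorem \ref{thm: length}, but it is not the stated lemma.

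The paper avoids $\mathcal{E}_4$ and the pseudo-true $\pi^{(\ell)}$ altogether:
\begin{itemize}
\item On $\mathcal{E}_6$ it applies the same bias-plus-noise decomposition to the just-identified estimators, giving $\max_\ell\|\hat\beta^{(\hat{\mathcal{H}}_\ell)}-\beta^*\|_2\le C\log(C/\alpha_0)/\sqrt n$.
\item It combines this with an OLS concentration event for $(\hat\Gamma,\hat\Upsilon)$ to get $\|\hat\pi^{(\ell)}-\pi^*\|_\infty\le C\log(C/\alpha_0)/\sqrt n$.
\item For $k\in\tilde{\mathcal{V}}^{(\ell,m)}$ it bounds $|\pi_k^*|\le|\pi_k^*-\hat\pi_k^{(\ell)}|+|\hat\pi_k^{(\ell)}-\tilde\pi_k^{(\ell,m)}|+|\tilde\pi_k^{(\ell,m)}|$. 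This uses only $m\in\mathcal{M}$, the threshold in (\ref{eq: sampling V hat}), and $\hat{\rm SE}=O_p(n^{-1/2})$.
\item The budget $\alpha_0/3$ is split into three pieces of $\alpha_0/9$: just-identified TSLS, OLS, and final TSLS.
\end{itemize}

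Alternatively, you can keep your structure with two changes. Replace $\mathcal{E}_4$ by a fresh concentration event for $\hat\pi^{(\ell)}$ around $\pi^{(\ell)}$ with budget, say, $\alpha_0/12$. Shrink your union bound to $\alpha_0/4$. The larger quantiles are absorbed into $C\log(C/\alpha_0)$, since the constant in $\mathcal{E}_7$ need not equal $C_1$ or $C_2$.

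If you take that route, also justify identifying the abstract centering $\pi^{(\ell)}$ of Assumption \ref{assume: finite multiple plurality} with $\Gamma^*-\Upsilon^*\beta^{(\ell)}$. By the delta method, $\sqrt n\bigl(\hat\pi^{(\ell)}-(\Gamma^*-\Upsilon^*\beta^{(\ell)})\bigr)$ is also asymptotically mean-zero normal. Hence the two centerings differ by $o(n^{-1/2})$, which is all you need.
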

Note that the SCI in (\ref{eq: def SCI CI}) only adopts the selected valid IV set $\tilde{\mathcal{V}}^{(\ell,m)}$ indexed by $m\in\mathcal{M}$ and  $\ell\in\hat{\mathcal{T}}^{\rm GenMaj}_m$, which satisfies $|\tilde{\mathcal{V}}^{(\ell,m)}|>\frac{|\hat{\mathcal{S}}|+p_d-1}{2}$ according to (\ref{eq: T t}). This implies (\ref{eq: pi k * bound}) under $\mathcal{E}_1$, $\mathcal{E}_4$, and $\mathcal{E}_5$. By Lemma \ref{lem: local to true}, all TSLS estimators using the selected valid IV sets $\tilde{\mathcal{V}}^{(\ell,m)}$ for SCI in (\ref{eq: def SCI CI}) satisfy (\ref{eq: def E7}) with probability at least $1-2\alpha_0/3$. 
By definition of the TSLS confidence intervals in (\ref{eq: def generic CI}), the lower bounds and upper bounds of the TSLS confidence intervals are \[{\rm LB}^{(\ell,m)} = \hat\beta_1^{(\tilde{V}^{(\ell,m)})}-z_{1-(\alpha-\alpha_0)/2}\hat{\rm SE}(\hat\beta_1^{(\tilde{V}^{(\ell,m)})}),\text{ and }{\rm UB}^{(\ell,m)} = \hat\beta_1^{(\tilde{V}^{(\ell,m)})}+z_{1-(\alpha-\alpha_0)/2}\hat{\rm SE}(\hat\beta_1^{(\tilde{V}^{(\ell,m)})}).\]
We next argue that the event 
\begin{equation}\label{eq: uniform bound SE}
    \mathcal{E}_8=\{\text{$\hat{\rm SE}(\hat\beta_1^{(\tilde{V}^{(\ell,m)})})\leq C n^{-1/2}$ uniformly for any $m\in\mathcal{M}$ and $\ell\in\hat{\mathcal{T}}^{\rm GenMaj}_m$}\}
\end{equation}
holds with probability approaching one. 
First, note that by construction $\tilde\pi^{(\ell,m)}_k=0$ for all $k \in \hat{\mathcal{H}}_\ell$, and thus $\hat{\mathcal{H}}_\ell\subset \tilde{\mathcal{V}}^{(\ell,m)}$. That means $\tilde{\mathcal{V}}^{(\ell,m)}$ always includes a just-identifying IV subset with high probability, and thus
$\hat{\rm SE}(\hat\beta_1^{(\tilde{V}^{(\ell,m)})}) \leq C^{(\tilde{V}^{(\ell,m)})} n^{-1/2}$ for each fixed $(\ell,m)$ with probability approaching one by standard law of large numbers, where $C^{(\tilde{V}^{(\ell,m)})}$ is a positive constant relevant to $\tilde{V}^{(\ell,m)}$. Second, the set $\tilde{V}^{(\ell,m)}$ is a subset of $[p_z]$, meaning the number of its possible configurations is finite. Therefore, $C^{(\tilde{V}^{(\ell,m)})}$ can be replaced by an absolute constant $C$, and the bound  in (\ref{eq: uniform bound SE}) holds uniformly. Therefore, $\Pr\{\mathcal{E}_8\}\to 1$.

Then under $\mathcal{E}_7$ and $\mathcal{E}_8$, we have   
\[\begin{aligned}
\max_{m\in\mathcal{M},\ell\in\hat{\mathcal{T}}^{\rm GenMaj}_m}  |{\rm LB}^{(\ell,m)} - \beta_1^*| &\leq \max_{m\in\mathcal{M},\ell\in\hat{\mathcal{T}}^{\rm GenMaj}_m} \left[ \|\hat\beta^{(\tilde{V}^{(\ell,m)})} -\beta^*\|_\infty+z_{1-(\alpha-\alpha_0)/2}\hat{\rm SE}(\hat\beta_1^{(\tilde{V}^{(\ell,m)})})\right]\\
    &\leq \dfrac{C_3\log(C_4/\alpha_0)}{2\sqrt{n}}
\end{aligned}\]
for some absolute constants $C_3>0$ and $C_4>0$. Similarly, 
\[\max_{m\in\mathcal{M},\ell\in\hat{\mathcal{T}}^{\rm GenMaj}_m}  |{\rm UB}^{(\ell,m)} - \beta_1^*|\leq \dfrac{C_3\log(C_4/\alpha_0)}{2\sqrt{n}}.\]
Therefore, under  $\mathcal{E}_7$ and $\mathcal{E}_8$,
\begin{equation}\label{eq: proof length bound}
    \begin{aligned}
{\rm Len}(\alpha) &\leq \max_{m\in\mathcal{M},\ell\in\hat{\mathcal{T}}^{\rm GenMaj}_m}{\rm UB}^{(\ell,m)} - \min_{m\in\mathcal{M},\ell\in\hat{\mathcal{T}}^{\rm GenMaj}_m}{\rm LB}^{(\ell,m)} \\
&\leq \max_{m\in\mathcal{M},\ell\in\hat{\mathcal{T}}^{\rm GenMaj}_m}|{\rm UB}^{(\ell,m)}-\beta_1^*| + \max_{m\in\mathcal{M},\ell\in\hat{\mathcal{T}}^{\rm GenMaj}_m}|{\rm LB}^{(\ell,m)}-\beta_1^*| \\
&\leq \dfrac{C_3\log(C_4/\alpha_0)}{\sqrt{n}}.
\end{aligned}
\end{equation} 
Because  $\Pr\{\mathcal{E}_7\}\geq 1-2\alpha_0/3+o(1)$ by (\ref{eq: prob E6}) and (\ref{eq: Prob E6 E7}), and $\Pr\{\mathcal{E}_8\}\to 1$,  we have $\Pr\{\mathcal{E}_7\cap\mathcal{E}_8\}\geq 1- \alpha_0+o(1)$. Therefore, (\ref{eq: proof length bound}) holds with probability no smaller than $1-\alpha_0+o(1)$, which verifies (\ref{eq: len}) and concludes the proof of Theorem \ref{thm: length}. 
\end{proof}

\begin{proof}[Proof of Lemma \ref{lem: local to true}]We first show that 
\begin{equation}\label{eq: local to true just}
    \Pr\left\{\left(\max_{m\in\mathcal{M},\ell\in\hat{\mathcal{T}}^{\rm GenMaj}_m}\|\hat\beta^{(\hat{\mathcal{H}}_\ell)} - \beta^*\|_2 \leq \frac{C\log(C/\alpha_0)
    }{\sqrt{n}}\right)\cap\mathcal{E}_6\right\} \geq  \Pr(\mathcal{E}_6) - \frac{\alpha_0}{9} + o(1).
\end{equation}
Define $\theta^{*(\mathcal{H})}=(\beta^{*\top},\pi^{*\top}_{[p_z]\backslash\mathcal{H}},\varphi^{*\top})^\top$ for a generic subset $\mathcal{H}$. By definition in (\ref{eq: def generic TSLS}), we have  
\begin{equation*} 
\begin{aligned}\hat\theta^{(\hat{\mathcal{H}}_\ell)} &= (D^{(\hat{\mathcal{H}}_\ell)\top} \RP_W D^{(\hat{\mathcal{H}}_\ell)})^{-1}D^{(\hat{\mathcal{H}}_\ell)\top} \RP_W (D^{(\hat{\mathcal{H}}_\ell)}\theta^{*(\hat{\mathcal{H}}_\ell)}+Z_{\cdot\hat{\mathcal{H}}_\ell}\pi^*_{\hat{\mathcal{H}}_\ell}+u) \\
&= \theta^{*(\hat{\mathcal{H}}_\ell)} + (D^{(\hat{\mathcal{H}}_\ell)\top} \RP_W D^{(\hat{\mathcal{H}}_\ell)})^{-1}D^{(\hat{\mathcal{H}}_\ell)\top} \RP_W  Z_{\cdot\hat{\mathcal{H}}_\ell}\pi^*_{\hat{\mathcal{H}}_\ell} + (D^{(\hat{\mathcal{H}}_\ell)\top} \RP_W D^{(\hat{\mathcal{H}}_\ell)})^{-1}D^{(\hat{\mathcal{H}}_\ell)\top} \RP_W u.
\end{aligned}
\end{equation*} 
Under $\mathcal{E}_6$,  (\ref{eq: pi k * bound}) is valid, and thus the $L_2$-norm of the second term is bounded by $\frac{C\log(C/\alpha_0)}{2\sqrt{n}}$. By central limit theorems for i.i.d.~ variables with finite fourth moments, we can show that the third term is  asymptotically normal under Assumptions \ref{assu: distribution} and \ref{assu: finite sample rank}. Therefore, it is bounded by $\frac{C\log(C/\alpha_0)}{2\sqrt{n}}$ with probability at least $1-\alpha_0/9+o(1)$ when $C$ is sufficiently large. As $p_z$ is fixed, there are finitely many possible just-identifying subsets and thus the convergence holds uniformly for all $\ell$.  Therefore,
\begin{equation*}
\|\hat\beta^{(\hat{\mathcal{H}}_\ell)}-\beta^*\|_2 \leq \|\hat\theta^{(\hat{\mathcal{H}}_\ell)}-\theta^{*(\hat{\mathcal{H}}_\ell)}\|_2 \leq \frac{C\log(C/\alpha_0)}{\sqrt{n}}
\end{equation*}
uniformly for all $\ell$ with probability at least $ \Pr(\mathcal{E}_6)-\alpha_0/9+o(1) $, which verifies (\ref{eq: local to true just}). 
 
 Also, note that 
 the OLS estimators $\hat\Gamma $ and $\hat\Upsilon$ are also asymptotically normal centered at the corresponding true coefficients. We thus have  
 \begin{equation}\label{eq: OLS prob bound}
    \Pr\left\{ \|\hat\Gamma -\Gamma^*\|_\infty + \|\hat\Upsilon - \Upsilon^*\|_2 \leq \dfrac{C\log (C/\alpha_0)}{\sqrt{n}} \right\} \geq 1 - \frac{\alpha_0}{9} + o(1).
 \end{equation}
 By (\ref{eq: local to true just}) and (\ref{eq: OLS prob bound}), we have with probability no smaller than $\Pr(\mathcal{E}_6)-2\alpha_0/9+o(1) $, 
 \begin{equation}\label{eq: hat pi ell pi *}
     \begin{aligned}
     \|\hat\pi^{(\ell)}-\pi^*\|_\infty &= \|(\hat\Gamma-\hat\Upsilon\hat\beta^{(\hat{\mathcal{H}}_\ell)})-(\Gamma^*-\Upsilon^*\beta^*)\|_\infty\\
     &\leq  \|\hat\Gamma-\Gamma^*\|_\infty + \|\hat\Upsilon - \Upsilon^*\|_2\|\beta^*\|_2+ \|\Upsilon^*\|_2\|\hat\beta^{(\hat{\mathcal{H}}_\ell)} - \beta^*\|_2 + \|\hat\Upsilon - \Upsilon^*\|_2\|\hat\beta^{(\hat{\mathcal{H}}_\ell)} - \beta^*\|_2\\
     &\leq \dfrac{C\log (C/\alpha_0)}{\sqrt{n}}. 
 \end{aligned}
 \end{equation} 
 Therefore, with probability no smaller than $\Pr(\mathcal{E}_6)-2\alpha_0/9+o(1) $, for any $k\in\tilde{\mathcal{V}}^{(\ell,m)}$ 
 \[\begin{aligned}
     |\pi^*_k| \leq |\pi_k^*-\hat\pi_k^{(\ell)}| + |\hat\pi_k^{(\ell)}-\tilde\pi_k^{(\ell,m)}| + |\tilde\pi_k^{(\ell,m)}|\leq \dfrac{C\log (C/\alpha_0)}{\sqrt{n}},
 \end{aligned}\]
 where the first term is bounded by (\ref{eq: hat pi ell pi *}), the second term is bounded by (\ref{eq: tilde pi hat pi}), and the third term is bounded using the definition of $\tilde{\mathcal{V}}^{(\ell,m)}$. Therefore, define the event 
 \begin{equation}\label{eq: E9}
     \mathcal{E}_9 = \bigcap_{m\in\mathcal{M},\ell\in\hat{\mathcal{T}}^{\rm GenMaj}_m}\left\{|\pi^*_k|\leq \dfrac{C\log (C/\alpha_0)}{\sqrt{n}}\text{ for all }k\in\tilde{\mathcal{V}}^{(\ell,m)}\right\},
 \end{equation}
 and we have $\Pr(\mathcal{E}_9) \geq \Pr(\mathcal{E}_6)-2\alpha_0/9 +o(1)$. Following the idea of showing (\ref{eq: local to true just}), we can prove that $\|\hat{\beta}^{(\tilde{V}^{(\ell,m)})}-\beta^*\|_2 \leq C\log(C/\alpha_0)/\sqrt{n}$ with probability no smaller than 
 \[\Pr\{\mathcal{E}_9\} - \alpha_0/9+o(1) \geq \Pr\{\mathcal{E}_6\} -\alpha_0/3 + o(1).\]
 This verifies (\ref{eq: Prob E6 E7}) and concludes the proof of Lemma \ref{lem: local to true}.
\end{proof}

%% file: app/data_app.tex
\newcommand{\SE}{{\rm SE}}
\newcommand{\bC}{{\textbf{C}}}
\section{SCI for GWAS Summary Data}\label{app: data}
\subsection{Detailed Information of the SNPs IVs}

This section includes two tables of detailed information on the associations of the selected SNPs with LDL and HDL, respectively. 
\begin{table}[h]
\caption{ Detailed information on the associations of the selected SNPs with low-density lipoprotein (LDL) cholesterol.}
\centering 
\resizebox{\textwidth}{!}{%
\begin{tabular}{llllllcc}
\toprule
SNP & CHR & POS & BETA & SE & P-value & A1 & A2 \\
\midrule
rs12748152 & $1$  & $27138393$  & $0.050$  & $0.007$ & $3.209\times10^{-12}$  & T & C \\
rs17630235 & $12$ & $112591686$ & $-0.026$ & $0.004$ & $5.614\times10^{-11}$  & A & G \\
rs1800961  & $20$ & $43042364$  & $-0.069$ & $0.011$ & $6.034\times10^{-10}$  & T & C \\
rs1883025  & $9$  & $107664301$ & $-0.030$ & $0.004$ & $6.140\times10^{-11}$  & T & C \\
rs2075650  & $19$ & $45395619$  & $0.177$  & $0.006$ & $1.000\times10^{-200}$ & G & A \\
rs2642438  & $1$  & $220970028$ & $0.035$  & $0.004$ & $7.318\times10^{-16}$  & G & A \\
rs3800406  & $6$  & $35133074$  & $-0.032$ & $0.006$ & $9.392\times10^{-8}$   & G & A \\
rs7117842  & $11$ & $122534504$ & $0.019$  & $0.004$ & $7.556\times10^{-7}$   & C & T \\
rs964184   & $11$ & $116648917$ & $-0.086$ & $0.008$ & $2.008\times10^{-26}$  & C & G \\
\bottomrule
\end{tabular}%
}
\begin{flushleft}
\footnotesize
\textit{Note:} SNP, single-nucleotide polymorphism; CHR, chromosome; POS, genomic position in GRCh37; BETA, estimated change in LDL cholesterol per copy of the effect allele (A1); SE, standard error of BETA; P-value, association P-value; A1, effect allele; A2, other allele.
\end{flushleft}
\end{table}

\begin{table}[h]
\caption{Detailed information on the associations of the selected SNPs with high-density lipoprotein (HDL) cholesterol. }
\centering 
\resizebox{\textwidth}{!}{%
\begin{tabular}{llllllcc}
\toprule
SNP & CHR & POS & BETA & SE & P-value & A1 & A2 \\
\midrule
rs12748152 & $1$  & $27138393$  & $-0.051$ & $0.006$ & $9.736\times10^{-16}$ & T & C \\
rs17630235 & $12$ & $112591686$ & $-0.021$ & $0.004$ & $6.079\times10^{-9}$  & A & G \\
rs1800961  & $20$ & $43042364$  & $-0.127$ & $0.010$ & $1.639\times10^{-34}$ & T & C \\
rs1883025  & $9$  & $107664301$ & $-0.070$ & $0.004$ & $1.496\times10^{-65}$ & T & C \\
rs2075650  & $19$ & $45395619$  & $-0.055$ & $0.005$ & $9.716\times10^{-26}$ & G & A \\
rs2642438  & $1$  & $220970028$ & $0.030$  & $0.004$ & $7.780\times10^{-14}$ & G & A \\
rs3800406  & $6$  & $35133074$  & $-0.035$ & $0.006$ & $2.515\times10^{-9}$  & G & A \\
rs7117842  & $11$ & $122534504$ & $0.027$  & $0.004$ & $1.057\times10^{-14}$ & C & T \\
rs964184   & $11$ & $116648917$ & $0.107$  & $0.007$ & $6.086\times10^{-48}$ & C & G \\
\bottomrule
\end{tabular}%
}
\begin{flushleft}
\footnotesize
\textit{Note:} SNP, single-nucleotide polymorphism; CHR, chromosome; POS, genomic position in GRCh37; BETA, estimated change in HDL cholesterol per copy of the effect allele (A1); SE, standard error of BETA; P-value, association P-value; A1, effect allele; A2, other allele.
\end{flushleft}
\end{table}

\subsection{Procedures}
This section illustrates the procedures of our SCI method for GWAS summary data. Recall that the summary data includes the estimators of $\hat\Gamma$ (association between SNPs and the outcome), $\hat \Upsilon_{\cdot 1}$ (association between SNPs and LDL),  $\hat \Upsilon_{\cdot 2}$ (association between SNPs and HDL), and their corresponding standard errors.  
Throughout this section, we assume that the SNPs are independent of each other, and the estimators $\hat\Gamma$, and $\hat \Upsilon_{\cdot 1}$, and $\hat \Upsilon_{\cdot 2}$ are independent of each other. Similar simplifications are adopted in the two-sample MR literature \citep{pierce2013efficient,bowden2015mendelian,yao2024deciphering}.
Even if the data of LDL and HDL are collected from the same cohort, we need the independence assumption as a pragmatic approximation due to the lack of available estimates of the covariance between HDL and LDL.

We need to reformulate the following estimators and test statistics:
\begin{enumerate}[(1)]
    \item The CD statistic in (\ref{eq: CD test}) to select just-identifying IV subsets. 
    \item The TSLS estimator $\hat\beta^{(\mathcal{H})}$ as in (\ref{eq: def generic TSLS}) using the IVs in $\mathcal{H}$, and its standard error $\SE(\hat\beta^{(\mathcal{H})})$.
    \item The estimator $\hat\pi^{(\ell)}_k$ as in (\ref{eq: hat pi H}) and its standard error $\SE(\hat\pi_k^{(\ell)})$.
\end{enumerate}
With these estimators formulated, the procedure of our SCI method for GWAS summary data follows Section \ref{sec: uniform inference} for observational data. 

{\bf (1) The CD statistic. } Under the assumptions that the SNPs are independent of each other and there are no covariates, we simplify the CD statistic in (\ref{eq: CD test}) as
\begin{equation}\label{eq: CD test simplified}
    \mathrm{CD}(\mathcal{H})
    = \lambda_{\min}(\tilde\Upsilon_{\mathcal{H}\cdot}^{\top}\tilde\Upsilon_{\mathcal{H}\cdot}),
\end{equation} 
where $\tilde\Upsilon = \hat\Upsilon / \max_{{k\in[p_z],j\in[p_d]}}\hat{\rm SE}(\hat\Upsilon_{k,j}) $ denotes the estimator of IV relevance standardized by the maximum standard error.

{\bf (2) The TSLS estimator $\hat\beta^{(\mathcal{H})}$ and its standard error.}
When all IVs in $\mathcal{H}$ are valid, by $\Gamma^*_{\mathcal{H}} = \Upsilon^*_{\mathcal{H}\cdot}\beta^* + \pi^*_{\mathcal{H}}$ and $\pi^*_{\mathcal{H}} = 0$, we have
\begin{equation}\label{eq: beta N definition}
    \beta^*
    =
    \left(
        \Upsilon_{\mathcal{H}\cdot}^{*\top}
        \Upsilon^*_{\mathcal{H}\cdot}
    \right)^{-1}
    \Upsilon_{\mathcal{H}\cdot}^{*\top}
    \Gamma^*_{\mathcal{H}}.
\end{equation}
We therefore use the estimator 
\begin{equation}\label{eq: beta N estimator}
    \widehat{\beta}^{(\mathcal{H})}
    =
    \left(
        \widehat \Upsilon_{\mathcal{H}\cdot}^{\top}
        \widehat \Upsilon_{\mathcal{H}\cdot}
    \right)^{-1}
    \widehat \Upsilon_{\mathcal{H}\cdot}^{\top}
    \widehat\Gamma_{\mathcal{H}}.
\end{equation}

We impose the following simplifying covariance structure. Suppose that
\begin{equation}\label{eq: Gamma asymptotic distribution}
     \widehat\Gamma-\Gamma^* 
    \overset{a}{\sim}
    N(0,\hat\Omega_{\Gamma}),
\end{equation}
where $\hat\Omega_{\Gamma} = {\rm diag}\left[\{(\hat{\rm SE}(\widehat\Gamma_k))^2\}_{k\in[p_z]}\right]$, and $\overset{a}{\sim}$ denotes asymptotic distributional equivalence. In addition,  for each $k\in[p_z] $ and $j\in[p_d]$ ($p_d$=2 in our application), we assume that
\begin{equation}\label{eq: A element asymptotic distribution}
   \left(\widehat \Upsilon_{k,j}-\Upsilon^*_{k,j}\right)
    \overset{a}{\sim}
    N(0,\hat\omega_{k,j}^2),
\end{equation}
where $\hat\omega_{k,j} = \hat{\rm SE}(\widehat \Upsilon_{k,j})$. Recall that we assume the elements of $\widehat \Upsilon$ are asymptotically mutually
independent and that $\widehat \Upsilon$ and $\widehat\Gamma$ are asymptotically
independent. 

Define
\begin{equation*}\label{eq: Q B residual N}
    \hat Q_{\mathcal{H}}
    =
    \hat\Upsilon_{\mathcal{H}\cdot}^{\top}\hat\Upsilon_{\mathcal{H}\cdot},
    \qquad
    \hat B_{\mathcal{H}}
    =
    \hat Q_{\mathcal{H}}^{-1}\hat\Upsilon_{\mathcal{H}\cdot}^{\top},
    \qquad
    \hat r_{\mathcal{H}}
    =
    \hat\Gamma_{\mathcal{H}}
    -
    \hat\Upsilon_{\mathcal{H}\cdot}\hat\beta^{(\mathcal{H})}.
\end{equation*} 
In the following we deduce the asymptotic variance of $\widehat\beta^{(\mathcal{H})}$  using the delta method.

\emph{Asymptotic variance of $\widehat\beta^{(\mathcal{H})}$.}
The first-order expansion of \eqref{eq: beta N estimator} is
\begin{align}
    d\hat\beta^{(\mathcal{H})}
    ={}&
    \hat B_{\mathcal{H}}\,d\hat\Gamma_{\mathcal{H}}
    +
    \hat Q_{\mathcal{H}}^{-1}
    (d\hat\Upsilon_{\mathcal{H}\cdot})^{\top}\hat r_{\mathcal{H}}
    -
    \hat B_{\mathcal{H}}
    (d\hat\Upsilon_{\mathcal{H}\cdot})\hat \beta^{(\mathcal{H})}.
    \label{eq: beta N differential}
\end{align}
For $k\in\mathcal{H}$, let $b(k)$ denote the position of $k$ within
$\mathcal{H}$. The derivative of $\beta^{(\mathcal{H})}$ with respect to
$\hat\Upsilon_{k,j}$ is
\begin{equation}\label{eq: beta gradient A}
    \hat g_{k,j}
    :=
    \frac{\partial\hat\beta^{(\mathcal{H})}}{\partial \hat\Upsilon_{k,j}}
    =
    \hat r_{\mathcal{H},b(k)}
    \left(\hat Q_{\mathcal{H}}^{-1}\right)_{\cdot j}
    -
    \hat\beta^{(\mathcal{H})}_{j}
    \left(\hat B_{\mathcal{H}}\right)_{\cdot b(k)},
    \qquad k\in\mathcal{H}.
\end{equation}
It follows from the delta method that
\begin{equation*}\label{eq: beta N asymptotic distribution}
    \hat{\rm SE}(\hat\beta_1^{(\mathcal{H})}) = \sqrt{(\hat V_{\mathcal{H}})_{1,1}},
\end{equation*}
where
\begin{equation*}\label{eq: beta N asymptotic variance}
    \hat V_{\mathcal{H}}
    =
    \hat B_{\mathcal{H}}
    \hat \Omega_{\Gamma,\mathcal{H}\mathcal{H}}
    \hat B_{\mathcal{H}}^{\top}
    +
    \sum_{k\in\mathcal{H}}
    \sum_{j=1}^{p_d}
    \hat\omega_{k,j}^{2}\hat g_{k,j}\hat g_{k,j}^{\top}.
\end{equation*}

{\bf (3) The estimator $\hat\pi^{(\ell)}$
and its standard error.} 
With $\hat\Gamma$, $\hat\Upsilon$, and the estimator $\hat\beta^{(\mathcal{H})}$, the estimator $\hat\pi^{(\ell)}$ follows (\ref{eq: hat pi H}), given as $\hat\pi^{(\ell)} = \hat\Gamma - \hat\Upsilon\hat\beta^{(\hat{\mathcal{H}}_\ell)}$. We then derive the asymptotic variance of $\hat\pi^{(\ell)}$ using the delta method.
\emph{Asymptotic variance of $\widehat\pi^{(\ell)}$.} Recall that in the sampled estimator (\ref{eq: sampling pi}) we only need the standard error of $\hat\pi^{(\ell)}_k$ for $k\in{\hat{\mathcal{H}}_\ell^c}$. We therefore focus on the asymptotic variance of $\hat\pi^{(\ell)}_{{\hat{\mathcal{H}}_\ell^c}}$. Note that 
\begin{equation}\label{eq: pi S estimator}
    \hat\pi^{(\ell)}_{{\hat{\mathcal{H}}_\ell^c}} = \hat\Gamma_{{\hat{\mathcal{H}}_\ell^c}} - \hat\Upsilon_{{\hat{\mathcal{H}}_\ell^c}\cdot}\hat\beta^{(\hat{\mathcal{H}}_\ell)}.
\end{equation}
The first-order expansion of \eqref{eq: pi S estimator} satisfies
\begin{equation}\label{eq: pi S differential initial}
    d\hat\pi^{(\ell)}_{{\hat{\mathcal{H}}_\ell^c}}
    =
    d\hat \Gamma_{{\hat{\mathcal{H}}_\ell^c}}
    -
    (d\hat \Upsilon_{{\hat{\mathcal{H}}_\ell^c}\cdot})\hat \beta^{(\hat{\mathcal{H}}_\ell)}
    -
    \hat \Upsilon_{{\hat{\mathcal{H}}_\ell^c}\cdot}\,d\hat \beta^{(\hat{\mathcal{H}}_\ell)}.
\end{equation}
Substituting \eqref{eq: beta N differential} gives
\begin{align}
d\hat\pi^{(\ell)}_{{\hat{\mathcal{H}}_\ell^c}}
    ={}&
    d\hat \Gamma_{{\hat{\mathcal{H}}_\ell^c}}
    -
    \hat \Upsilon_{{\hat{\mathcal{H}}_\ell^c}\cdot}
    \hat B_{\hat{\mathcal{H}}_\ell}\,d\hat \Gamma_{\hat{\mathcal{H}}_\ell}
    -
    (d\hat \Upsilon_{{\hat{\mathcal{H}}_\ell^c}\cdot})\hat \beta^{(\hat{\mathcal{H}}_\ell)}
    \nonumber\\
    &-
    \hat \Upsilon_{{\hat{\mathcal{H}}_\ell^c}\cdot}
    \hat Q_{\hat{\mathcal{H}}_\ell}^{-1}
    (d\hat \Upsilon_{\hat{\mathcal{H}}_\ell\cdot})^{\top}\hat r_{\hat{\mathcal{H}}_\ell}
    +
    \hat \Upsilon_{{\hat{\mathcal{H}}_\ell^c}\cdot}
    \hat B_{\hat{\mathcal{H}}_\ell}
    (d\hat \Upsilon_{\hat{\mathcal{H}}_\ell\cdot})\hat \beta^{(\hat{\mathcal{H}}_\ell)}.
    \label{eq: pi S differential}
\end{align}

Define
\begin{equation}\label{eq: pi Gamma Jacobian}
    \hat J_{\Gamma,{\hat{\mathcal{H}}_\ell^c}}
    =
    \begin{bmatrix}
        {\rm I}_{|{\hat{\mathcal{H}}_\ell^c}|}
        &
        -\hat \Upsilon_{{\hat{\mathcal{H}}_\ell^c}\cdot}\hat B_{\hat{\mathcal{H}}_\ell}
    \end{bmatrix}.
\end{equation}
The contribution of $\widehat\Gamma$ to the asymptotic variance of
$\hat\pi^{(\ell)}_{{\hat{\mathcal{H}}_\ell^c}}$ is therefore
\begin{equation}\label{eq: pi variance Gamma}
    \hat V_{\pi,\Gamma}
    =
    \hat J_{\Gamma,{\hat{\mathcal{H}}_\ell^c}}
   \hat \Omega_{\Gamma,({\hat{\mathcal{H}}_\ell^c},\hat{\mathcal{H}}_\ell)}
    \hat J_{\Gamma,{\hat{\mathcal{H}}_\ell^c}}^{\top},
\end{equation}
where $\hat \Omega_{\Gamma,({\hat{\mathcal{H}}_\ell^c},\hat{\mathcal{H}}_\ell)}$ reorders $\hat \Omega_{\Gamma}$ defined below (\ref{eq: Gamma asymptotic distribution}).

For $k\in{\hat{\mathcal{H}}_\ell^c}$, let $a(k)$ denote the position of $k$ within
${\hat{\mathcal{H}}_\ell^c}$, and let $\iota_{a(k)}$ be the corresponding unit vector in
$\mathbb{R}^{|{\hat{\mathcal{H}}_\ell^c}|}$. The derivative of $\hat\pi^{(\ell)}_{{\hat{\mathcal{H}}_\ell^c}}$
with respect to $\hat \Upsilon_{k,j}$ is
\begin{equation}\label{eq: pi gradient AS}
    \hat h_{k,j}^{{\hat{\mathcal{H}}_\ell^c}}
    :=
    \frac{\partial\hat\pi^{(\ell)}_{{\hat{\mathcal{H}}_\ell^c}}}{\partial \hat \Upsilon_{k,j}}
    =
    -\hat \beta^{(\hat{\mathcal{H}}_\ell)}_{j}\iota_{a(k)},
    \qquad k\in{\hat{\mathcal{H}}_\ell^c}.
\end{equation}
For $k\in\hat{\mathcal{H}}_\ell$, the corresponding derivative is
\begin{equation}\label{eq: pi gradient AN}
    \hat h_{k,j}^{\hat{\mathcal{H}}_\ell}
    :=
    \frac{\partial\hat\pi^{(\ell)}_{{\hat{\mathcal{H}}_\ell^c}}}{\partial \hat \Upsilon_{k,j}}
    =
    -\hat \Upsilon_{{\hat{\mathcal{H}}_\ell^c}\cdot}\hat g_{k,j},
    \qquad k\in\hat{\mathcal{H}}_\ell,
\end{equation}
where $\hat g_{k,j}$ is defined in \eqref{eq: beta gradient A}.

We assume there exists a vector $\pi^{(\ell)}$ such that 
\begin{equation}\label{eq: pi S asymptotic distribution} 
        \hat\pi^{(\ell)}_{{\hat{\mathcal{H}}_\ell^c}}
        -
        \pi^{(\ell)}_{{\hat{\mathcal{H}}_\ell^c}}
    \overset{a}{\sim}
    N\left(0,\hat V_{\pi,{\hat{\mathcal{H}}_\ell^c}}\right),
\end{equation}
where
\begin{align}
    \hat V_{\pi,{\hat{\mathcal{H}}_\ell^c}}
    ={}&
    \hat J_{\Gamma,{\hat{\mathcal{H}}_\ell^c}}
    \hat \Omega_{\Gamma,({\hat{\mathcal{H}}_\ell^c},\hat{\mathcal{H}}_\ell)}
    \hat J_{\Gamma,{\hat{\mathcal{H}}_\ell^c}}^{\top}
    \nonumber\\
    &+
    \sum_{k\in{\hat{\mathcal{H}}_\ell^c}}
    \sum_{j=1}^{p_d}
    \hat \omega_{k,j}^{2}
    \hat h_{k,j}^{{\hat{\mathcal{H}}_\ell^c}}
    \left(\hat h_{k,j}^{{\hat{\mathcal{H}}_\ell^c}}\right)^{\top}
    \nonumber\\
    &+
    \sum_{k\in\hat{\mathcal{H}}_\ell}
    \sum_{j=1}^{p_d}
    \hat \omega_{k,j}^{2}
    \hat h_{k,j}^{\hat{\mathcal{H}}_\ell}
    \left(\hat h_{k,j}^{\hat{\mathcal{H}}_\ell}\right)^{\top}.
    \label{eq: pi S asymptotic variance}
\end{align}
Accordingly, the estimated standard error of
$\hat\pi^{(\ell)}_k$, for $k\in{\hat{\mathcal{H}}_\ell^c}$, is
\begin{equation}\label{eq: pi component standard error}
    \widehat{\mathrm{SE}}(\hat\pi^{(\ell)}_k)
    =
    \sqrt{ 
            \left(\widehat V_{\pi,{\hat{\mathcal{H}}_\ell^c}}\right)_{a(k),a(k)}
     }.
\end{equation}